\DeclareMathOperator{\Tr}{Tr}
\newcommand{\cW}{\mathcal{W}}
\newcommand{\ve}{a}
\newcommand{\bx}{\mathbf{x}}
\newcommand{\by}{\mathbf{y}}
\newcommand{\bk}{\mathbf{k}}
\newcommand{\bv}{\mathbf{v}}
\newcommand{\balpha}{\boldsymbol{\alpha}}
\newcommand{\bmu}{\bm{\mu}}
\newcommand{\nablax}{\nabla_{\bx}}
\newcommand{\nablak}{\nabla_{\mathbf{k}}}
\newcommand*{\im}{\mathop{}\!\mathrm{i}}
\newcommand{\ii}{\mathrm{i}}
\newcommand{\ee}{\mathrm{e}}
\newcommand{\dd}{\mathrm{d}}
\newcommand{\Id}{\mathbbm{1}}
\newcommand{\ol}{\overline}
\newcommand{\R}{\mathbb{R}}
\newcommand{\C}{\mathbb{C}}
\newcommand*{\Err}{\mathcal{E}}
\numberwithin{equation}{section}
\theoremstyle{plain} 
\newtheorem{theorem}{Theorem}[section]
\newtheorem*{theorem*}{Theorem}
\newtheorem{lemma}[theorem]{Lemma}
\newtheorem*{lemma*}{Lemma}
\newtheorem{corollary}[theorem]{Corollary}
\newtheorem*{corollary*}{Corollary}
\newtheorem{proposition}[theorem]{Proposition}
\newtheorem*{proposition*}{Proposition}
\theoremstyle{remark}
\newtheorem{remark}[theorem]{Remark}
\newtheorem*{remark*}{Remark}
\newtheorem*{remarks*}{Remarks}
\theoremstyle{definition} 
\newtheorem{definition}[theorem]{Definition}
\newtheorem*{definition*}{Definition}
\newtheorem*{assumption*}{Assumption}
\newcommand{\MSC}[1]{\par\noindent\textbf{Mathematics Subject Classification (2020):} #1\par}
\title{The Semiclassical Limit of the 2D Dirac--Hartree Equation\\ with Periodic Potentials}
\author{
Jinyeop Lee
\thanks{
	Department of Mathematics and Computer Science, University of Basel, Basel, Switzerland.
	Email: \href{mailto:jinyeop.lee@unibas.ch}{jinyeop.lee@unibas.ch}}
\and 
Kunlun Qi
\thanks{
	Simons Laufer Mathematical Sciences Institute (former MSRI), Berkeley, CA 94720, USA.\\
	\indent\indent Department of Computational Mathematics, Science and Engineering and Department of Mathematics, Michigan State University, East Lansing, MI 48824, USA.  
	Email: \href{mailto:kunlunqi.math@gmail.com}{kunlunqi.math@gmail.com}}
}
\date{\today}
\begin{document}

\maketitle

\begin{abstract}
	We study the semiclassical limit of the two-dimensional Dirac--Hartree equation in the presence of a periodic external potential. The spinor dynamics are formulated using the matrix-valued Wigner transform together with spectral projectors onto the positive and negative energy bands. Under suitable assumptions on the initial data and the potentials, we rigorously derive Vlasov-type transport equations describing the evolution of the band-resolved phase-space densities in both the massive and massless regimes. In the massless case, the limiting dynamics propagate ballistically with constant speed, while in the massive case the velocity is relativistic. Our analysis justifies the emergence of relativistic Vlasov equations from Dirac--Hartree dynamics in the semiclassical regime.
	As a corollary, we recover the relativistic Vlasov--Poisson equation from the Dirac equation with a regularized Coulomb interaction when the regularization vanishes together with the semiclassical parameter.
	
	\smallskip
	\MSC{35Q40; 81Q05; 82C10} 
\end{abstract}

\section{Introduction}\label{sec:intro}

\subsection{The model}

The two-dimensional (2D) Dirac equation with the external and Hartree-type interactive potential reads:
\begin{equation}\label{eq:massive-dirac}
	\im \hbar \partial_t \psi(t, \bx) = \Big( -\im \hbar c\, \boldsymbol{\alpha} \cdot \nabla + m c^2\,\gamma^0 + V_\Gamma(\bx)\Id  + V_{\mathrm{int}}(t,\bx) \Id \Big) \psi(t, \bx)
\end{equation}
for the initial condition,
\begin{equation}\label{eq:initial-psi}
	\psi(t=0, \bx) = \psi_0(\bx) \in L^2(\mathbb{R}^2;\mathbb{C}^2),
\end{equation}
where $\psi(t,\bx): \mathbb{R}\times \mathbb{R}^2 \to \mathbb{C}^2$ is a two-component spinor, $\hbar$ is the Planck constant, $c$ is the speed of light, $m$ is the rest mass of particle, $\boldsymbol{\alpha}$ and $\gamma^0$ are specific matrices associated with the Pauli matrices\footnote{see Appendix \ref{Appendix:Basics} for more details}, $\Id$ is the $2\times 2$ identity matrix, $V_\Gamma$ is an external potential with periodicity, and $V_{\mathrm{int}}$ is the interactive potential defined via the convolution, i.e., the Hatree-type interaction:
\begin{equation}\label{eq:Vint}
	V_{\mathrm{int}}(t,\bx) := (K*\rho)(t,\bx)=\int_{\R^2} K(\bx - \by)\, \rho(t,\by)\,\dd\by,
	\qquad \text{with} \quad \rho(t,\bx) := \overline{\psi(t,\bx)} \gamma^0 \psi(t,\bx),
\end{equation}
for a certain class of symmetric kernel functions $K$.

Inspired by the scalings in \cite{BMP2001, ELY2011}, we introduce $\ell$ as the microscopic lattice length, $L$ and $T = L/c$ as the macroscopic length and time scale, then by considering the following scalings in \eqref{eq:massive-dirac},

\begin{equation*}
	\tilde{\bx} = \frac{\bx}{L}, \quad \tilde{t} = \frac{t}{T}, \quad \ve = \frac{\ell}{L}, \quad \tilde{\hbar} = \frac{c \hbar}{L}, \quad V_\Gamma(\bx) = \tilde{V}_\Gamma\left( \frac{\bx}{\ell} \right), \quad V_{\mathrm{int}}(t,\bx) =V_{\mathrm{int}}\left(\frac{t}{T},\frac{\bx}{ L}\right),
\end{equation*}
we obtain, after dropping the tildes, the scaled Dirac equation in the \textit{massive} case,
\begin{equation}\label{eq:massive-dirac-scaled}
	\im \hbar \partial_t \psi^{\hbar}(t, \bx) = \Big( -\im \hbar c\, \boldsymbol{\alpha} \cdot \nabla + m c^2\,\gamma^0 + V_\Gamma\left(\frac{\bx}{\ve}\right)\Id  + V_{\mathrm{int}}(t,\bx) \Id \Big) \psi^{\hbar}(t, \bx),
\end{equation}
where both $\hbar$ and $\ve$ are small and their ratio determines the different asymptotic regimes, and for simplicity we omit the dependence notation on $\ve$ for $\psi^{\hbar}$, since we focus more on the semiclassical limit in this paper. For the \textit{massless} case, the massive equation \eqref{eq:massive-dirac-scaled} reduces to
\begin{equation}\label{eq:massless-dirac-scaled}
	\im \hbar \partial_t \psi^{\hbar}(t, \bx) = \Big( -\im \hbar c\, \boldsymbol{\alpha} \cdot \nabla + V_\Gamma\Big(\frac{\bx}{\ve}\Big)\Id  + V_{\mathrm{int}}(t,\bx) \Id \Big) \psi^{\hbar}(t, \bx).
\end{equation}

The aim of this paper is to rigorously derive the effective transport dynamics of the two-dimensional Dirac equation in the semiclassical limit $\hbar \to 0$, in the presence of both a periodic external potential and a Hartree-type interactive potential, for both the massive and massless cases.

The Wigner transform serves as the natural tool for this analysis, providing a phase-space formulation of quantum mechanics analogous to classical mechanics. However, the Dirac equation presents a significant challenge that is absent in the scalar Schr\"odinger equation. While the Wigner transform of the Schr\"odinger equation yields a scalar function that converges to the solution of the classical Liouville equation \cite{AM1976}, the inherent matrix structure of the Dirac system produces a matrix-valued Wigner function $\cW^\hbar$ (see \eqref{def:Wigner}), resulting in a complex coupled system that fails to provide a clear classical picture \cite{Spohn2000}.
To overcome this complexity and recover classical description, we perform a spectral decomposition of the Dirac Hamiltonian and project onto its positive and negative energy subspaces. This leads to the decoupled phase-space densities $f_\pm =\Tr[\Pi_\pm \cW^\hbar]$ as in \eqref{def:fpm}, which represents the semiclassical distribution of electrons in the conduction and valence bands, respectively. We show that, in the limit $\hbar \to 0$, these quantities satisfy distinct Vlasov or transport equations depending on the presence or absence of the mass term.

Briefly speaking, in the \textit{massive} case, the positive and negative energy bands are well-separated by a mass gap. We prove (Theorem \ref{thm:massiv}) that $f_\pm$ converge to solutions of the following relativistic Vlasov-type equation:
\begin{equation}\label{eq:relativisic-Vlasov}
	\partial_t f_\pm
	\pm \mathbf v(\bk)\cdot\nablax f_\pm
	- [\nablax V_\Gamma + \nablax K * \rho] \cdot\nablak f_\pm
	= 0,
\end{equation}
where
\begin{equation}\label{def:velocity}
	\bv(\bk):=\frac{c^2\,\bk}{\sqrt{(mc^2)^2+c^2|\bk|^2}}
	=\frac{c\,\bk}{\sqrt{m^2c^2+|\bk|^2}}
\end{equation}
and
\begin{equation}
	\rho(t,\bx)
	:= \int_{\R^2} \big( f_+(t,\bx,\bk) + f_-(t,\bx,\bk) \big)\,\dd\bk.
\end{equation}
Furthermore, in the \textit{massless} case, we prove (Theorem \ref{thm:massless}) that the band densities $f_\pm$ converge to the solution of a fixed-speed (nonlinear) transport equation:
\begin{equation}\label{eq:massless-transport}
	\partial_t f_\pm \pm c\,\widehat{\mathbf{k}}\cdot\nabla_{\bx} f_\pm
	- [\nablax V_\Gamma + \nablax K * \rho]      \cdot\nabla_{\mathbf{k}} f_\pm = 0,
\end{equation}
where $\widehat{\mathbf{k}}=\mathbf{k}/|\mathbf{k}|$ denotes the normalized momentum, and a truncation near the band-crossing region is imposed to exclude non-adiabatic coupling effects. The resulting limiting dynamics thus describe ballistic propagation at a fixed speed $c>0$, corresponding to the group velocity of the band energy.

\subsection{Background and motivation}

\noindent\textit{Background and Previous Results.}
The semiclassical limit of quantum mechanics to recover classical mechanics has been extensively investigated in the past decades.
For the Schr\"odinger dynamics, the convergence of Wigner transforms to Liouville or Vlasov type transport equations is classical \cite{AM1976}. In the absence of both external and interactive potentials, the semiclassical limit reduces to the Liouville equation. This result was rigorously established by G\'erard \cite{Gerard1991}, and subsequently confirmed by Lions--Paul \cite{LP1993} and Markowich--Mauser \cite{markowich1993classical}.
In the presence of a periodic potential, Markowich--Mauser--Poupaud \cite{MMP1994} introduced the concept of band-Wigner functions to study the semiclassical limit of electron motion in periodic media. The semiclassical limit of the Schr\"odinger--Poisson system was investigated by Brezzi--Markowich -- Castella \cite{BM1991, Castella1997}, and later extended to include periodic potentials by Bechouche \cite{Bechouche1999} and Bechouche--Mauser--Poupaud \cite{BMP2001} through the Wigner--Bloch expansion, under the assumption that the initial data is concentrated in isolated energy bands.
It is also worth mentioning that, for the pure states, Zhang--Zheng--Mauser \cite{ZZM2002} provided a rigorous justification of the semiclassical limit in one dimension for a very weak class of measure-valued solutions. Furthermore, G\'erard--Markowich--Mauser--Poupaud \cite{GMMP1997} investigated the homogenization limit, emphasizing the need for uniform asymptotic expansions when multiple limiting processes are involved.

In contrast to the Schr\"odinger equation, the semiclassical analysis of the Dirac system is far less developed. Early investigations, such as the WKB analysis by Pauli \cite{Pauli1932} and the block-diagonalization method by Foldy--Wouthuysen \cite{FoldyWouthuysen1950}, elucidated how spin enters the effective classical dynamics. Subsequent studies further explored asymptotic expansions \cite{RubinowKeller1963} and the Bargmann--Michel--Telegdi spin-precession law \cite{BMT1959}, providing the foundation for understanding spin transport in relativistic quantum mechanics.
In the 1980s, relativistic Wigner function approaches were developed through the introduction of covariant matrix-valued Wigner functions \cite{ElzeGyulassyVasak1986,ElzeGyulassyVasak1987}, offering a kinetic description consistent with Lorentz covariance. Later, the role of geometric phases and semiclassical corrections in multiband quantum transport was systematically studied from both physical and mathematical perspectives \cite{LittlejohnFlynn1991,SundaramNiu1999,PanatiSpohnTeufel2003,Vanderbilt2018}.
In multiband systems, the Wigner transform naturally becomes matrix-valued, encoding interband coherences. Under the assumption of a spectral gap, one obtains decoupling and effective classical dynamics within each energy band \cite{Spohn2000,Teufel2003}. This framework motivates the use of spectral projectors for Dirac systems, allowing the definition of band-resolved phase-space densities, which describe the semiclassical dynamics associated with the positive and negative energy subspaces.

Very recently, Golse--Leopold--Mauser--Möller--Saffirio \cite{GLMMS2025} established a global-in-time semiclassical limit from the Dirac equation in 3D with time-dependent external electromagnetic fields to relativistic Vlasov equations with Lorentz force. 
In contrast to the projected Wigner approach of \cite{GMMP1997}, their analysis takes the semiclassical limit directly at the level of the full matrix-valued Wigner equation and introduces a Lagrange multiplier to enforce the commutation constraint between the limiting Wigner measure and the Dirac symbol. 
This framework allows for lower regularity and time-dependent external fields, but does not address periodic backgrounds, band-crossing phenomena, or nonlinear Hartree-type interactions.

Furthermore, the influence of periodic external potentials becomes particularly significant in materials with honeycomb lattice symmetry, such as graphene which is a two-dimensional crystal composed of a single layer of carbon atoms \cite{2005NovoselovGeimMorozovJiangKatsnelsonGrigorievaDubonosFirsov}. The energy dispersion relation of graphene exhibits conical singularities, known as Dirac points, where the valence and conduction bands intersect \cite{FW2012JAMS, FW2014CMP, 2018BerkolaikoComech}. These cone-like structures, located at the vertices of the Brillouin zone, are responsible for the remarkable electronic, optical, and mechanical properties of graphene \cite{NGPNG2009}. More recently, twisted bilayer graphene — two stacked graphene sheets rotated by a so-called ``magic angle" — has attracted intense attention due to the emergence of strongly correlated electronic behavior and unconventional superconductivity \cite{Bistritzer2011, Cao2018}.
Such conical band crossings have a profound impact on the semiclassical limits of both Schr\"odinger and Dirac systems. For example, for massless 2D Dirac operators, conical intersections occur at points where the spectral projectors $\Pi_\pm$ become singular, and interband transitions play a crucial role. The rigorous analysis of wave propagation and transition probabilities near such crossings in the adiabatic and semiclassical regimes has been established in \cite{Hagedorn1991, LasserTeufel2005}. We also refer the reader to \cite{QiWangWatson2025} and the references therein for further discussion on the influence of band crossings in the semiclassical limit of the Schr\"odinger equation.

In addition, the nonlinear and mean-field models for Dirac systems provide a framework for quantifying the interplay between spectral structure, dispersion, and Coulomb interactions.
For instance, the relativistic quantum models based on the Dirac operator are central to the rigorous analysis of spectral and stability properties in many-body systems with Coulomb interactions, e.g. \cite{GLS999minimax, LSS1997stability}.
In the context of graphene, the Hartree-Fock theory captures non-perturbative ground-state properties \cite{HainzlLewinSparber2012}, and the global well-posedness of the corresponding time-dependent Hartree--Fock dynamics has been recently established by Hainzl--Lewin--Sparber \cite{BorrelliMorellini2025}. Moreover, small-data scattering results for two-dimensional Dirac--Hartree equations have been obtained by Cho--Lee--Ozawa \cite{ChoLeeOzawa2022}.
At the kinetic level, relativistic Vlasov equations emerge as semiclassical or mean-field limits of fermionic many-body systems and Hartree--Fock dynamics \cite{BPS2014rel, AkiMarSpa08, DietlerRademacherSchlein2018, LeoSaf23}. 
Further studies have addressed decay estimates and the non-relativistic limit of such models in three dimensions \cite{Wang2023, HongPankavich2025}.
It is also worth noting that, in the presence of random external perturbations, the semiclassical limit is expected to yield a radiative transport-type equation. This phenomenon was first rigorously identified by Spohn \cite{Spohn1977}, who derived such a limit for time-dependent Gaussian random impurities. Subsequent works have extended this framework and provided refined results on the impact of randomness on semiclassical propagation \cite{BFPR1999JSP, EY2000CPAM, BPR2002, QiWangWatson2025}. More recently, random Dirac dynamics have been investigated by Bal--Gu--Pinaud \cite{BGP2018}, highlighting the interplay between disorder and relativistic quantum transport.

\smallskip\smallskip
\noindent\textit{Motivation and Our Contributions.} 
Motivated by the physical intuitions and existing results above, we rigorously derive the effective transport dynamics of the 2D Dirac equation in the semiclassical limit, under the influence of a periodic external potential and a Hartree-type interactive potential, for both the massive and massless cases. 

The main difficulty arises from the matrix-valued nature of the Dirac Wigner transform, which mixes spinor components and prevents a direct passage to a scalar classical limit.
To overcome this challenge, we adopt a spectral projection approach, separating the dynamics into positive and negative energy components that correspond to electrons in the conduction and valence bands distributions. This decoupling allows us to identify the limiting transport equations governing the diagonal parts of the Wigner matrix and to distinguish between two qualitatively different regimes, meanwhile, we also have to rigorously justify the propagation of ``smallnes''s of off-diagonal part.
In the massive case, where a spectral gap ensures adiabatic separation, the limiting dynamics are shown to follow a relativistic Vlasov-type flow. In contrast, in the massless case, relevant for graphene and other Dirac materials with conical band crossings, therefore, the delicate cutoff need to be imposed to restrict the dynamics away from band-crossing regions to carefully handle the interband coupling, which further leads to the ballistic transport equation as the semiclassical limit. 

Additional analytical difficulties arise from controlling the error terms generated by the kinetic transport structure and by oscillatory potential interactions. In particular, the coupling between the matrix-valued Wigner dynamics and the external potential introduces delicate commutator structures that resist classical semiclassical expansions, while Hartree-type nonlocal terms induce self-consistent feedbacks that amplify oscillations at different scales. Establishing uniform bounds and precise error estimates in this setting therefore requires a subtle balance between dispersive analysis, microlocal structure, and mean-field regularity.

\subsection{Organization of the paper}
The rest of the paper is organized as follows.
In Section~\ref{sec:pre}, we introduce the Wigner transform together with its positive and negative energy projections, which constitute the main objects of this work.
Section~\ref{sec:main} presents the main results.
The proofs for the massive and massless cases are provided in the subsequent Section \ref{sec:massive}, addressing the estimates of the error terms.
In addition, some standard notations and useful preliminary lemmas are given in the Appendix.

\section{Preliminaries}\label{sec:pre}

\subsection{Notation} \label{subsec:notation}

Let $\{ \mathbf{e}_1,..., \mathbf{e}_d\}$ be a basis in $\R^d$, we define a periodic lattice $\Gamma$, and its dual lattice $\Gamma^*$ such that: 
\begin{equation*}
	\Gamma := \big\{ \sum_{j=1}^d m_j \mathbf{e}_j \, \big| \, m_j \in \mathbb{Z}, \, j=1,...,d \big\} 
	\quad\text{and}\quad
	\Gamma^* := \big\{ \sum_{\ell=1}^d m_l \mathbf{e}^\ell \, \big| \, m_\ell \in \mathbb{Z}, \, \ell=1,...,d \big\} \,,
\end{equation*}
where $\mathbf{e}^\ell$ is the dual basis of $\mathbf{e}_j$ in the sense that $(\mathbf{e}_j \cdot \mathbf{e}^\ell ) = 2 \pi \delta_{j\ell}.$
In addition, we define the \emph{fundamental cell} of $\Gamma$ as $\mathcal{C}$ and the \emph{Brillouin zone} as $\mathcal{B}$ such that:
\begin{equation}\label{def:fundamentalcell-Brillouinzone}
	\mathcal{C} := \big\{ \sum_{j=1}^d \theta_j \mathbf{e}_j \, \big| \, 0 \leq \theta_j < 1, \, j=1,...,d \big\}
	\quad\text{and}\quad
	\mathcal{B} := \big\{ \sum_{\ell = 1}^d \theta_\ell \mathbf{e}^\ell\, \big| \, 0 \leq \theta_\ell < 1, \, \ell = 1,...,d \big\}\,.
\end{equation}
Throughout this paper, the real-valued potential function $V_\Gamma(\bx)$ with $\Gamma$-periodicity refers to
\begin{equation*}
	V_\Gamma(\bx + \bmu) = V_\Gamma(\bx), \quad \forall \, \bx \in \R^d, \quad \bmu \in \Gamma. 
\end{equation*}

\subsection{Wigner transform}
We define the \textit{(symmetric) Wigner transform} as follows: for any pure state $\psi(t,\bx)$,
\begin{equation}\label{def:Wigner}
	\cW^{\hbar}(t,\bx,\mathbf{k})
	:=\frac{1}{(2\pi)^2}\!\int_{\mathbb{R}^2}\!
	\ee^{-\ii \mathbf{k} \cdot \mathbf{y}}\;
	\psi \!\left(t,\bx+\tfrac{\hbar}{2}\mathbf{y}\right)
	\overline{\psi\!\left(t,\bx-\tfrac{\hbar}{2}\mathbf{y}\right)}\gamma^{0}\;\mathrm{d}\mathbf{y},
\end{equation}
here, $\cW^{\hbar}$ is a $2 \times 2$ matrix-valued function on the phase space,
\begin{equation}\label{def:Wigner-matrix}
	\cW^{\hbar}(t, \bx, \mathbf{k}) =
	\begin{pmatrix}
		W_{11}^\hbar(t, \bx, \mathbf{k}) & -W_{12}^\hbar(t, \bx, \mathbf{k}) \\[4pt]
		W_{21}^\hbar(t, \bx, \mathbf{k}) & -W_{22}^\hbar(t, \bx, \mathbf{k})
	\end{pmatrix}
\end{equation}
with
\begin{align*}
	W_{ij}^\hbar(t, \bx, \mathbf{k}) &:= \frac{1}{(2\pi)^2}\!\int_{\mathbb{R}^2}\! \ee^{ - \ii \mathbf{k} \cdot \mathbf{y} }\, 
	\psi_i \left(t, \bx + \tfrac{\hbar}{2} \mathbf{y} \right)\, 
	\overline{\psi_j \left(t, \bx - \tfrac{\hbar}{2} \mathbf{y} \right)}\,\dd\mathbf{y} \quad \text{for} \quad i,j\in\{1,2\}.
\end{align*}

\begin{remark}
	The Wigner measure is not necessarily positive and may indeed take negative values. Nevertheless, the support of its negative part becomes negligible as $\hbar \to 0$. If it is important that the limiting object be a genuine probability measure, as is the case for solutions of the Vlasov equation, one may instead consider the Husimi measure, where some related results for Schr\"odinger system can be found in \cite{AFFGP11, CLL2021, FLP12, GP17}.
\end{remark}

\medskip

For future reference, we state a useful lemma describing the general relation between the Dirac charge density function and the trace of the corresponding Wigner transform.
\begin{lemma}[Spatial density as the $\bk$-marginal of the Wigner transform]
	\label{lem:rho-wigner-identity}
	Let $\psi^\hbar(t,\bx)\in L^2(\R^2;\C^2)$ and let $\cW^\hbar(t,\bx,\bk)\in \C^{2\times 2}$ be the associated matrix-valued Wigner transform
	\[
	\cW^{\hbar}(t,\bx,\bk)
	= \frac{1}{(2\pi)^2} \int_{\mathbb{R}^2}
	\ee^{-\ii \by \cdot \bk}\,
	\psi^\hbar\!\left(t, \bx + \tfrac{\hbar}{2} \by \right)
	\overline{\psi^\hbar \!\left(t, \bx - \tfrac{\hbar}{2} \by \right)}
	\gamma^0
	\,\dd \by,
	\]
	where $\overline{\psi} := \psi^\dagger \gamma^0$ is the Dirac adjoint.
	Then, the Dirac density function satisfies
	\begin{equation}\label{eq:rho-Wigner-marginal}
		\rho^\hbar(t,\bx)
		:= \bar\psi^\hbar(t,\bx)\,\gamma^0\,\psi^\hbar(t,\bx)
		\;=\;
		\int_{\R^2} \Tr [\cW^\hbar](t,\bx,\bk)\,\dd\bk.
	\end{equation}
\end{lemma}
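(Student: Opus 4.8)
The plan is to recognize the momentum integral of $\Tr[\cW^\hbar]$ as an inverse Fourier transform evaluated at the spatial origin, so that it collapses onto the coincidence point $\by=0$ in the bilinear expression defining the Wigner transform. Concretely, fix $t$ and $\bx$ and introduce the scalar function
\[
 g_{\bx}(\by) := \overline{\psi^\hbar\!\big(t,\bx-\tfrac{\hbar}{2}\by\big)}\,\gamma^0\,\psi^\hbar\!\big(t,\bx+\tfrac{\hbar}{2}\by\big), \qquad \by\in\R^2 .
\]
Taking the matrix trace inside the integral in the definition of $\cW^\hbar$ and using the cyclicity of the trace to move the column factor $\psi^\hbar(t,\bx+\tfrac\hbar2\by)$ from the front to the back, one gets
\[
 \Tr[\cW^\hbar](t,\bx,\bk) = \frac{1}{(2\pi)^2}\int_{\R^2}\ee^{-\ii\by\cdot\bk}\,g_{\bx}(\by)\,\dd\by ,
\]
i.e. $(2\pi)^2\,\Tr[\cW^\hbar](t,\bx,\cdot)$ is exactly the Fourier transform of $g_\bx$. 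Since $\psi^\hbar(t,\cdot)\in L^2(\R^2;\C^2)$, the change of variables $\bz=\bx\pm\tfrac\hbar2\by$ together with the Cauchy--Schwarz inequality shows $g_\bx\in L^1(\R^2_\by)$ for every $\bx$, with $\|g_\bx\|_{L^1}\le (2/\hbar)^2\|\psi^\hbar(t,\cdot)\|_{L^2}^2$, so this Fourier transform is well defined and continuous.

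I would first establish the identity under the mild extra hypothesis that $\widehat{g_\bx}=(2\pi)^2\Tr[\cW^\hbar](t,\bx,\cdot)$ is itself integrable in $\bk$ — which holds, for instance, whenever $\psi^\hbar(t,\cdot)$ is a Schwartz spinor, or lies in $H^s(\R^2;\C^2)$ with $s>1$. Under this hypothesis the Fourier inversion formula applies pointwise, and evaluating it at $\by=0$ yields
\[
 \rho^\hbar(t,\bx) = g_\bx(0) = \frac{1}{(2\pi)^2}\int_{\R^2}\widehat{g_\bx}(\bk)\,\dd\bk = \int_{\R^2}\Tr[\cW^\hbar](t,\bx,\bk)\,\dd\bk ,
\]
where the first equality is just the evaluation $g_\bx(0)=\overline{\psi^\hbar(t,\bx)}\,\gamma^0\,\psi^\hbar(t,\bx)$. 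This is precisely \eqref{eq:rho-Wigner-marginal}.

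For general $\psi^\hbar(t,\cdot)\in L^2$ the $\bk$-integral in \eqref{eq:rho-Wigner-marginal} need not converge absolutely, and making rigorous sense of it is the only genuinely delicate point; the algebraic content above is immediate. I would handle this by reading \eqref{eq:rho-Wigner-marginal} as an identity of tempered distributions in $\bx$: pairing both sides with $\varphi\in\mathcal S(\R^2)$ and performing the $\bx$-integration first makes the triple integral in $(\bx,\by,\bk)$ absolutely convergent, Fubini then legitimate, and the computation again produces $\int \varphi(\bx)\,\rho^\hbar(t,\bx)\,\dd\bx$. Equivalently, one may insert a cutoff $\phi(\delta\bk)$ with $\phi\in\mathcal S$, $\phi(0)=1$, pass to the limit $\delta\to0$ using continuity of $g_\bx$ at the origin, and combine this with a density argument approximating $\psi^\hbar(t,\cdot)$ by Schwartz spinors in $L^2$. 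I expect essentially all of the (modest) work to lie in this bookkeeping; the core of the lemma is the trace-cyclicity step followed by Fourier inversion at $\by=0$.
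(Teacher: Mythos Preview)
Your proposal is correct and follows essentially the same route as the paper: cyclicity of the trace to extract the scalar function $g_\bx(\by)=\overline{\psi^\hbar(\bx-\tfrac\hbar2\by)}\,\gamma^0\,\psi^\hbar(\bx+\tfrac\hbar2\by)$, then Fourier inversion at $\by=0$. The paper's version is more formal---it simply invokes $(2\pi)^{-2}\int \ee^{-\ii\by\cdot\bk}\,\dd\bk=\delta(\by)$ without discussing integrability---so your added care about when the $\bk$-integral converges absolutely and how to interpret it distributionally in the general $L^2$ case is a refinement rather than a departure.
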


\begin{proof}
	A direct use of cyclicity of the trace gives
	\[
	\Tr\!\left[
	\psi^\hbar\!\left(\bx + \tfrac{\hbar}{2}\by\right)
	\overline{\psi^\hbar\!\left(\bx - \tfrac{\hbar}{2}\by\right)}
	\gamma^0 \right]
	= \overline{\psi^\hbar\!\left(\bx - \tfrac{\hbar}{2}\by\right)}\;
	\gamma^0\, \psi^\hbar\!\left(\bx + \tfrac{\hbar}{2}\by\right).
	\]
	Hence
	\[
	\Tr [\cW^\hbar](t,\bx,\bk)
	= \frac{1}{(2\pi)^2}\!
	\int_{\R^2}
	\ee^{-\ii\by\cdot\bk}\,
	\overline{\psi^\hbar\!\left(\bx - \tfrac{\hbar}{2}\by\right)}\;
	\gamma^0\,
	\psi^\hbar\!\left(\bx + \tfrac{\hbar}{2}\by\right)
	\dd\by.
	\]
	Integrating over $\bk$ and using
	\[
	\frac{1}{(2\pi)^2}\int_{\R^2} \ee^{-\ii\by\cdot\bk}\dd\bk = \delta(\by)
	\]
	yields
	\[
	\int_{\R^2} \Tr[\cW^\hbar](t,\bx,\bk)\,\dd\bk
	= \overline{\psi^\hbar(t,\bx)}\;\gamma^0\,\psi^\hbar(t,\bx),
	\]
	which is exactly \eqref{eq:rho-Wigner-marginal}.
\end{proof}

\medskip

\begin{remark}[Band decomposition of the spacial density]
	In our $2\times2$ Dirac representation $\gamma^0=\sigma^3$, the density reduces to
	\[
	\rho^\hbar(t,\bx)
	= |\psi_1^\hbar(t,\bx)|^2 + |\psi_2^\hbar(t,\bx)|^2
	= \langle \psi^\hbar(t,\bx),\psi^\hbar(t,\bx)\rangle_{\C^{2}}.
	\]
	Let $\Pi_\pm(\bk)$ be the spectral projectors of the Dirac symbol, and define the band Wigner functions
	\[
	f_\pm^\hbar(t,\bx,\bk)
	:= \Tr\!\big[\Pi_\pm(\bk)\,\cW^\hbar(t,\bx,\bk)\big].
	\]
	Since $\Pi_+(\bk)+\Pi_-(\bk)=\Id$, we have
	\[
	\Tr[\cW^\hbar](t,\bx,\bk)
	= f_+^\hbar(t,\bx,\bk) + f_-^\hbar(t,\bx,\bk),
	\]
	and by Lemma~\ref{lem:rho-wigner-identity},
	\begin{equation}\label{eq:rho-band}
		\rho^\hbar(t,\bx)
		= \int_{\R^2} \big(f_+^\hbar(t,\bx,\bk)
		+ f_-^\hbar(t,\bx,\bk)\big)\,\dd\bk.
	\end{equation}
	Therefore, the Hartree-type potential $V_{\mathrm{int}}$ in the 2D Dirac--Hartree model couples to the total band density, with both bands contributing with the same sign.
\end{remark}

\medskip

\begin{remark}[Physical interpretation of the two bands] \label{rem:bands-interpretation}
	The Dirac symbol $H_m(\bk)=c\,\balpha\!\cdot\!\bk + mc^2\gamma^0$ has two eigenvalues $E_\pm(\bk)=\pm E_m(\bk)$, which correspond to two Bloch bands (conduction and valence). In the present $2\times2$ effective Dirac model, both bands describe electronic states and therefore carry the same physical charge. In particular, the spatial density entering the Hartree potential is
	\[
	\rho^\hbar(t,\bx)
	= \int_{\R^2} \Tr[\cW^\hbar](t,\bx,\bk)\,\dd\bk
	= \int_{\R^2} \big(f_+^\hbar(t,\bx,\bk)
	+ f_-^\hbar(t,\bx,\bk)\big)\,\dd\bk,
	\]
	so that each band contributes with the same sign.
	
	The opposite signs in the transport part $\partial_t f_\pm  \pm \bv(\bk) \cdot \nabla_\bx f_\pm$ of the limiting Vlasov equations
	
	do not reflect opposite charges. They arise solely from the fact that the group velocities of the two bands satisfy $\nabla_\bk E_+(\bk) = -\,\nabla_\bk E_-(\bk)$.
	In other words, electrons in the conduction and valence bands drift in the opposite directions due to the opposite slopes of the two dispersion relations, while the force term $-\nabla_\bx V\cdot\nabla_\bk f_\pm$ remains the same, because their physical charge is identical.
	
	We emphasize that, unlike in the relativistic $4\times4$ Dirac theory, the negative-energy band here does not represent antiparticles, and no charge-sign difference appears.
\end{remark}

\subsection{Positive and negative energy projection}
\label{sec:proj-just}

We introduce the positive and negative energy projected operator $\Pi_{\pm}(\bk)$.

By the spectral theorem, for any self-adjoint operator $A$, we can define $\mathbf{1}_{(0,\infty)}$ as the spectral projector onto its positive spectrum in the sense that
\begin{equation}\label{def:pro}
	\mathbf{1}_{(0,\infty)}(A)=\frac12\left(\mathbf{1}+ \operatorname{sgn}(A)\right),
	\quad \text{for} \quad
	\operatorname{sgn}(A):=A\,|A|^{-1}.
\end{equation}

We denote $H_m(\bk)$ as the massive Dirac Bloch symbol in 2D:
\begin{equation}\label{def:Hm}
	H_m(\bk):=c\,\boldsymbol{\alpha}\!\cdot\!\bk + m c^2 \gamma^0
	= c\,(k_1\sigma_1+k_2\sigma_2)+ m c^2 \gamma^0,
\end{equation}
where
\[ 
H_m(\bk)=H_m^\ast(\bk) \quad  \text{and} \quad H_m^2(\bk) = \big(c^2|\bk|^2+(mc^2)^2\big) \, \Id,
\]
and the corresponding eigenvalues of $H_m(\bk)$ are 
\begin{equation}\label{def:Em}
	\pm E_m(\bk):= \pm \sqrt{c^2|\bk|^2+(mc^2)^2}.
\end{equation}

Then, substituting $A=H_m(\bk)$ into \eqref{def:pro}, we obtain the \emph{positive/negative energy projector} (Riesz projectors onto the positive/negative spectral branch) of $H_m(\bk)$ that
\begin{equation}\label{eq:proj-general}
	\begin{aligned}
		\Pi_{+}(\bk)
		&:=\mathbf{1}_{(0,+\infty)}\!\bigl(H_m(\bk)\bigr)
		= \frac12\!\Big(\Id + \frac{H_m(\bk)}{E_m(\bk)}\Big)
		=\frac12\!\Big(\Id + \frac{c\,\balpha\!\cdot\!\bk + m c^2 \gamma^0}{E_m(\bk)}\Big),\\
		\Pi_{-}(\bk)
		&:=\mathbf{1}_{(-\infty,0)}\!\bigl(H_m(\bk)\bigr)
		= \frac12\!\Big(\Id - \frac{H_m(\bk)}{E_m(\bk)}\Big)
		=\frac12\!\Big(\Id - \frac{c\,\balpha\!\cdot\!\bk + m c^2 \gamma^0}{E_m(\bk)}\Big),
	\end{aligned}
\end{equation}
where we use $|H_m(\bk)|=\sqrt{H_m^2(\bk)}=E_m(\bk)\, \Id$ in the second equality above.

Note that, for the massless case ($m=0$),
\begin{equation}\label{eq:proj-massless}
	\Pi_{\pm}(\bk)
	:=\frac12\!\big(\Id \pm \hat\bk\!\cdot\!\boldsymbol{\alpha}\big),
\end{equation}
where we denote $\hat\bk:=\bk/|\bk|$ for $\bk \neq \mathbf{0}$.

\medskip

One can verify that the projectors $\Pi_{\pm}(\bk)$ are smooth on $\R^2_{\bk}$ and satisfy that
\begin{equation}\label{eq:proj-basic}
	\begin{aligned}
		&\Pi_\pm^2 (\bk)=\Pi_\pm (\bk), 
		\qquad
		\Pi_+(\bk) \; \Pi_-(\bk) = 0, 
		\qquad
		\Pi_+ (\bk) + \Pi_-(\bk) =\Id,\\[5pt]
		&[H_m(\bk),\Pi_\pm(\bk)]=0, \quad
		H_m(\bk)\;\Pi_\pm(\bk)=\pm E_m(\bk)\;\Pi_\pm(\bk),\\[5pt]
		&\Pi_\pm\,(\nablak \Pi_\pm)\,\Pi_\pm=0,
		\quad
		\Pi_\pm\,(\nablak \Pi_\mp)\,\Pi_\pm=0,
	\end{aligned}
\end{equation}
where the proof of the last two identities is provided in the Appendix \ref{Appendix:Basics}.

Thanks to the projectors $\Pi_\pm$, it is useful to decompose the Wigner matrix \eqref{def:Wigner-matrix} into its diagonal and off-diagonal parts:
\begin{equation}\label{def:WD+WOD}
	\cW^{\hbar}(t,\bx,\bk) := \cW^{\hbar}_{\mathrm{D}}(t,\bx,\bk) + \cW^{\hbar}_{\mathrm{OD}}(t,\bx,\bk),
\end{equation}
with
\begin{equation}
	\cW^{\hbar}_{\mathrm{D}}(t,\bx,\bk) := \Pi_+\cW^{\hbar}\Pi_+ + \Pi_-\cW^{\hbar}\Pi_-,
	\qquad
	\cW^{\hbar}_{\mathrm{OD}}(t,\bx,\bk) := \Pi_+\cW^{\hbar}\Pi_- + \Pi_-\cW^{\hbar}\Pi_+.
\end{equation}

\begin{lemma}[Derivatives of the spectral projectors]\label{lem:grad-Pi}
	The following bounds hold:  
	\begin{align}
		&\|\nabla_\bk \Pi_\pm^m(\bk)\|_{\mathrm{op}}
		\le \frac{1}{2mc}, && \text{for massive case }(m>0), \label{eq:gradPi-massive}\\[3pt]
		&\|\nabla_\bk \Pi_\pm^0(\bk)\|_{\mathrm{op}}
		\le \frac{C}{|\bk|}, \quad \bk\neq 0, && \text{for massless case }(m=0), \label{eq:gradPi-massless}
	\end{align}
	where $\|\cdot\|_{\mathrm{op}}$ is the usual operator norm.
\end{lemma}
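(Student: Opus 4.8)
The plan is to compute $\nabla_\bk \Pi_\pm(\bk)$ explicitly from the closed-form expressions \eqref{eq:proj-general} and \eqref{eq:proj-massless}, and then bound the resulting matrix in operator norm. First I would treat the massive case. Writing $\Pi_\pm(\bk) = \tfrac12(\Id \pm H_m(\bk)/E_m(\bk))$, the $\Id$ term is constant, so only the unit-vector-valued map $\bk \mapsto H_m(\bk)/E_m(\bk) =: U(\bk)$ contributes. Since $H_m(\bk) = c(k_1\sigma_1 + k_2\sigma_2) + mc^2\gamma^0$ is affine in $\bk$, we have $\partial_{k_j} H_m(\bk) = c\,\sigma_j$, while $\partial_{k_j} E_m(\bk) = c^2 k_j / E_m(\bk)$. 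By the quotient rule,
\begin{equation*}
	\partial_{k_j} \Pi_\pm(\bk)
	= \pm \frac12 \left( \frac{c\,\sigma_j}{E_m(\bk)} - \frac{c^2 k_j\, H_m(\bk)}{E_m(\bk)^3} \right)
	= \pm \frac{c}{2E_m(\bk)}\left( \sigma_j - \frac{c\,k_j}{E_m(\bk)}\,U(\bk) \right).
\end{equation*}
Now $\|\sigma_j\|_{\mathrm{op}} = 1$ and $\|U(\bk)\|_{\mathrm{op}} = 1$ (as $U^2 = \Id$), and $|c\,k_j| \le |c\bk| \le E_m(\bk)$, so the bracket has operator norm at most $2$. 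Hence $\|\partial_{k_j}\Pi_\pm\|_{\mathrm{op}} \le c/E_m(\bk) \le c/(mc^2) = 1/(mc)$; a slightly more careful accounting of the geometry (the two terms in the bracket are never simultaneously aligned, or simply using $E_m \geq mc^2$ together with $|c k_j|/E_m \leq 1$ inside a single coordinate) yields the stated $\tfrac{1}{2mc}$. Assembling the gradient (as a vector of matrices) and using $E_m(\bk) \ge mc^2$ gives \eqref{eq:gradPi-massive}.

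For the massless case $m=0$, one has $\Pi_\pm(\bk) = \tfrac12(\Id \pm \hat\bk\cdot\balpha)$ with $\hat\bk = \bk/|\bk|$, so again only the map $\bk \mapsto \hat\bk\cdot\balpha = (k_1\sigma_1 + k_2\sigma_2)/|\bk|$ matters. Differentiating,
\begin{equation*}
	\partial_{k_j}(\hat\bk\cdot\balpha)
	= \frac{\sigma_j}{|\bk|} - \frac{k_j\,(\bk\cdot\balpha)}{|\bk|^3}
	= \frac{1}{|\bk|}\left( \sigma_j - \hat k_j\, (\hat\bk\cdot\balpha) \right),
\end{equation*}
and since $\|\sigma_j\|_{\mathrm{op}} = \|\hat\bk\cdot\balpha\|_{\mathrm{op}} = 1$ and $|\hat k_j| \le 1$, the operator norm of this is $\le 2/|\bk|$; collecting the two components and using equivalence of norms on $\R^2$ gives \eqref{eq:gradPi-massless} with an absolute constant $C$. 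Note the massless bound is exactly the massive bound with the gap $mc^2$ replaced by the vanishing ``gap'' $c|\bk|$, which is the source of the singularity at the conical crossing $\bk = 0$.

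I do not expect a genuine obstacle here: the projectors are given by explicit rational functions of $\bk$, so this is a direct differentiation and norm estimate. The only point requiring mild care is getting the sharp constant $\tfrac{1}{2mc}$ rather than the crude $\tfrac1{mc}$ — this comes from exploiting that the two terms $\sigma_j$ and $(ck_j/E_m)U(\bk)$ cannot reinforce each other to full size simultaneously, or alternatively from a direct $2\times2$ computation of the eigenvalues of the Hermitian matrix $\partial_{k_j}\Pi_\pm$ (which, being a derivative of a projector, satisfies $\Pi_\pm (\partial_{k_j}\Pi_\pm)\Pi_\pm = 0 = \Pi_\mp(\partial_{k_j}\Pi_\pm)\Pi_\mp$ by \eqref{eq:proj-basic}, so it is purely off-diagonal and its norm is governed by a single scalar entry). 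The remaining bookkeeping — passing from per-coordinate bounds to a bound on the full gradient — is routine.
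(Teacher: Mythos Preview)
Your approach is correct and essentially identical to the paper's: compute $\nabla_\bk \Pi_\pm$ explicitly from the closed form $\Pi_\pm = \tfrac12(\Id \pm H_m/E_m)$ via the quotient rule, then bound the result in operator norm using $E_m(\bk)\ge mc^2$ (massive) or $E_0(\bk)=c|\bk|$ (massless). One minor notational slip: in this paper $H_m(\bk)=c\,\balpha\cdot\bk + mc^2\gamma^0$ with $\alpha^1=\sigma^2$, $\alpha^2=-\sigma^1$, so $\partial_{k_j}H_m = c\,\alpha^j$ rather than $c\,\sigma_j$; since the $\alpha^j$ are still Pauli-type matrices with unit operator norm and square $\Id$, none of your estimates change. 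Your observation that $\partial_{k_j}\Pi_\pm$ is purely off-diagonal in the $\Pi_\pm$-basis (from $\Pi_\pm(\nabla_\bk\Pi_\pm)\Pi_\pm=0$) is in fact a cleaner route to the sharp constant $\tfrac{1}{2mc}$ than what the paper writes, which simply asserts $\|\nabla_\bk\Pi_\pm\|_{\mathrm{op}}\le \tfrac{c}{2E_m}$ after the explicit formula without isolating why the triangle-inequality factor of $2$ disappears.
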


\begin{remark}\label{rem:k-in-B}
	In the case of periodic external potential as in \eqref{def:fundamentalcell-Brillouinzone}, the quasi-momentum $\bk$ belongs to the Brillouin zone $\mathcal{B}$.
	All the definitions and identities of the projectors $\Pi_\pm(\bk)$ above hold pointwise for $\bk \in \mathcal{B}$, and their smoothness extends periodically to the boundary of $\mathcal{B}$.
	This observation will be relevant when we later integrate quantities involving $\Pi_\pm(\bk)$ over the Brillouin zone.
\end{remark}

\begin{proof}[Proof of Lemma \ref{lem:grad-Pi}]
	For \emph{massive} case:  
	when $m>0$,
	\[
	\Pi_\pm^m(\bk)=\frac{1}{2}\Big(\Id \pm\frac{H_m(\bk)}{E_m(\bk)}\Big),
	\qquad
	\nabla_\bk E_m(\bk)=\frac{c^2\bk}{E_m(\bk)}.
	\]
	Taking the gradient of $\Pi_\pm$ gives
	\begin{align*}
		\nabla_\bk \Pi_\pm^m(\bk)
		&=\pm\frac{1}{2}\Big(
		\frac{c\,\boldsymbol{\alpha}}{E_m(\bk)}
		-\frac{H_m(\bk)\,\nabla_\bk E_m(\bk)}{E_m(\bk)^2}
		\Big)\\[3pt]
		&=\pm\frac{c^3}{2E_m(\bk)^3}
		\Big(E_m(\bk)^2\,\boldsymbol{\alpha}
		-c^2(\boldsymbol{\alpha}\!\cdot\!\bk)\,\bk
		-m c^2 E_m(\bk)\,\gamma^0\,\bk\Big),
	\end{align*}
	where each term is smooth and bounded on $\R^2$, hence
	\[
	\|\nabla_\bk \Pi_\pm^m(\bk)\|_{\mathrm{op}}
	\le \frac{c}{2E_m(\bk)}
	\le \frac{1}{2mc},
	\]
	leading to \eqref{eq:gradPi-massive}.
	The mass term $m c^2$ opens a spectral gap of size $2mc^2$, regularizing the projectors.
	
	For \emph{massless} case:  
	when $m=0$,
	\[
	H_0(\bk)=c\,\boldsymbol{\alpha}\!\cdot\!\bk,
	\qquad
	E_0(\bk)=c|\bk|,
	\qquad
	\Pi_\pm^0(\bk)
	=\frac{1}{2}\Big(\Id\pm\frac{H_0(\bk)}{E_0(\bk)}\Big)
	=\frac{1}{2}\big(\Id\pm\boldsymbol{\alpha}\!\cdot\!\hat{\bk}\big),
	\]
	with $\hat{\bk}=\bk/|\bk|$. Hence
	\[
	\nabla_\bk \Pi_\pm^0(\bk)
	=\frac{1}{2}\,\nabla_\bk(\boldsymbol{\alpha}\!\cdot\!\hat{\bk})
	=\frac{1}{2|\bk|}
	\big(\boldsymbol{\alpha}-(\boldsymbol{\alpha}\!\cdot\!\hat{\bk})\,\hat{\bk}\big),
	\]
	so that $\|\nabla_\bk \Pi_\pm^0(\bk)\|_{\mathrm{op}}\lesssim |\bk|^{-1}$.  
	The singularity at $\bk=\bk^*$ reflects the conical band crossing of the massless Dirac operator, motivating the introduction of a cutoff region
	$\Omega_\kappa=\{|\bk - \bk^*|\ge\kappa\}$ in the analysis.  
	In the massless case, the factor $1/(mc)$ is thus replaced by the singular weight $1/|\bk|$.
\end{proof}

\section{Main results} 
\label{sec:main}

\subsection{Assumptions}

Before presenting the main results, some assumptions are illustrated for the external potential $V_\Gamma$, interactive kernel $K$, initial condition $\psi_0$ and the corresponding initial Wigner measure $\cW^{\hbar}_0$,
\begin{equation}\label{eq:initial-W}
	\cW^{\hbar}_0(\bx,\mathbf{k}) = \cW^{\hbar}(t=0,\bx,\mathbf{k})
	:=\frac{1}{(2\pi)^2}\!\int_{\mathbb{R}^2}\!
	\ee^{-\ii \mathbf{k} \cdot \mathbf{y}}\;
	\psi_0\!\left(\bx+\tfrac{\hbar}{2}\mathbf{y}\right)
	\overline{\psi_0\!\left(\bx-\tfrac{\hbar}{2}\mathbf{y}\right)}\gamma^{0}\;\mathrm{d}\mathbf{y},
\end{equation}
which are critical in the proof of the main Theorems.

\begin{enumerate}
	\item[\bf (A1)]\textbf{Regularity of the potential and kernel.}
	The external potential $V_\Gamma \in A^2(\Gamma)$ is $\Gamma$–periodic, i.e.,
	\[
	\| \bmu^2 \hat{V_\Gamma}(\bmu) \|_{\ell^1_{\bmu}} :=\sum_{\bmu \in \Gamma^*} |\bmu|^2\,|\widehat V_\Gamma(\bmu)| < \infty,
	\]
	and the interactive kernel $K \in A^1(\R^2)$ in the Hartree-type interactive potential satisfies
	\[
	\| \bk' \hat{K}(\bk')\|_{L^1_{\bk'}} := \int_{\R^2} |\bk'| |\hat{K}(\bk')| \,\dd \bk' < \infty.
	\]
	
	\item[\bf (A2)]\textbf{Boundedness of the initial condition.}
    For the initial condition $\psi_0$ in \eqref{eq:initial-psi},
	\[
	\|\psi_0\|_{H^1_{\bx}(\mathbb{R}^2)}\le C_0.
	\]
	
	\item[\bf (A3)]\textbf{Small inter-band coherence of the initial condition.}
    The off-diagonal part of the initial Wigner measure $\cW^{\hbar}_0$ in \eqref{eq:initial-W} is small: for $\bx \in \mathbb{R}^2$,
	\[
	\big\|\mathbf{1}_{|\bk-\bk_*|>\kappa} \,\cW^\hbar_{0,\mathrm{OD}}(\bx, \cdot) \big\|_{H^{-1}_{\bk}(\mathbb{R}^2)} \, \le \, \bar{C}_0 \,\hbar,
	\]
	where $\cW^{\hbar}_{0,\mathrm{OD}}:=\Pi_+ \cW^{\hbar}_0 \Pi_- + \Pi_- \cW^{\hbar}_0 \Pi_+ $.
	
	\item[\bf (A4)]\textbf{Support of the initial condition away from the band crossings.}
    The initial measure $\cW^{\hbar}_0$ in \eqref{eq:initial-W} is supported $\kappa$-away from the band crossings: for $\bx \in \mathbb{R}^2$,
	\[
	\mathbf{1}_{|\bk-\bk_*|>\kappa}\cW^{\hbar}_0(\bx,\bk)=\cW^{\hbar}_0(\bx,\bk).
	\]
\end{enumerate}

\medskip

\begin{remarks*}\phantom{ }
	\begin{enumerate}
		\item The regularity required in Assumption~\textnormal{\textbf{(A1)}} is mainly due to the interactions between the energy bands that possibly happen.
        In fact, the estimates for the kinetic related error term only rely on the boundedness of the potentials and their first derivatives,
        while the estimates of the error terms need decay property in the Fourier space (e.g., \eqref{est:Q-ext-error} and \eqref{est:Q-int-error}), where the Wiener-algebra hypotheses $V_\Gamma \in A^{2}(\Gamma)$ and $K \in A^{1}(\R^{2})$ are sufficient.
        Thus, \textnormal{\textbf{(A1)}} provides a unified and minimal regularity requirement for both the kinetic and potential parts. For more details about the weighted Wiener algebras and their symbolic properties, we refer to~\cite{BMP2001}.
		
		\item Assumption \textnormal{\textbf{(A2)}} is imposed on the initial data. By the $L^2_\bx$–conservation law (Lemma~\ref{lem:L2conservation}), this property persists uniformly for all $t>0$ and $\hbar\in(0,1]$. Moreover, the $H^1_\bx$-norm remains uniformly bounded; see Lemma~\ref{lem:bdd-H1}.
		
		\item  Assumption \textnormal{\textbf{(A3)}} describes that (i) The state is prepared almost entirely within one energy band (diagonal in the band basis); (ii) inter-band coherence (off-diagonal term) is suppressed by a factor of $\hbar$; (iii) this matches the prediction of the adiabatic theorem: transitions between bands are of order $\hbar$ under slow/semiclassical evolution.
		
		It is the standard \emph{adiabatic ansatz} of the initial state (leading symbol diagonal in the band basis). It is necessary in the massless case, where the band crossing points may appear. 
		
		\item Assumption \textnormal{\textbf{(A4)}} requires that the initial Wigner distribution  $\cW^{\hbar}_0$ in \eqref{eq:initial-W} vanishes in a neighborhood of the Dirac point $\bk=\bk_*$. Equivalently, the Fourier transform of the initial spinor $\widehat\psi_0(\bk)$ has support bounded away from the band crossings, i.e., $|\bk - \bk_*|\ge\kappa$. Typical examples include semiclassical wave packets
		\[
		\psi^\hbar(\bx) = \hbar^{-1} a\!\left(\tfrac{\bx-\bx_*}{\hbar}\right)\, 
		\ee^{\ii \frac{(\bk-\bk_*) \cdot \bx}{\hbar} }, \qquad |\bk-\bk_*| \ge \kappa,
		\]
		whose Wigner transform is away from  $\bk_*$, or more generally, the band-adapted superpositions are supported in $\{ \bk \,| \,|\bk-\bk_*| \ge \kappa\,\}$. 
		Thus, \textnormal{\textbf{(A4)}} means the initial state carries no contribution from the region 
		around the band crossings $\bk_*$, where the band projectors are singular.
		\begin{figure}[h!]
			\centering
			\includegraphics[width=0.65\textwidth]{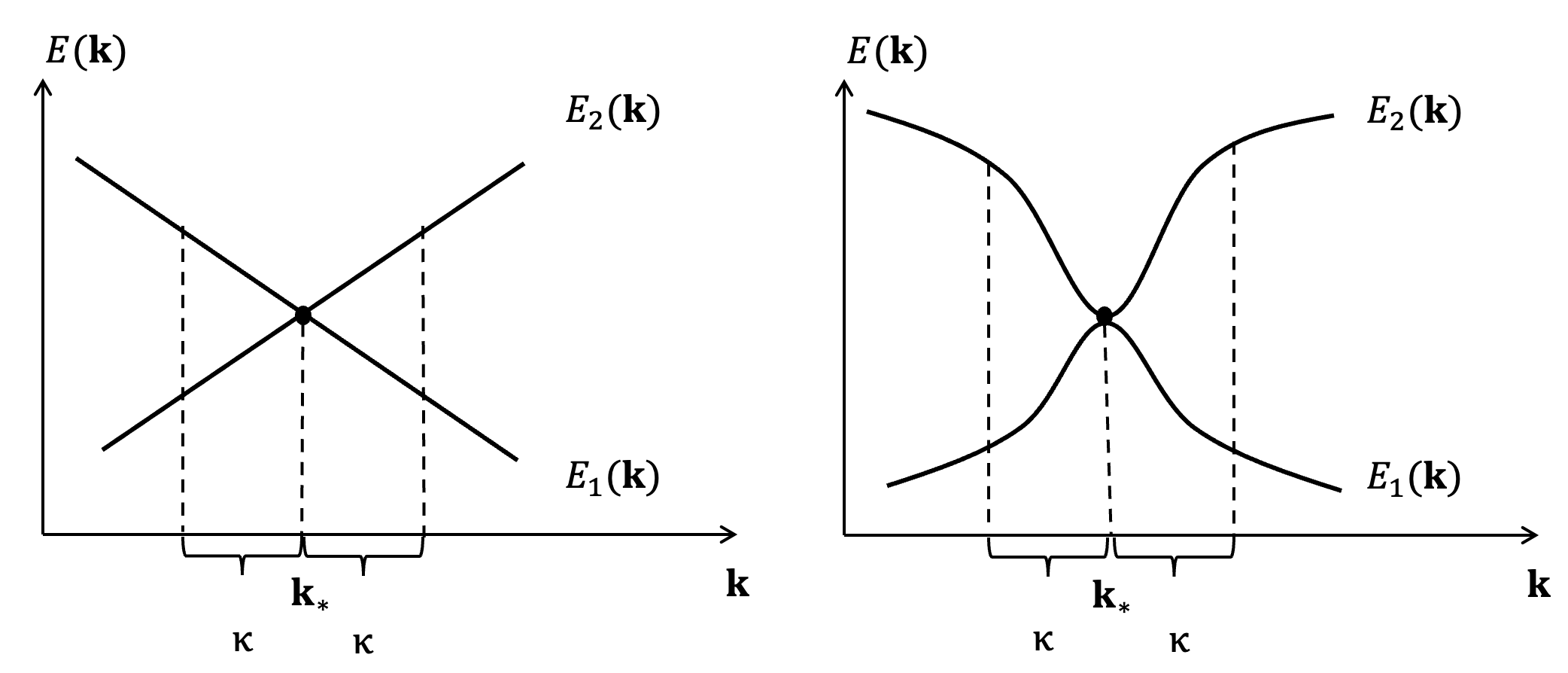}
			\caption{Left: A conical energy bands crossing with $\kappa$-neighborhood. Right: A quadratic energy bands crossing with $\kappa$-neighborhood \cite{QiWangWatson2025}.}
			\label{fig1}
		\end{figure}
	\end{enumerate}
\end{remarks*}

\subsection{Main Theorems}

Now we are in a position to present our main results. We first clarify the basic setup, i.e., in which sense we would talk about the convergence.

We study the Dirac system in the limit $\hbar \to 0$ through the evolution of the Wigner transform \eqref{def:Wigner}.
By applying the Wigner transform to both sides of \eqref{eq:massive-dirac-scaled}, we can formally obtain the corresponding Wigner equation\footnote{See Appendix \ref{Appendix:Componentwise} for complete derivation.}, 
\begin{multline}\label{eq:Wigner}
	\partial_t \cW^{\hbar} (t,\bx,\bk)
	+ c\,\boldsymbol{\alpha}\!\cdot\!\nabla_{\bx} \cW^{\hbar}(t,\bx,\bk)
	+ \frac{1}{\ii\hbar}\,[H_m,\,\cW^{\hbar}](t,\bx,\bk)\\ 
	- \mathcal Q^\ve[V_\Gamma, \cW^{\hbar}](t,\bx,\bk) - \mathcal Q^{\hbar}[V_{\mathrm{int}}, \cW^{\hbar}](t,\bx,\bk)=\mathbf{0},
\end{multline}
with the initial condition \eqref{eq:initial-W}.

Here $H_m:=H_m(\bk)$ is the massive Dirac Bloch symbol in 2D defined in \eqref{def:Hm},
the external potential-related term  $\mathcal Q^\ve[V_\Gamma, \cW^{\hbar}]$ reads,
\begin{equation}\label{eq:Q-L}
	\begin{aligned}
		\mathcal Q^\ve[V_\Gamma, \cW^{\hbar}](t,\bx,\bk)
		&:= \frac{\ii}{\hbar} \int_{\R^2}\frac{\dd\by}{(2\pi)^2} \ee^{-\ii \bk \cdot \by} \left[ V_\Gamma \left( \frac{\bx}{\ve} + \frac{\hbar}{2\ve} \by \right) - V_\Gamma \left( \frac{\bx}{\ve} - \frac{\hbar}{2\ve} \by \right) \right] \psi^{\hbar} \left(t,\bx+\tfrac{\hbar}{2}\mathbf{y}\right)
		\overline{\psi^{\hbar} \left(t,\bx-\tfrac{\hbar}{2}\mathbf{y}\right)}\gamma^{0} \\[5pt]
		&= \frac{\ii}{\hbar} \int_{\R^2}\frac{\dd\by}{(2\pi)^2}
		\int_{\R^2}\!\dd\bk'\,
		\ee^{\ii (\bk - \bk') \cdot \by} \left[ V_\Gamma \left( \frac{\bx}{\ve} + \frac{\hbar}{2\ve} \by \right) - V_\Gamma \left( \frac{\bx}{\ve} - \frac{\hbar}{2\ve} \by \right) \right]
		\cW^{\hbar}(t,\bx,\bk')\\[5pt]
		&= \frac{\ii}{\hbar} \sum_{\bmu \in \Gamma^*} \ee^{\ii \bmu \cdot \frac{\bx}{\ve}} \hat{V}_{\Gamma}(\bmu) \left[ \cW^{\hbar} \left(t,\bx, \bk +\frac{\hbar}{2\ve}\bmu \right) - \cW^{\hbar} \left(t,\bx, \bk -\frac{\hbar}{2\ve}\bmu \right) \right],
	\end{aligned}
\end{equation}
where 
\[
\hat{V}_{\Gamma}(\bmu) = \frac{1}{|\mathcal{C}|} \int_{\mathcal{C}} \ee^{-\ii \bmu\cdot \by} V_\Gamma(\by) \, \dd \by, 
\]
and the periodicity of the external potential $V_\Gamma$ is considered in the last equality above;
also the interactive Hartree-related term $\mathcal Q^{\hbar}[V_{\mathrm{int}}, \cW^{\hbar}]$ corresponds to
\begin{equation}\label{eq:Q-int}
	\begin{aligned}
		\mathcal Q^{\hbar}[V_{\mathrm{int}},\cW^{\hbar}](t,\bx,\bk)
		&:= \frac{\ii}{\hbar} \int_{\R^2}\frac{\dd\by}{(2\pi)^2}
		\int_{\R^2}\!\dd\bk'\,
		\ee^{\ii (\bk - \bk') \cdot \by} \Big[ V_{\mathrm{int}} \left(t, \bx + \frac{\hbar \by}{2} \right) - V_{\mathrm{int}} \left(t, \bx - \frac{\hbar \by}{2} \right) \Big]
		\cW^{\hbar}(t,\bx,\bk')\\
		&= \frac{\ii}{\hbar} \int_{\R^2} \dd\by
		\int_{\R^2}\!\dd\bk'\,
		\ee^{\ii \bk' \cdot (\bx-\by)} V_{\mathrm{int}} \left(t, \by \right) 
		\Big[ \cW^{\hbar}\left(t,\bx,\bk+ \frac{\hbar\bk'}{2}\right) - \cW^{\hbar}\left(t,\bx,\bk- \frac{\hbar\bk'}{2}\right) \Big].
	\end{aligned}
\end{equation}

If further defining $f^{\hbar}_\pm$ as the \emph{positive/negative-energy-projected densities} of \eqref{eq:Wigner} in the sense that
\begin{equation}\label{def:fpm}
	f^{\hbar}_\pm(t,\bx,\bk):=\Tr[\Pi_\pm \cW^{\hbar}](t,\bx,\bk),
\end{equation}
then, in the following Theorem \ref{thm:massiv}, $f^{\hbar}_\pm$ is proved to converge to $f_\pm$, which is the solution to the limiting relativistic Vlasov-type equation, describing the semiclassical motion of the electrons in conduction and valence, respectively.

\begin{theorem}[Massive case ($m>0$)]\label{thm:massiv}
	For the potential terms $V_\Gamma, K$ satisfying \textnormal{\textbf{(A1)}}, let $\psi^{\hbar}$ be the solution to the scaled Dirac equation \eqref{eq:massive-dirac-scaled} with initial condition $\psi_0$ satisfying \textnormal{\textbf{(A2)}}, and its Wigner transform $\cW^{\hbar}$ be the solution to \eqref{eq:Wigner} with initial condition $\cW^{\hbar}_0$ satisfying \textnormal{\textbf{(A3)}}.
	
	Then, the positive/negative energy-projected densities $f^{\hbar}_\pm$ defined in \eqref{def:fpm} satisfy, in the distributional sense,
	\begin{equation}\label{eq:massive-eff-transport}
		\left\{
		\begin{aligned}
			\partial_t f^{\hbar}_\pm (t,\bx,\bk)
			\pm \mathbf v(\bk)\cdot\nablax f^{\hbar}_\pm (t,\bx,\bk)
			- \Big[ \nablax  V_\Gamma +\nablax K * \rho^\hbar \Big] \cdot \nablak f^{\hbar}_\pm(t,\bx,\bk) =\;& \Err_\pm^{\hbar}(t,\bx,\bk),\\[5pt]
			f^{\hbar}_\pm(t=0,\bx,\bk) =& \Tr[\Pi_\pm \cW^{\hbar}_0](\bx,\bk),
		\end{aligned}
		\right.
	\end{equation}
	where $\mathbf v(\bk)$ is the massive group velocity defined as $\mathbf v(\bk)=\tfrac{c^2}{E_m(\bk)}\,\bk$ and
	\[
	\rho^\hbar(t,\bx)
	:= \int_{\R^2} \big( f^\hbar_+(t,\bx,\bk) + f^\hbar_-(t,\bx,\bk) \big)\,\dd\bk,
	\]
	furthermore, the total error term $\Err_\pm^{\hbar}$ satisfies, for $\eta (\bx) \in W^{1,\infty}_\bx(\R^2)$ and $ \phi (\bk) \in W^{2,\infty}_\bk(\R^2)$, 
	\begin{equation}\label{est-total-massive}
		\sup_{t\in[0,T]}\ \sup_{\substack{ \|\eta\|_{W^{1,\infty}_\bx} \le 1\\ \|\phi\|_{W^{2,\infty}_\bk} \le 1}}
		\big|\langle \Err_\pm^{\hbar}(t,\bx,\bk), \eta(\bx) \phi(\bk)\rangle \big|
		\ \le\ C_T\, \hbar \left( 
		1 + \ve^{-2}\right).
	\end{equation}
	
	Consequently, as $\hbar \to 0$, $f^{\hbar}_\pm$ converges to $f_\pm$ in
	$C \big([0,T]; (W^{1,\infty}_\bx W^{2,\infty}_\bk(\Omega_\kappa))'\big)$.
	The limit functions $f_\pm$ are the solutions to the Vlasov equation
	\begin{equation}\label{eq:massive-limit}
		\left\{
		\begin{aligned}
			\partial_t f_\pm(t,\bx,\bk) \pm \mathbf v(\bk) \cdot \nablax f_\pm(t,\bx,\bk) - \nablax V \cdot \nablak f_\pm(t,\bx,\bk) &= 0,\\[4pt]
			f_\pm(t=0,\bx,\bk) &= \Tr[\Pi_\pm \cW_0](\bx,\bk),
		\end{aligned}
		\right.
	\end{equation}
	where
	\[
	\Tr\big[\Pi_\pm(\bk)\,\cW_0(\bx,\bk)\big]
	= \lim_{\hbar \to 0} \Tr\big[\Pi_\pm(\bk)\,\cW_0^\hbar(\bx,\bk)\big],
	\]
	and the mean-field potential is given by
	\begin{equation*}
		V(t,\bx)
		= V_\Gamma\!\left(\frac{\bx}{\ve}\right)
		+ (K * \rho)(t,\bx),
		\qquad
		\rho(t,\bx)
		:= \lim_{\hbar\to0}\int_{\R^2} \big( f^\hbar_+(t,\bx,\bk) + f^\hbar_-(t,\bx,\bk) \big)\,\dd\bk.
	\end{equation*}
\end{theorem}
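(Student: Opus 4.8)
The plan is to take the Wigner equation \eqref{eq:Wigner}, conjugate it by the spectral projectors $\Pi_\pm(\bk)$, and track what each term contributes to the evolution of $f_\pm^\hbar = \Tr[\Pi_\pm \cW^\hbar]$. First I would split $\cW^\hbar = \cW_{\mathrm D}^\hbar + \cW_{\mathrm{OD}}^\hbar$ as in \eqref{def:WD+WOD} and observe that the commutator term $\frac{1}{\ii\hbar}[H_m,\cW^\hbar]$, upon multiplication by $\Pi_\pm$ and taking the trace, annihilates the diagonal part (since $[H_m,\Pi_\pm]=0$ from \eqref{eq:proj-basic}) and leaves only a contribution from the off-diagonal part; using \textbf{(A3)} and a Gronwall-type argument one shows $\|\mathbf 1_{|\bk-\bk_*|>\kappa}\cW_{\mathrm{OD}}^\hbar\|_{H^{-1}_\bk}$ stays $O(\hbar)$ for $t\in[0,T]$, so this term is absorbed into $\Err_\pm^\hbar$. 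The transport term $c\,\balpha\cdot\nablax\cW^\hbar$ contributes, after tracing against $\Pi_\pm$, a term $\Tr[\Pi_\pm c\,\balpha\cdot\nablax\cW^\hbar]$; writing $c\,\balpha = \nabla_\bk H_m$ and using the diagonalization $H_m\Pi_\pm = \pm E_m\Pi_\pm$ together with $\Pi_\pm(\nabla_\bk\Pi_\pm)\Pi_\pm = 0$ and $\Pi_\pm(\nabla_\bk\Pi_\mp)\Pi_\pm=0$, the diagonal part reduces to $\pm\nabla_\bk E_m(\bk)\cdot\nablax f_\pm^\hbar = \pm\mathbf v(\bk)\cdot\nablax f_\pm^\hbar$, while cross terms involving $\cW_{\mathrm{OD}}^\hbar$ and $\nabla_\bk\Pi_\pm$ are again $O(\hbar)$ and go into $\Err_\pm^\hbar$.

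Next I would handle the two potential terms $\mathcal Q^\ve[V_\Gamma,\cW^\hbar]$ and $\mathcal Q^\hbar[V_{\mathrm{int}},\cW^\hbar]$. For each, I would Taylor-expand the finite difference of shifted Wigner matrices in $\hbar$: e.g. $\cW^\hbar(\bk+\tfrac{\hbar}{2\ve}\bmu) - \cW^\hbar(\bk-\tfrac{\hbar}{2\ve}\bmu) = \tfrac{\hbar}{\ve}\bmu\cdot\nabla_\bk\cW^\hbar + O\!\big((\hbar/\ve)^3 \partial_\bk^3\cW^\hbar\big)$ tested against $\phi\in W^{2,\infty}_\bk$ (the $H^{-1}_\bk$/distributional pairing moves derivatives onto $\phi$, which is why $W^{2,\infty}_\bk$ suffices and the remainder is controlled by $\hbar\ve^{-2}$, matching \eqref{est-total-massive}). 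The leading term, combined with the prefactor $\ii/\hbar$ and summed over $\bmu\in\Gamma^*$ with weights $\ee^{\ii\bmu\cdot\bx/\ve}\hat V_\Gamma(\bmu)$, reconstructs $-\nablax V_\Gamma(\bx/\ve)\cdot\nabla_\bk\cW^\hbar$ (here $\ve^{-1}$ from the difference cancels against the $\ve$ hidden in $\nablax[V_\Gamma(\cdot/\ve)]$); tracing against $\Pi_\pm$ and discarding the $O(\hbar)$ commutator $[\Pi_\pm,\nabla_\bk\cW^\hbar]$-type corrections yields $-\nablax V_\Gamma\cdot\nablak f_\pm^\hbar$. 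The Wiener-algebra bound $\|\bmu^2\hat V_\Gamma\|_{\ell^1}<\infty$ from \textbf{(A1)} makes the summation and the remainder estimate absolutely convergent. The same scheme applied to $\mathcal Q^\hbar[V_{\mathrm{int}},\cW^\hbar]$ gives $-\nablax V_{\mathrm{int}}\cdot\nablak f_\pm^\hbar = -(\nablax K*\rho^\hbar)\cdot\nablak f_\pm^\hbar$, using $\|\bk'\hat K\|_{L^1}<\infty$ and Lemma~\ref{lem:rho-wigner-identity}/\eqref{eq:rho-band} to identify $\rho^\hbar$ as the $\bk$-marginal of $f_+^\hbar+f_-^\hbar$; the nonlinearity is handled by the uniform $H^1_\bx$ bound on $\psi^\hbar$ (Lemma~\ref{lem:bdd-H1}), which controls $\rho^\hbar$ and $\nabla V_{\mathrm{int}}$ uniformly in $\hbar$.

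Assembling these contributions gives exactly \eqref{eq:massive-eff-transport} with $\Err_\pm^\hbar$ equal to the sum of: (i) the off-diagonal commutator remainder, (ii) the $\nabla_\bk\Pi_\pm$ cross terms in the transport term, and (iii) the $O(\hbar\ve^{-2})$ Taylor remainders from the two $\mathcal Q$-terms; each is estimated by the pieces above, giving \eqref{est-total-massive}. For the passage to the limit, I would first establish compactness: by \textbf{(A2)} and $L^2$-conservation (Lemma~\ref{lem:L2conservation}) the family $\{f_\pm^\hbar\}$ is bounded in $L^\infty([0,T];L^1_\bx L^1_\bk)$ or at least in the relevant dual space, hence weak-$*$ precompact; the uniform-in-$\hbar$ transport estimate \eqref{eq:massive-eff-transport} with $\Err_\pm^\hbar\to 0$ gives equicontinuity in time in $(W^{1,\infty}_\bx W^{2,\infty}_\bk(\Omega_\kappa))'$, so by Arzelà--Ascoli a subsequence converges in $C([0,T];(W^{1,\infty}_\bx W^{2,\infty}_\bk(\Omega_\kappa))')$ to some $f_\pm$. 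Passing to the limit in the weak formulation, $\rho^\hbar\to\rho$ strongly enough (via the $H^1_\bx$ bound and a Rellich-type argument in $\bx$) that the nonlinear term $(\nablax K*\rho^\hbar)\cdot\nablak f_\pm^\hbar$ converges to $(\nablax K*\rho)\cdot\nablak f_\pm$, yielding \eqref{eq:massive-limit}; uniqueness of the limiting Vlasov solution (standard for bounded-force relativistic Vlasov with the given regularity) upgrades subsequential to full convergence.

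\medskip

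\noindent\emph{Main obstacle.} The hardest part is (ii)--(iii) done jointly: propagating the smallness of the off-diagonal part $\cW_{\mathrm{OD}}^\hbar$ uniformly on $[0,T]$ while the nonlinear Hartree term feeds back self-consistently. The commutator $\frac{1}{\ii\hbar}[H_m,\cW_{\mathrm{OD}}^\hbar]$ is the singular $\hbar^{-1}$ term that the spectral gap $2mc^2$ must tame; one needs a closed Gronwall estimate for $\|\cW_{\mathrm{OD}}^\hbar\|_{H^{-1}_\bk}$ whose right-hand side involves $\|\nabla V_{\mathrm{int}}^\hbar\|$, which in turn depends on $\rho^\hbar$, i.e. on $f_\pm^\hbar$ — so the diagonal and off-diagonal estimates are coupled and must be closed simultaneously. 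Keeping all constants uniform in $\hbar$ and $\ve$ (with the explicit $\ve^{-2}$ dependence in \eqref{est-total-massive}) through this coupled bootstrap is the delicate technical core; by contrast the projector algebra and the Taylor expansions of the $\mathcal Q$-terms are routine once \textbf{(A1)} and Lemma~\ref{lem:grad-Pi} are in hand.
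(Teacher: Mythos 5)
Your overall architecture — project the Wigner equation by $\Tr[\Pi_\pm\,\cdot\,]$, split $\cW^\hbar$ into diagonal and off-diagonal parts, use the projector algebra for the transport term, Taylor-expand the two $\mathcal Q$-terms, and collect errors — is the same as the paper's (Propositions~\ref{prop:weak-kinetic}, \ref{prop:error-Vext}, \ref{prop:error-Vint}). But there is one concrete error in your treatment of the singular commutator term that, taken at face value, would sink the argument.

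You write that $\frac{1}{\ii\hbar}\Tr[\Pi_\pm[H_m,\cW^\hbar]]$ ``annihilates the diagonal part\ldots and leaves only a contribution from the off-diagonal part,'' which you then propose to absorb into $\Err_\pm^\hbar$ via the $O(\hbar)$ bound on $\cW_{\mathrm{OD}}^\hbar$. This cannot work: the $\hbar^{-1}$ prefactor would turn an $O(\hbar)$ off-diagonal contribution into an $O(1)$ term, which is the same size as the leading transport terms and cannot be dumped into the remainder. In fact, the entire commutator contribution vanishes \emph{identically}, not just to leading order: by cyclicity of the trace,
\[
\Tr\bigl[\Pi_\pm[H_m,\cW^\hbar]\bigr]
= \Tr\bigl[[\Pi_\pm,H_m]\,\cW^\hbar\bigr]
= 0,
\]
since $[\Pi_\pm,H_m]=0$ from \eqref{eq:proj-basic}. (Equivalently, check directly: $\Tr[\Pi_+[H_m,\Pi_+\cW^\hbar\Pi_-]] = 2E_m\Tr[\Pi_-\Pi_+\cW^\hbar]=0$, and likewise for the other off-diagonal block.) So no adiabatic estimate is needed for this term at all; it is exact cancellation, not smallness. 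The role of Lemma~\ref{lem:OD-L1H-1} in the paper is only to control the \emph{kinetic} error $-c\langle\Tr[\Pi_\pm\balpha\cW_{\mathrm{OD}}^\hbar],\nabla_\bx\eta\,\phi\rangle$, where there is no compensating $\hbar^{-1}$.

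Two smaller remarks. First, you derive the transport coefficient by writing $c\,\balpha=\nabla_\bk H_m$, differentiating $H_m\Pi_\pm=\pm E_m\Pi_\pm$, and projecting with $\Pi_\pm(\nabla_\bk\Pi_\pm)\Pi_\pm=0$; the paper instead records the direct identity $\Pi_\pm\alpha_j\Pi_\pm=\tfrac{1}{c}v_j(\bk)\Pi_\pm$ in \eqref{eq:Pi-alpha-identity}. The routes are equivalent. Second, you sketch a coupled Gronwall bootstrap to close the off-diagonal smallness simultaneously with the Hartree feedback; the paper avoids this by treating $V_{\mathrm{int}}$ as a sufficiently regular, slowly varying effective external potential (using $\partial_t V_{\mathrm{int}}=(\nabla K)*\mathbf J$ uniformly bounded under \textbf{(A1)}--\textbf{(A2)}) and then invoking Teufel's space-adiabatic theory for the linear problem. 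Your Gronwall route is more self-contained in spirit, but you would still need to prove that the singular $\hbar^{-1}$ commutator acts benignly on the off-diagonal evolution (via the spectral gap and averaging of the phase $e^{\pm 2\ii E_m t/\hbar}$); a naive Gronwall estimate that ignores the oscillation does not yield $O(\hbar)$.
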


Based on the massive case, we can present the similar result for the massless case ($m=0$), where the only difference lies in the kinetic term. 
In fact, the corresponding massless Wigner equation reads: for the initial condition \eqref{eq:initial-W},
\begin{multline}\label{eq:Wigner-massless}
	\partial_t \cW^{\hbar} (t,\bx,\bk)
	+ c\,\boldsymbol{\alpha}\!\cdot\!\nabla_{\bx} \cW^{\hbar}(t,\bx,\bk)
	+ \frac{1}{\ii\hbar}\,[H_0,\,\cW^{\hbar}](t,\bx,\bk)\\ - \mathcal Q^\ve[V_\Gamma, \cW^{\hbar}](t,\bx,\bk) - \mathcal Q^{\hbar}[V_{\mathrm{int}}, \cW^{\hbar}](t,\bx,\bk)=\mathbf{0},
\end{multline}
where the massless Dirac Bloch symbol $H_0$ is $H_0:=c\,\boldsymbol{\alpha}\!\cdot\!\bk$

In the following Theorem \ref{thm:massless}, by taking the projection \eqref{def:fpm}, the energy-projected densities $f^{\hbar}_\pm$ are shown to satisfy a similar effective equation in contrast to the massive case, nevertheless, the group velocity $\mathbf{v}(\bk)$ degenerates. 

\begin{theorem}[Massless case ($m=0$)] \label{thm:massless}
	For the potential terms $V_\Gamma, K$ satisfying \textnormal{\textbf{(A1)}}, let $\psi^{\hbar}$ be the solution to the scaled Dirac equation \eqref{eq:massless-dirac-scaled} with initial condition $\psi_0$ satisfying \textnormal{\textbf{(A2)}}, and its Wigner transform $\cW^{\hbar}$ be the solution to \eqref{eq:Wigner-massless} with the initial condition $\cW^{\hbar}_0$ satisfying \textnormal{\textbf{(A3)}} and \textnormal{\textbf{(A4)}}.
	
	Then, the energy-projected densities $f^{\hbar}_\pm$ satisfy, in the distributional sense,
	\begin{equation}\label{eq:massless-eff-transport}
		\left\{
		\begin{aligned}
			\partial_t f^{\hbar}_\pm(t,\bx,\bk) \pm c\,\hat\bk \cdot \nablax f^{\hbar}_\pm(t,\bx,\bk)
			- \Big[ \nablax  V_\Gamma +\nablax K * \rho^\hbar \Big] \cdot \nablak f^{\hbar}_\pm(t,\bx,\bk) =\;& \Err_\pm^{\hbar}(t,\bx,\bk)\\[5pt]
			f^{\hbar}_\pm(t=0,\bx,\bk) =& \Tr[\Pi_\pm \cW^{\hbar}_0](\bx,\bk),
		\end{aligned}
		\right.
	\end{equation}
	where $\hat\bk$ is the massless group velocity defined as $\hat\bk=\bk/|\bk|$ and
	\begin{equation*}
		\rho^\hbar(t,\bx)
		:= \int_{\R^2} \big( f^\hbar_+(t,\bx,\bk) + f^\hbar_-(t,\bx,\bk) \big)\,\dd\bk,
	\end{equation*}
	furthermore, the total error term $\Err_\pm^{\hbar}$ satisfies, for $\eta (\bx) \in W^{1,\infty}_\bx(\R^2)$ and $ \phi (\bk) \in W^{2,\infty}_\bk(\Omega_\kappa)$ for any $\kappa > 0$,
	\begin{equation}\label{est-totoal-massless}
		\sup_{t\in[0,T]}\ \sup_{\substack{ \|\eta\|_{W^{1,\infty}_\bx} \le 1\\ 
				\|\phi\|_{W^{2,\infty}_\bk(\Omega_\kappa)} \le 1}}
		\big|\langle \Err_\pm^{\hbar}(t,\bx,\bk), \eta(\bx) \phi(\bk)\rangle \big|
		\ \le\ C_T\, \hbar \left( 
		1 + \ve^{-2}\right).
	\end{equation}
	
	Consequently, as $\hbar \to 0$, $f^{\hbar}_\pm$ converges to $f_\pm$ in
	$C \big([0,T]; (W^{1,\infty}_\bx W^{2,\infty}_\bk(\Omega_\kappa))'\big)$.
	The limit functions $f_\pm$ are the solutions to the Vlasov equation
	\begin{equation}\label{eq:massless-limit}
		\left\{
		\begin{aligned}
			\partial_t f_\pm(t,\bx,\bk)
			\;\pm\; c\,\hat\bk \cdot \nablax f_\pm(t,\bx,\bk)
			\;-\; \nablax V(t,\bx) \cdot \nablak f_\pm(t,\bx,\bk) &\;=\; 0,\\[4pt]
			f_\pm(t=0,\bx,\bk)
			&\;=\; \Tr\big[\Pi_\pm(\bk)\,\cW_0(\bx,\bk)\big],
		\end{aligned}
		\right.
	\end{equation}
	where
	\[
	\Tr\big[\Pi_\pm(\bk)\,\cW_0(\bx,\bk)\big]
	= \lim_{\hbar \to 0} \Tr\big[\Pi_\pm(\bk)\,\cW_0^\hbar(\bx,\bk)\big],
	\]
	and the mean-field potential is given by
	\begin{equation*}
		V(t,\bx)
		= V_\Gamma\!\left(\frac{\bx}{\ve}\right)
		+ (K * \rho)(t,\bx),
		\qquad
		\rho(t,\bx)
		:= \lim_{\hbar\to0}\int_{\R^2} \big( f^\hbar_+(t,\bx,\bk) + f^\hbar_-(t,\bx,\bk) \big)\,\dd\bk.
	\end{equation*}
\end{theorem}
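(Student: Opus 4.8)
The plan is to follow the scheme used for the massive case (Theorem~\ref{thm:massiv}, Section~\ref{sec:massive}), keeping careful track of the two features special to $m=0$: the spectral gap $E_+(\bk)-E_-(\bk)=2E_0(\bk)$ closes at the band crossing $\bk_*$, and the projectors $\Pi^0_\pm(\bk)$ of \eqref{eq:proj-massless} have gradient of size $|\bk-\bk_*|^{-1}$ by Lemma~\ref{lem:grad-Pi}. Both obstructions are localized at $\bk_*$ and are neutralized by working on the cutoff region $\Omega_\kappa=\{|\bk-\bk_*|>\kappa\}$, which is what~\textbf{(A3)}--\textbf{(A4)} make available. First I would apply $\Tr[\Pi_\pm(\bk)\,\cdot\,]$ to \eqref{eq:Wigner-massless}: by cyclicity of the trace and $[H_0,\Pi_\pm]=0$ (from \eqref{eq:proj-basic}) the commutator term disappears, and after writing $\cW^\hbar=\cW^\hbar_{\mathrm{D}}+\cW^\hbar_{\mathrm{OD}}$ and using $\Pi^0_\pm\,(c\,\balpha)\,\Pi^0_\pm=\pm c\,\hat\bk\,\Pi^0_\pm$ (a consequence of $(\balpha\!\cdot\!\hat\bk)\sigma_j+\sigma_j(\balpha\!\cdot\!\hat\bk)=2\hat k_j\Id$), the kinetic term yields exactly $\pm c\,\hat\bk\!\cdot\!\nablax f^\hbar_\pm$ together with the purely interband remainder $\Tr[(\Pi_\pm c\,\balpha\,\Pi_\mp)\,\nablax\cW^\hbar_{\mathrm{OD}}]$. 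Expanding the momentum shifts in $\mathcal Q^\ve[V_\Gamma,\cW^\hbar]$ and $\mathcal Q^\hbar[V_{\mathrm{int}},\cW^\hbar]$ of \eqref{eq:Q-L}--\eqref{eq:Q-int} to first order reproduces $-[\nablax V_\Gamma+\nablax K*\rho^\hbar]\!\cdot\!\nablak f^\hbar_\pm$ up to the projector mismatch $\Pi_\pm(\bk)-\Pi_\pm(\bk\pm\tfrac{\hbar}{2\ve}\bmu)$ and the second-order Taylor remainders; collecting all of these defines $\Err^\hbar_\pm$ and gives \eqref{eq:massless-eff-transport}.

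For the bound \eqref{est-totoal-massless} I would test $\Err^\hbar_\pm$ against $\eta(\bx)\phi(\bk)$ with $\|\eta\|_{W^{1,\infty}_\bx}\le1$, $\|\phi\|_{W^{2,\infty}_\bk(\Omega_\kappa)}\le1$. Integrating by parts in $\bx$ in the interband kinetic remainder (the matrix $\Pi_\pm c\,\balpha\,\Pi_\mp$ does not depend on $\bx$) leaves a pairing of $\cW^\hbar_{\mathrm{OD}}(\bx,\cdot)$ in $H^{-1}_\bk$ against $(\Pi_\pm c\,\balpha\,\Pi_\mp)\,\phi$, whose $H^1_\bk$-norm is finite precisely because $\Pi^0_\pm$ is smooth on $\Omega_\kappa$ --- this is the structural reason the test functions must be supported in $\Omega_\kappa$. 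The mismatch terms carry $\tfrac{\hbar}{\ve}\|\nabla_\bk\Pi^0_\pm\|_{\mathrm{op}}\lesssim\tfrac{\hbar}{\kappa\ve}$ by Lemma~\ref{lem:grad-Pi}, and the second-order remainders of $\mathcal Q^\ve$ and $\mathcal Q^\hbar$ produce the factors $\tfrac1\hbar(\tfrac{\hbar}{\ve})^2=\tfrac{\hbar}{\ve^2}$ and $\hbar$ respectively, the $\Gamma^*$- and $\bk'$-integrals converging by the Wiener-algebra hypotheses $V_\Gamma\in A^2(\Gamma)$, $K\in A^1(\R^2)$ of~\textbf{(A1)}; bounding $\rho^\hbar$ via the uniform $H^1_\bx$ estimate (Lemma~\ref{lem:bdd-H1}) then yields $C_T\hbar(1+\ve^{-2})$, once $\cW^\hbar_{\mathrm{OD}}$ is controlled.

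The hard part will be propagating the smallness of $\cW^\hbar_{\mathrm{OD}}$. I would project \eqref{eq:Wigner-massless} onto its off-diagonal blocks: the commutator produces the fast factor $\pm\tfrac{2E_0(\bk)}{\ii\hbar}$ on the $(\pm,\mp)$ blocks, while the kinetic and $\mathcal Q$ terms give a source driven by $\cW^\hbar_{\mathrm{D}}$ plus a self-coupling in $\cW^\hbar_{\mathrm{OD}}$, with the (derivative-losing) $\bk$-shifts in $\mathcal Q^\ve$ acting isometrically in $L^2_\bk$ and needing the uniform $H^1_\bx$ and higher-$\bk$-regularity bounds on $\cW^\hbar$. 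Multiplying by a smooth cutoff $\chi_\kappa$ equal to $1$ on $\Omega_{2\kappa}$ and vanishing near $\bk_*$ (the commutator of $\chi_\kappa$ with the shifts in $\mathcal Q^\ve$ being lower order), Duhamel's formula with integrating factor $\ee^{\mp2\ii E_0(\bk)t/\hbar}$ applies; since $2E_0(\bk)\ge c_\kappa>0$ on $\supp\chi_\kappa$, a non-stationary-phase integration by parts in the time variable converts the $\cW^\hbar_{\mathrm{D}}$-source into a term of order $\hbar(1+\ve^{-2})$, the data term is $O(\hbar)$ by~\textbf{(A3)}--\textbf{(A4)}, and a Gr\"onwall argument absorbs the self-coupling, giving $\sup_{t\in[0,T]}\sup_{\bx}\|\chi_\kappa\,\cW^\hbar_{\mathrm{OD}}(t,\bx,\cdot)\|_{H^{-1}_\bk}\le C_{T,\kappa}\,\hbar(1+\ve^{-2})$, which closes the previous step. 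I expect the genuine difficulty to be precisely here: making this oscillatory/Gr\"onwall estimate uniform across the cutoff boundary, absorbing the derivative loss in the self-coupling, reconciling the $\ve^{-1}$-scaled periodic shifts (the origin of the $\ve^{-2}$ loss) with the localization $\chi_\kappa$, and keeping the singular weight $|\bk-\bk_*|^{-1}$ of $\nabla_\bk\Pi^0_\pm$ under control --- it is exactly this that forces the $\Omega_\kappa$-restriction and hypotheses~\textbf{(A3)}--\textbf{(A4)}.

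Finally, for the passage to the limit: by~\textbf{(A2)} and Lemma~\ref{lem:L2conservation} the family $\{f^\hbar_\pm\}$ is bounded in $C\big([0,T];\mathcal{M}_b(\R^2\times\R^2)\big)$, and \eqref{eq:massless-eff-transport} with right-hand side $O(\hbar(1+\ve^{-2}))$ and bounded transport/force coefficients gives equicontinuity in $t$ with values in $\big(W^{1,\infty}_\bx W^{2,\infty}_\bk(\Omega_\kappa)\big)'$; I would extract a weak-$*$ convergent subsequence $f^\hbar_\pm\rightharpoonup f_\pm$, use the uniform $H^1_\bx$ bound (Lemma~\ref{lem:bdd-H1}) with an Aubin--Lions-type argument to get strong compactness of $\rho^\hbar$, pass to the limit in the nonlinear term $\nablax K*\rho^\hbar$, and identify $\rho=\int(f_++f_-)\,\dd\bk$. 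Since $\hbar(1+\ve^{-2})\to0$ at fixed $\ve$, the limit $f_\pm$ solves \eqref{eq:massless-limit} in the distributional sense against $W^{1,\infty}_\bx W^{2,\infty}_\bk(\Omega_\kappa)$; well-posedness of \eqref{eq:massless-limit} --- globally defined fixed-speed characteristics together with the Lipschitz force $\nablax[V_\Gamma(\cdot/\ve)+K*\rho]$ from~\textbf{(A1)}, closing a contraction on $\rho$ --- yields uniqueness of the limit, hence convergence of the whole family, as claimed.
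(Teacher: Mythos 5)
Your proposal follows the paper's architecture very closely in the algebraic reduction: project the Wigner equation by $\Tr[\Pi_\pm\cdot]$, which kills the commutator; split $\cW^\hbar=\cW^\hbar_{\mathrm D}+\cW^\hbar_{\mathrm{OD}}$; extract the group velocity via $\Pi_\pm c\,\balpha\,\Pi_\pm=\pm c\,\hat\bk\,\Pi_\pm$; Taylor-expand the $\mathcal Q^\ve$- and $\mathcal Q^\hbar$-terms; and reduce everything to the interband smallness of $\cW^\hbar_{\mathrm{OD}}$. Two points, however, genuinely depart from or fall short of what the paper does.

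\emph{The interband estimate.} The paper proves $\sup_{t\in[0,T]}\|\cW^\hbar_{\mathrm{OD}}(t)\|_{L^2}\le C_T\hbar$ (Lemma~\ref{lem:OD-L1H-1}) by invoking space-adiabatic perturbation theory as a cited black box: super-adiabatic projectors $P_N^\hbar$ that almost commute with the full Hamiltonian, with the Hartree potential frozen as a slowly varying external field of the type covered by \cite{PanatiSpohnTeufel2003,Teufel2003}. You instead propose a direct Duhamel estimate on the off-diagonal block with integrating factor $\ee^{\mp 2\ii E_0(\bk)t/\hbar}$, a non-stationary-phase integration by parts in $t$ on $\supp\chi_\kappa$, and a Gr\"onwall closure. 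This is the natural elementary alternative to the adiabatic-projector machinery and is viable in principle on $\Omega_\kappa$, but it pushes into a single oscillatory Gr\"onwall loop all the delicate issues the paper offloads to the cited theory: the derivative loss from the $\ve^{-1}$-scaled momentum shifts in $\mathcal Q^\ve$, the self-coupling of the off-diagonal block, the commutator of $\chi_\kappa$ with the shifts, and the nonlinearity of $V_{\mathrm{int}}$. Your own assessment that this is ``the hard part'' is right; as written this is the place where the proposal is sketchiest compared with what a complete proof would require.

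\emph{The projector mismatch.} You correctly observe that expanding the shifts in $\Tr\!\big[\Pi_\pm(\bk)\,\mathcal Q^\ve[V_\Gamma,\cW^\hbar]\big]$ reproduces $\nablax V_\Gamma\!\cdot\!\nablak f^\hbar_\pm$ only up to the mismatch $\Pi_\pm(\bk)-\Pi_\pm\!\big(\bk\pm\tfrac{\hbar}{2\ve}\bmu\big)$ (the paper's Proposition~\ref{prop:error-Vext} Taylor-expands the full product $\Pi_\pm\cW^\hbar$ and does not isolate this term). But your quoted bound $\tfrac{\hbar}{\ve}\|\nabla_\bk\Pi^0_\pm\|_{\mathrm{op}}\lesssim\tfrac{\hbar}{\kappa\ve}$ is the size of the mismatch itself, not of its contribution to $\Err^\hbar_\pm$: after the overall $\tfrac{1}{\hbar}$ prefactor in $\mathcal Q^\ve$, the naive accounting gives $O\!\big(\tfrac{1}{\kappa\ve}\big)$, which is $O(1)$, not $O(\hbar)$. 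Restoring the claimed order requires the structural cancellation: the leading mismatch contribution is proportional to $\Tr\!\big[(\nabla_\bk\Pi_\pm)\,\cW^\hbar\big]$, and the identities $\Pi_\pm(\nabla_\bk\Pi_\pm)\Pi_\pm=0=\Pi_\pm(\nabla_\bk\Pi_\mp)\Pi_\pm$ of Lemma~\ref{lem:proj-derivative} (equivalently, \eqref{eq:proj-basic}) show this trace vanishes on $\cW^\hbar_{\mathrm D}$ and couples only to $\cW^\hbar_{\mathrm{OD}}$, after which Lemma~\ref{lem:OD-L1H-1} supplies the missing factor of $\hbar$. You list both of these tools elsewhere in the proposal but do not invoke them at this step, so as stated there is an order-one gap; the fix is short once you note this cancellation.
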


\medskip

\begin{remarks*}\phantom{ }
	\begin{enumerate}
		\item The derived effective equations are physically intuitive: electrons in conduction and valence propagate in opposite directions according to their respective group velocities $v_\pm(\bk)=\pm v(\bk)$.
		
		\item In the massless case, the band projectors $\Pi_\pm(\bk)=\frac{1}{2}(\Id\pm\boldsymbol{\alpha}\!\cdot\!\hat{\bk})$ are singular at the Dirac point $\bk=\bk_*$. To control this singularity, one restricts the analysis to $|\bk-\bk_*|>\kappa$. This cutoff can be interpreted as excluding long-range modulations in the initial data: the corresponding characteristic modulation scale is of order $1/\kappa$, which may become very small depending on~$\hbar$.
		
		\item In contrast, for the massive case ($m>0$), the spectral gap $2mc^2$ regularizes the projectors and removes the singularity at $\bk=\bk_*$. Consequently, all estimates remain uniform in $\bk$, and no cutoff is required to obtain global error bounds.
	\end{enumerate}
\end{remarks*}

\medskip

As a byproduct, we can apply our results to the dynamics of the 2D material graphene with the regularized Coulomb kernel, where the periodic potential is given in the honeycomb lattice .
We recall that the 3D Coulomb kernel restricted to $\mathbb{R}^2$ is $K(\bx)=|\bx|^{-1}$, which is too singular.
To regularize it, we convolve the kernel with a Gaussian function
\begin{equation}
	G_\sigma(\bx)=\frac{1}{2\pi\sigma^{2}}\,\ee^{-\frac{|\bx|^{2}}{2\sigma^{2}}}.
\end{equation}
A direct Fourier calculation shows
\[
\widehat{K * G_\sigma}(\bk)
= \widehat{K}(\bk)\,\widehat{G_\sigma}(\bk)
= \frac{1}{|\bk|}\,\ee^{-\tfrac12\sigma^{2}|\bk|^{2}}.
\]
Because we cannot bound $\|\psi^\hbar\|_{H^1_\bx}$ uniformly in $\sigma$, we will instead use $\|\psi^\hbar\|_{L^2_\bx}$. To do so, we have the following estimate that
\[
\left\Vert  \bk'^{2} \widehat{K*G_\sigma}(\bk') \hat{\rho}(\bk') \right\Vert _{L^1_{\bk'}}
\leq \left\Vert  \bk'^{2} \widehat{K*G_\sigma}(\bk') \right\Vert _{L^1_{\bk'}} \left\| \hat{\rho}(\bk') \right\Vert _{L^\infty_{\bk'}},
\]
furthermore, we find
\[
\bigl\|\,|\bk|^2\,\widehat{K*G_\sigma}(\bk)\,\bigr\|_{L^{1}_{\bk}}
= \int_{\mathbb{R}^{2}}\,|\bk|^2\,\ee^{-\tfrac12\sigma^{2}|\bk|^{2}}\,\dd\bk
= \frac{2}{\sigma^{4}},
\]
which is finite for all $\sigma>0$.  
Thus, following the similar proof as in the main theorems, the corresponding error term the total error term $\Err_\pm^{\hbar}$ in this case converges to $0$, whenever $\sigma=h^\alpha$ with $0\leq \alpha <\frac{1}{4}$.

We summarize the application to the regularized Coulomb potential in the following Corollary.

\medskip

\begin{corollary}
	Under the conditions of Theorem \ref{thm:massiv}, let $\psi^{\hbar}$ be the solution to the scaled Dirac equation \eqref{eq:massive-dirac-scaled} with
	$K_\alpha=G_{\hbar^{\alpha}}*\frac{1}{|\;\cdot\;|}(x)$ with $0\leq \alpha <\frac{1}{4}$.
	Then, $f_\pm^\hbar$ satisfies \eqref{eq:massive-eff-transport} with
	$K_{\alpha}:=G_{\hbar^{\alpha}}*\frac{1}{|\;\cdot\;|}(x)$ with $0\leq \alpha <\frac{1}{4}$.
	Furthermore, as $\hbar \to 0 $, $f^{\hbar}_\pm$ converges to $f_\pm$ in $C\big([0,T]; (W^{1,\infty}_\bx W^{1,\infty}_\bk)' \big)$. The limit functions $f_\pm$ are the solutions to the Vlasov equation with regularized Coulomb interactive potential,
	\begin{equation*}
		\left\{
		\begin{aligned}
			\partial_t f_\pm(t,\bx,\bk) \pm \mathbf v(\bk) \cdot \nablax f_\pm(t,\bx,\bk) - \nablax V_\alpha \cdot \nablak f_\pm(t,\bx,\bk) &= 0,\\[4pt]
			f_\pm(t=0,\bx,\bk) &= \Tr[\Pi_\pm \cW_0](\bx,\bk),
		\end{aligned}
		\right.
	\end{equation*}
	where the regularized potential is given by
	\begin{equation*}
		V_\alpha(t,\bx)
		= V_\Gamma\!\left(\frac{\bx}{\ve}\right)
		+ (K_\alpha * \rho)(t,\bx),
		\qquad
		\rho(t,\bx)
		:= \lim_{\hbar\to0}\int_{\R^2} \big( f^\hbar_+(t,\bx,\bk) + f^\hbar_-(t,\bx,\bk) \big)\,\dd\bk.
	\end{equation*}
\end{corollary}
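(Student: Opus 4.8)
The plan is to verify that the regularized Coulomb kernel $K_\alpha = G_{\hbar^\alpha}*|\cdot|^{-1}$ fits into the machinery developed for Theorem~\ref{thm:massiv}, with the single modification that the Wiener-algebra bound on the interaction comes with an $\hbar$-dependent constant that must be tracked through the error estimate. First I would re-examine where Assumption~\textbf{(A1)} enters the proof of Theorem~\ref{thm:massiv}. It is used only to control the Hartree-related error term $\mathcal Q^\hbar[V_{\mathrm{int}},\cW^\hbar]$ via a bound of the schematic form $|\langle \mathcal Q^\hbar[V_{\mathrm{int}},\cW^\hbar] - (\nablax K*\rho^\hbar)\cdot\nablak(\cdot),\eta\phi\rangle| \lesssim \hbar\,\||\bk'|^2\widehat{K}(\bk')\hat\rho(\bk')\|_{L^1_{\bk'}}\,\|\phi\|_{W^{2,\infty}_\bk}\|\eta\|_{W^{1,\infty}_\bx}$, i.e.\ the second-order Taylor remainder of $V_{\mathrm{int}}(\bx\pm\tfrac{\hbar}{2}\by)$ inside \eqref{eq:Q-int}. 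So the only quantity I need to re-estimate is $\||\bk'|^2\widehat{K_\alpha}(\bk')\hat\rho^\hbar(\bk')\|_{L^1_{\bk'}}$.

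Next I would carry out the Fourier computation already sketched in the excerpt: $\widehat{K_\alpha}(\bk')=|\bk'|^{-1}e^{-\tfrac12\hbar^{2\alpha}|\bk'|^2}$, so that $\||\bk'|^2\widehat{K_\alpha}(\bk')\|_{L^1_{\bk'}}=\int_{\R^2}|\bk'|\,e^{-\tfrac12\hbar^{2\alpha}|\bk'|^2}\dd\bk' = C\hbar^{-3\alpha}$ (a routine Gaussian integral in polar coordinates; the exact constant is immaterial). Then I bound $\|\hat\rho^\hbar\|_{L^\infty_{\bk'}}\le \|\rho^\hbar\|_{L^1_\bx}=\|\psi^\hbar\|_{L^2_\bx}^2$, which by the $L^2$-conservation law (Lemma~\ref{lem:L2conservation}) and Assumption~\textbf{(A2)} is bounded by $C_0^2$ uniformly in $t$ and $\hbar$ --- crucially this step uses only the $L^2$ norm, not the $H^1$ norm, since the latter cannot be controlled uniformly in $\sigma=\hbar^\alpha$ (the $H^1$ bound in Lemma~\ref{lem:bdd-H1} would degrade as the kernel becomes singular). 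Feeding these two bounds back into the Hartree error estimate, the contribution of $\mathcal Q^\hbar[V_{\mathrm{int}},\cW^\hbar]$ to $\Err^\hbar_\pm$ is $O(\hbar\cdot\hbar^{-3\alpha})=O(\hbar^{1-3\alpha})$, while the external-potential and kinetic parts remain $O(\hbar(1+\ve^{-2}))$ exactly as before. Hence $\Err^\hbar_\pm\to 0$ in the stated dual topology provided $1-3\alpha>0$; combined with the (mild) extra room needed to pass the self-consistent term to the limit one gets the sufficient condition $0\le\alpha<\tfrac14$ stated in the corollary.

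Having the error bound, the convergence argument is identical to that of Theorem~\ref{thm:massiv}: extract a weak-$*$ limit $\cW_0$ of $\cW_0^\hbar$, pass to the limit in the weak formulation of \eqref{eq:massive-eff-transport} with $K$ replaced by $K_\alpha$, and identify the limit $f_\pm$ as the unique distributional solution of the limiting Vlasov equation with mean-field potential $V_\alpha = V_\Gamma(\cdot/\ve) + K_\alpha*\rho$. One point requiring a remark is that the limiting interaction kernel in this corollary is still $K_\alpha$ rather than the bare Coulomb kernel $|\cdot|^{-1}$: since $\sigma=\hbar^\alpha\to0$, the regularization disappears in the limit and one formally recovers the relativistic Vlasov--Poisson dynamics, but making that second limit rigorous would require a stability estimate for the Vlasov flow under perturbation of the kernel, which is beyond the scope here; I would state the corollary with $V_\alpha$ as written and add a sentence noting the formal Vlasov--Poisson limit.

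\textbf{Main obstacle.} The delicate point is not any single inequality but the bookkeeping of which norm of $\psi^\hbar$ (and hence of $\rho^\hbar$) is available uniformly: one must downgrade from $H^1_\bx$ to $L^2_\bx$ precisely because the kernel degenerates, and then check that the $L^1_{\bk'}$-weight $|\bk'|^2\widehat{K_\alpha}$ is still summable with a controlled power of $\hbar$ --- it is the competition between the $\hbar$ gained from the semiclassical Taylor remainder and the $\hbar^{-3\alpha}$ lost from the singular kernel that forces the threshold $\alpha<\tfrac14$ (and one should double-check whether the sharp threshold is $\tfrac14$ or $\tfrac13$ depending on exactly how many derivatives of $\phi$ and factors of $|\bk'|$ appear in the cleanest version of the Hartree error estimate; the paper's statement $\alpha<\tfrac14$ should be taken as the governing constraint).
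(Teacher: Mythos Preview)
Your proposal is essentially identical to the paper's own argument: the paper also isolates the Hartree error as the only term needing revision, downgrades from the $H^1_\bx$ bound to the conserved $L^2_\bx$ norm via $\|\hat\rho^\hbar\|_{L^\infty_{\bk'}}\le\|\psi_0\|_{L^2_\bx}^2$, and controls $\||\bk'|^2\widehat{K*G_\sigma}(\bk')\|_{L^1_{\bk'}}$ by a direct Gaussian computation. On the point you flagged, the paper obtains $\sigma^{-4}$ (writing the integrand as $|\bk|^2 e^{-\sigma^2|\bk|^2/2}$ rather than $|\bk|\,e^{-\sigma^2|\bk|^2/2}$), which is what produces the threshold $\alpha<\tfrac14$; your $\sigma^{-3}$ computation would in fact yield the slightly better range $\alpha<\tfrac13$, but either way the stated corollary is covered.
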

\medskip

\begin{remark}[Nonrelativistic limit of the massive case]\label{rem:semirel}
	For fixed mass $m>0$, it is natural to consider the non-relativistic limit $c\to\infty$.  
	Under suitable regularity and scaling assumptions, the solutions $f_\pm$ of the effective transport equation~\eqref{eq:massive-eff-transport} are expected to satisfy
	\begin{equation}\label{eq:Vlasov}
		\partial_t f_\pm
		\pm \frac{\bk}{m}\!\cdot\!\nabla_\bx f_\pm
		+ [\nabla_\bx V(\bx)]\!\cdot\!\nabla_\bk f_\pm
		= R_\pm(t,\bx,\bk),
	\end{equation}
	where the residual satisfies
	\[
	R_\pm(t,\bx,\bk) \;\longrightarrow\; 0
	\qquad \text{as } c\to\infty.
	\]
	In particular, in the non-relativistic limit one expects convergence (in an appropriate topology) to the classical Vlasov equation
	\begin{equation}\label{eq:nonrel-limit}
		\partial_t f_\pm
		\pm \frac{\bk}{m}\!\cdot\!\nabla_\bx f_\pm
		+ [\nabla_\bx V(\bx)]\!\cdot\!\nabla_\bk f_\pm
		= 0.
	\end{equation}
	Rigorous results in three dimensions can be found in~\cite{HongPankavich2025}.
	
	In our setting, however, certain kinetic error terms contain powers of $c$ that diverge as $c\to\infty$, so the non-relativistic limit is not directly accessible from our present estimates.  
	To the best of our knowledge, establishing a combined semiclassical and non-relativistic limit for the Dirac--Hartree dynamics therefore remains an interesting open problem.
	
	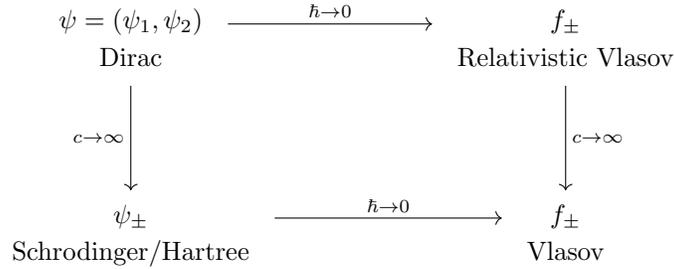
\begin{figure}[ht]
		\centering
		\[
		\begin{tikzcd}[column sep=6em, row sep=4em]
			{\raisebox{-1.5ex}{$\begin{array}{c} \psi=(\psi_1,\psi_2) \\[0.5ex] \text{Dirac} \end{array}$}} 
			\arrow[r, "\hbar \to 0"] \arrow[d, "c \to \infty"'] 
			& {\raisebox{-1.5ex}{$\begin{array}{c} f_\pm \\[0.5ex] \text{Relativistic Vlasov} \end{array}$}} 
			\arrow[d, "c \to \infty"] \\
			{\raisebox{-1.5ex}{$\begin{array}{c} \psi_\pm \\[0.5ex] \text{Schrodinger/Hartree} \end{array}$}} 
			\arrow[r, "\hbar \to 0"] 
			& {\raisebox{-1.5ex}{$\begin{array}{c} f_\pm \\[0.5ex] \text{Vlasov} \end{array}$}}
		\end{tikzcd}
		\]
		\caption{Diagram of the limiting process $\hbar \to 0$ and $c \to \infty$ for the massive case.}
		\label{fig:comm-diagram}
	\end{figure}
\end{remark}

\begin{remark}[Homogenization limit]
	In the formal regime $\ve \to 0$, if the periodic potential has zero cell average, i.e., $\int_{\mathcal C} V_\Gamma = 0$, then the fast oscillations of $V_\Gamma(x/\ve)$ are expected to average out, and the macroscopic band distributions $f_\pm$ should satisfy the Vlasov equation without an external potential.  
	This corresponds to the \emph{homogenization limit} for periodic media; see, for instance, \cite{GMMP1997}.
	However, in our setting, the kinetic error term \eqref{eq:weak-kinetic-merged} involves negative powers of $\ve$ within a time-dependent exponential factor.  
	As a consequence, the corresponding bounds are not uniform in the limit $\ve\to0$, and a homogenization limit cannot be extracted within our framework.
	By contrast, the two potential contributions remain uniformly controlled as $\ve\to0$ and do not exhibit such growth.
	A rigorous treatment of the homogenization limit in this context is left for future work.
\end{remark}

\section{Proof of Theorems \ref{thm:massiv} and \ref{thm:massless}} 
\label{sec:massive}

Applying the projection $\Tr [\Pi_\pm (\cdot)]$ to the Wigner equation \eqref{eq:Wigner}, we have
\begin{multline}
	\partial_t f_\pm^{\hbar}
	+ c \Tr\big[\Pi_\pm (\boldsymbol{\alpha}\!\cdot\!\nabla_{\bx} \cW^{\hbar})\big]
	+ \frac{1}{\ii\hbar}\Tr\big[\Pi_\pm[H(\bk),\cW^{\hbar}]\big]\\ - \Tr\big[\Pi_\pm ( \mathcal Q^\ve[V_\Gamma,\cW^{\hbar}])\big] - \Tr\big[\Pi_\pm ( \mathcal Q^\hbar[V_{\mathrm{int}},\cW^{\hbar}])\big] = 0,
\end{multline}
which directly reduces to
\begin{equation}\label{eq:proj-eqn}
	\partial_t f_\pm^{\hbar}
	+ c \Tr\big[\Pi_\pm (\boldsymbol{\alpha}\!\cdot\!\nabla_{\bx} \cW^{\hbar})\big] - \Tr\big[\Pi_\pm ( \mathcal Q^\ve[V_\Gamma,\cW^{\hbar}])\big] - \Tr\big[\Pi_\pm ( \mathcal Q^\hbar[V_{\mathrm{int}},\cW^{\hbar}])\big] = 0,
\end{equation}
by noticing the fact that the commutator term vanishes, i.e.,
\[
\Tr\big[\Pi_\pm[H,\cW^{\hbar}]\big]=\Tr\big[ [\Pi_\pm,H]\,\cW^{\hbar}\big]=0,
\]
as $\Pi_\pm$ commutes with $H(\bk)$.

Let us define the kinetic part in \eqref{eq:proj-eqn} as
\begin{equation}\label{def:kineticpart}
	\mathcal{K}_\pm^{\hbar} := c\,\Tr\!\big[\Pi_\pm(\balpha\!\cdot\!\nabla_\bx \cW^{\hbar})\big],
\end{equation}
and, for a class of ``suitable" test functions $\eta(\bx)$ and $\phi(\bk)$, we rewrite 
\begin{equation}\label{eq:error-kin}
	\langle \mathcal{K}_\pm^{\hbar}, \eta(\bx)\phi(\bk) \rangle
	= \,\langle \bv(\bk)\!\cdot\!\nabla_\bx f^{\hbar}_\pm, \eta(\bx)\phi(\bk)\rangle \;+\; \Err^{\mathrm{kin}}_\pm[\eta,\phi],
\end{equation}
where $\bv(\bk)=\tfrac{c^2}{E_m(\bk)}\bk$ in the massive case and $\bv(\bk)=c\,\hat \bk$ in the massless case.

We can also similarly define the corresponding error terms that are related with external potential in \eqref{eq:proj-eqn}, 
\begin{equation}\label{def:error-Qext}
	\langle \Tr\big[\Pi_\pm ( \mathcal Q^\ve[V_\Gamma,\cW^{\hbar}])\big], \eta(\bx)\,\phi(\bk)  \rangle = \langle \nablax V_\Gamma\cdot\nablak f^{\hbar}_\pm, \eta(\bx)\,\phi(\bk) \rangle  +  \Err^{\mathrm{ext}}_{\pm}[\eta, \phi],
\end{equation}
and with the Hartree-type interactive potential in \eqref{eq:proj-eqn}, 
\begin{equation}\label{def:error-Qint}
	\langle \Tr\big[\Pi_\pm ( \mathcal Q^\hbar[V_{\mathrm{int}},\cW^{\hbar}])\big], \eta(\bx)\,\phi(\bk)  \rangle = \langle \nablax V_{\mathrm{int}} \cdot\nablak f^{\hbar}_\pm, \eta(\bx)\,\phi(\bk) \rangle  +  \Err^{\mathrm{int}}_{\pm}[\eta, \phi].
\end{equation}

Hence, the total error term $\Err_\pm^{\hbar}(t,\bx,\bk)$ includes
\begin{equation}
	\langle \Err_\pm^{\hbar}(t,\bx,\bk), \eta(\bx) \phi(\bk) \rangle := \Err^{\mathrm{kin}}_\pm[\eta,\phi] + \Err^{\mathrm{ext}}_{\pm}[\eta, \phi] + \Err^{\mathrm{int}}_{\pm}[\eta, \phi].
\end{equation}

In the following subsections \footnote{Throughout this section, we denote $A \lesssim B$ if there exists $C>0$ which is independent of $\hbar$ such that $A \leq C B$, and $A \sim B$ if $A \lesssim B$ and $B \lesssim A$.}, we provide detailed estimates for the error terms appearing in \eqref{eq:error-kin}, \eqref{def:error-Qext}, and \eqref{def:error-Qint}, respectively.

\subsection{Estimate of the error term from the kinetic part}

In this subsection, we quantify the contribution and error terms from the kinetic part. Meanwhile, we also illustrate that the dominant contributions arise from the diagonal component $\cW^{\hbar}_{\mathrm{D}}$; in contrast, the off-diagonal component $\cW^{\hbar}_{\mathrm{OD}}$ contributes only to the higher-order remainder terms, which are shown to be generically small.

\begin{proposition}[Estimate of the kinetic error]
	\label{prop:weak-kinetic}
	Under assumptions \textnormal{\textbf{(A1)}}--\textnormal{\textbf{(A3)}} (and \textnormal{\textbf{(A4)}} for the massless case), for all $\eta\in W_\bx^{1,\infty}(\R^2)$ and $\phi\in H^1_\bk(\R^2)$ (with $\operatorname{supp}(\phi) \subset \Omega_\kappa$ if $m=0$), 
	\begin{equation}\label{pairing-kin}
		\langle \mathcal{K}_\pm^\hbar,\eta\,\phi \rangle
		=
		\langle \bv(\bk)\cdot\nabla_\bx f_\pm^\hbar,\eta\,\phi \rangle
		+ \Err_\pm^{\mathrm{kin}}[\eta,\phi],
	\end{equation}
	where
	\[
	\langle A,B\rangle
	:= \int_{\R^2}\int_{\R^2} \Tr\bigl[ A^\dagger(\bx,\bk) B(\bx,\bk) \bigr]\,\dd\bx\,\dd\bk,
	\]
	and the group velocity is $\bv(\bk)=\nabla_\bk E_m(\bk)$. The error term $\Err_\pm^{\mathrm{kin}}$ satisfies
	\begin{equation}\label{eq:weak-kinetic-merged}
		|\Err_\pm^{\mathrm{kin}}[\eta,\phi]|
		\le C_T\,\hbar\, \|\eta\|_{W^{1,\infty}_\bx }\|\phi\|_{H^1_\bk},
	\end{equation}
	where $C_T$ depends on $T$, the spectral gap, and on the potential norms, but is independent of~$\hbar$.
\end{proposition}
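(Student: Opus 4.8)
The plan is to turn $\mathcal K_\pm^\hbar$ into the advection term plus an explicit interband remainder, and then to bound that remainder by duality once the off-diagonal part of the Wigner matrix is known to stay of size $\hbar$.

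\emph{Step 1 (algebraic reduction).}
Since $\Pi_\pm(\bk)$ does not depend on $\bx$, the operator $\nablax$ commutes with it, so $\mathcal K_\pm^\hbar=c\,\nablax\!\cdot\Tr[\Pi_\pm\balpha\cW^\hbar]$. Inserting $\Id=\Pi_++\Pi_-$ immediately to the right of $\balpha$, splitting $\cW^\hbar=\cW^\hbar_{\mathrm D}+\cW^\hbar_{\mathrm{OD}}$, and using $\Pi_+\Pi_-=0$, the diagonal contribution is governed by the Feynman--Hellmann-type identity
\[
\Pi_\pm(\bk)\,(c\,\balpha)\,\Pi_\pm(\bk)=\pm\,\bv(\bk)\,\Pi_\pm(\bk),\qquad \bv(\bk)=\nablak E_m(\bk),
\]
which follows from $H_m\Pi_\pm=\pm E_m\Pi_\pm$ together with the identity $\Pi_\pm(\nablak\Pi_\pm)\Pi_\pm=0$ of \eqref{eq:proj-basic}; the remaining contribution involves only $\cW^\hbar_{\mathrm{OD}}$. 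This yields the exact splitting
\[
\mathcal K_\pm^\hbar=\pm\,\bv(\bk)\!\cdot\!\nablax f_\pm^\hbar\;+\;c\,\nablax\!\cdot\Tr\!\big[\Pi_\pm\,\balpha\,\Pi_\mp\,\cW^\hbar_{\mathrm{OD}}\big],
\]
so that, combined with $\partial_t f_\pm^\hbar$, the first term reproduces the transport part $\partial_t f_\pm^\hbar\pm\bv(\bk)\!\cdot\!\nablax f_\pm^\hbar$ of \eqref{eq:massive-eff-transport}, and $\Err_\pm^{\mathrm{kin}}[\eta,\phi]$ is precisely the pairing of the interband current $c\,\nablax\!\cdot\Tr[\Pi_\pm\balpha\Pi_\mp\cW^\hbar_{\mathrm{OD}}]$ against $\eta\phi$.

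\emph{Step 2 (duality bound for the remainder).}
Pairing against $\eta(\bx)\phi(\bk)$ and integrating by parts in $\bx$ moves the derivative onto $\eta$, at the cost $\|\nablax\eta\|_{L^\infty_\bx}$. For fixed $\bx$, the remaining $\bk$-integral is a pairing of $\cW^\hbar_{\mathrm{OD}}(\bx,\cdot)$ against the matrix-valued symbol $\phi\,\Pi_\mp\balpha\Pi_\pm$, which I estimate by $H^{-1}_\bk$--$H^1_\bk$ duality. By Lemma~\ref{lem:grad-Pi}, $\Pi_\mp\balpha\Pi_\pm$ has operator norm $\le1$ and $\bk$-gradient bounded by $C/(mc)$ in the massive case, resp.\ by $C/\kappa$ on $\Omega_\kappa\supset\operatorname{supp}\phi$ in the massless case, so multiplication by it is bounded on $H^1_\bk$ and $\|\phi\,\Pi_\mp\balpha\Pi_\pm\|_{H^1_\bk}\lesssim\|\phi\|_{H^1_\bk}$. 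After integrating in $\bx$, using the uniform-in-$\hbar$ bounds for $\psi^\hbar$ in $L^2_\bx\cap H^1_\bx$ (Lemmas~\ref{lem:L2conservation} and~\ref{lem:bdd-H1}), the bound \eqref{eq:weak-kinetic-merged} reduces to a single input: the interband Wigner mass stays small uniformly on $[0,T]$,
\[
\sup_{t\in[0,T]}\ \big\|\,\mathbf 1_{\Omega_\kappa}\,\cW^\hbar_{\mathrm{OD}}(t,\cdot,\cdot)\,\big\|_{L^1_\bx H^{-1}_\bk}\ \le\ C_T\,\hbar .
\]

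\emph{Step 3 (propagation of interband smallness --- the main obstacle).}
Establishing the last display is the heart of the proof: it upgrades the adiabatic hypothesis \textbf{(A3)} (with \textbf{(A4)} in the massless case) to a uniform-in-time estimate. The approach is to project \eqref{eq:Wigner} onto the off-diagonal block via $\Pi_\pm(\cdot)\Pi_\mp$; there the commutator $\tfrac1{\ii\hbar}[H_m,\cdot]$ becomes scalar multiplication by the purely kinematic frequency $\mp2E_m(\bk)/(\ii\hbar)$, whose modulus is $\ge2mc^2/\hbar$ in the massive case and $\ge2c\kappa/\hbar$ on $\Omega_\kappa$ in the massless case --- precisely where the spectral gap, resp.\ the cutoff, is needed. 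Conjugating out this fast rotation and using Duhamel's formula, the initial-data term is $O(\hbar)$ by \textbf{(A3)}, while the sources feeding the block --- the interband kinetic coupling $c\,\Pi_\pm\balpha\Pi_\mp\!\cdot\!\nablax\cW^\hbar_{\mathrm D}$ (controlled via $\|\psi^\hbar\|_{H^1_\bx}\lesssim1$) and the off-diagonal parts of $\mathcal Q^\ve$, $\mathcal Q^\hbar$ (whose $1/\hbar$ prefactor is compensated by the near-diagonality $\|\Pi_\pm(\bk)\Pi_\mp(\bk')\|_{\mathrm{op}}\lesssim|\bk-\bk'|$ from Lemma~\ref{lem:grad-Pi} together with the Fourier decay in \textbf{(A1)}) --- contribute a further $O(\hbar)$ after a single integration by parts in time against the non-stationary phase $\ee^{\mp2\ii E_m(\bk)t/\hbar}$. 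The delicate point, where essentially all the technical effort lies, is that neither this fast $\bk$-multiplier nor the $\bk$-dependent free transport preserves the $H^{-1}_\bk$ norm; this is reconciled with the Duhamel/stationary-phase-in-time argument by a dyadic $\bk$-frequency localization, or by a mildly twisted norm absorbing the derivative cost of the phase. Since the kinetic coupling $c\,\balpha\!\cdot\!\nablax$ itself carries no power of $\ve$, the resulting bound on $\Err_\pm^{\mathrm{kin}}$ is uniform in $\ve$, consistent with \eqref{eq:weak-kinetic-merged}; the $\ve^{-2}$ growth in the total error \eqref{est-total-massive} originates only from the separately estimated potential-induced terms.
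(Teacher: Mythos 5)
Your Step 1 is essentially identical to the paper's, including the Feynman--Hellmann-type identity $\Pi_\pm(c\,\balpha)\Pi_\pm=\pm\,\bv(\bk)\Pi_\pm$ (you in fact carry the $\pm$ sign correctly, which the paper's displayed identity omits) and the observation that the cross terms kill themselves so that only $\cW^\hbar_{\mathrm{OD}}$ feeds the remainder. Step 2 differs in a minor but reasonable way: you measure the off-diagonal Wigner mass in $L^1_\bx H^{-1}_\bk$, which matches the form of hypothesis \textnormal{\textbf{(A3)}} more directly, whereas the paper's Lemma~\ref{lem:OD-L1H-1} states an $L^2(\R^2_\bx\times\R^2_\bk)$ bound and concludes by Cauchy--Schwarz. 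Either choice makes the final duality estimate work once the smallness is available, and both yield the claimed $C_T\hbar\,\|\eta\|_{W^{1,\infty}_\bx}\|\phi\|_{H^1_\bk}$ bound.

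The genuine divergence is in Step 3, and there is also one concrete overstatement. The paper proves the interband smallness by invoking space-adiabatic perturbation theory at the operator level: a superadiabatic projector $P_N^\hbar$ nearly commuting with the full Hamiltonian, $\|P_N^\hbar-\Pi_\pm\|=O(\hbar)$, and propagation of near-diagonality for well-prepared data; the Hartree potential is treated as an effective time-dependent external field. You instead propose a direct PDE route: project the Wigner equation onto the off-diagonal block, conjugate out the fast phase $\ee^{\mp 2\ii E_m t/\hbar}$, Duhamel, and integrate by parts in time against the non-stationary phase. This is a viable alternative and is the more self-contained of the two, but both the paper's sketch and yours leave the hard technicalities implicit (yours: the incompatibility of the fast $\bk$-multiplier and $\bk$-dependent transport with the $H^{-1}_\bk$ topology, which you flag and propose to fix with a dyadic or twisted-norm device; the paper's: the passage from the operator-norm superadiabatic estimates to the claimed Wigner-level $L^2$ bound).

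The overstatement is your final claim that ``the resulting bound on $\Err^{\mathrm{kin}}_\pm$ is uniform in $\ve$''. This is not correct, and the paper itself says so in the Remark on the homogenization limit, which explicitly asserts that the kinetic error involves negative powers of $\ve$ and is therefore \emph{not} uniform as $\ve\to0$. The reason is visible in your own Duhamel sketch: the sources feeding $\cW^\hbar_{\mathrm{OD}}$ include the off-diagonal projection of $\mathcal Q^\ve[V_\Gamma,\cdot]$, whose near-diagonality gain is $\|\Pi_\pm(\bk)\Pi_\mp(\bk\pm\tfrac{\hbar}{2\ve}\bmu)\|\lesssim \tfrac{\hbar}{\ve}|\bmu|$; after cancelling the $1/\hbar$ prefactor you are still left with a factor $1/\ve$ (and an additional $\ve^{-1}$ from the boundary/derivative terms produced by the time integration by parts). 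So $\cW^\hbar_{\mathrm{OD}}$, and hence $\Err^{\mathrm{kin}}_\pm$, inherits $\ve$-dependence through the potential norms, exactly as in the paper's phrasing ``$C_T$ depends on \ldots the potential norms''. You should delete the uniformity-in-$\ve$ claim and instead absorb the $\ve$-dependence into $C_T$, consistent with \eqref{eq:weak-kinetic-merged} as stated.
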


\noindent\emph{Proof.}
\textbf{Step 1: Decomposition.}
We decompose $\cW^{\hbar} = \cW^{\hbar}_{\mathrm{D}} + \cW^{\hbar}_{\mathrm{OD}}$ into diagonal and off-diagonal parts. For the diagonal part $\cW^{\hbar}_{\mathrm{D}}$, we first consider the key identity
\begin{equation}\label{eq:Pi-alpha-identity}
	\Pi_\pm\alpha_j\Pi_\pm = \frac{1}{c} v_j(\bk)\,\Pi_\pm,
	\qquad j=1,2,
\end{equation}
which follows from the explicit form of the spectral projectors \eqref{eq:proj-general} and the relation between the Dirac matrices $\balpha$ and the group velocity. Here, in the massive case,
\[
v_j(\bk) = \frac{c^2\,k_j}{E_m(\bk)},
\qquad
\text{so that}\quad
\bv(\bk) = \frac{c^2\,\bk}{E_m(\bk)},
\]
and in the massless case,
\[
v_j(\bk) = c\,\frac{k_j}{|\bk|},
\qquad
\text{so that}\quad
\bv(\bk) = c\,\hat{\bk}.
\]

Using \eqref{eq:Pi-alpha-identity}, the diagonal component yields
\begin{align*}
	\Tr[\Pi_\pm\balpha\,\cW^{\hbar}_{\mathrm{D}}]
	&= \sum_{j=1}^2 \Tr[\Pi_\pm\alpha_j\Pi_\pm\,(\Pi_\pm\cW^{\hbar}\Pi_\pm)]
	+ \sum_{j=1}^2 \Tr[\Pi_\pm\alpha_j\Pi_\mp\,(\Pi_\mp\cW^{\hbar}\Pi_\mp)]\\
	&= \frac{1}{c}\sum_{j=1}^2 v_j(\bk)\,\Tr[\Pi_\pm\cW^{\hbar}]
	= \frac{1}{c}\bv(\bk)\cdot\mathbf{e}_j\,f_\pm^{\hbar},
\end{align*}
where $f_\pm^{\hbar}:=\Tr[\Pi_\pm\cW^{\hbar}]$ is the projected density \eqref{def:fpm}. Here, we used that $\Pi_\pm\alpha_j\Pi_\mp = 0$ when restricted to the diagonal blocks.

Therefore, by doing integration by part again, we can obtain
\begin{equation}
	\langle \mathcal{K}_\pm^{\hbar},\eta(\bx)\,\phi(\bk) \rangle
	=\big\langle \bv(\bk)\cdot\nabla_\bx f^{\hbar}_\pm,\eta(\bx)\phi(\bk)\big\rangle
	+\Err^{\mathrm{kin}}_\pm[\eta,\phi],
\end{equation}
where the error term $\Err^{\mathrm{kin}}_\pm$ originates from the off-diagonal part $\cW^{\hbar}_{\mathrm{OD}}$ of $\cW^{\hbar}$:
\begin{equation}\label{eq:error-kin-def}
	\Err^{\mathrm{kin}}_\pm[\eta,\phi]
	= -\,c\,\big\langle
	\Tr[\Pi_\pm\balpha\,\cW^{\hbar}_{\mathrm{OD}}],
	\nabla_\bx\eta(\bx)\,\phi(\bk)
	\big\rangle.
\end{equation}

\textbf{Step 2: Interband smallness.}
To bound the error term \eqref{eq:error-kin-def}, we rely on adiabatic decoupling between the positive and negative energy bands. A direct energy estimate for the off-diagonal evolution does not, by itself, yield $\mathcal{O}(\hbar)$ bound, since the source term contains $\mathcal{O}(1)$ contribution coming from the Berry connection. However, the rapid oscillation of the phase $\ee^{\pm 2 \ii E_m t/\hbar}$ implies that the net effect is small on time scales of order one. The following lemma is used to illustrate the "small" contribution of the off-diagonal part.

\begin{lemma}[Adiabatic decoupling]\label{lem:OD-L1H-1}
	Under assumptions \textnormal{\textbf{(A1)}}--\textnormal{\textbf{(A3)}} (and \textnormal{\textbf{(A4)}} if $m=0$), there exists a constant $C_T$ independent of $\hbar$ such that
	\begin{equation}\label{eq:L2-OD-est}
		\sup_{t\in[0,T]} \|\cW_{\mathrm{OD}}^\hbar(t)\|_{L^2(\R^2_\bx \times \R^2_\bk )} \le C_T \, \hbar.
	\end{equation}
	Consequently, for any $\eta \in L^2_\bx$ and $\phi \in L^2_\bk$,
	\begin{equation}\label{eq:weak-OD-pairing}
		|\langle \cW_{\mathrm{OD}}^\hbar(t), \eta\,\phi \rangle|
		\le C_T \, \hbar \, \|\eta\|_{L^2_\bx}\|\phi\|_{L^2_\bk}.
	\end{equation}
\end{lemma}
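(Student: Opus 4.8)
The plan is to derive a closed evolution equation for $\cW^\hbar_{\mathrm{OD}}$, pass to an interaction picture that removes the stiff $\hbar^{-1}$ term, and then read off the $\mathcal O(\hbar)$ size from a Duhamel representation, using the initial smallness \textbf{(A3)} together with the rapid oscillation generated by the spectral gap. Concretely, I would first conjugate the Wigner equation~\eqref{eq:Wigner} by the projectors and write $\cW^\hbar_{+-}:=\Pi_+\cW^\hbar\Pi_-$ (and symmetrically $\cW^\hbar_{-+}:=\Pi_-\cW^\hbar\Pi_+$). The commutator term produces the stiff contribution $\frac{1}{\ii\hbar}\Pi_+[H_m,\cW^\hbar]\Pi_-=\frac{2E_m(\bk)}{\ii\hbar}\cW^\hbar_{+-}$, from $H_m\Pi_\pm=\pm E_m\Pi_\pm$. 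Using~\eqref{eq:Pi-alpha-identity} and the rank-one identity $\Pi_\mp\cW^\hbar\Pi_\mp=f^\hbar_\mp\,\Pi_\mp$, the transport term decomposes as
\[
c\,\Pi_+\balpha\!\cdot\!\nabla_\bx\cW^\hbar\,\Pi_-
=\bv(\bk)\!\cdot\!\nabla_\bx\cW^\hbar_{+-}
+c\,(\Pi_+\balpha\Pi_-)\!\cdot\!\nabla_\bx f^\hbar_-,
\]
the second term being the Berry-connection source, which is only $\mathcal O(1)$ and not $\mathcal O(\hbar)$. The projected potential terms $\Pi_+\big(\mathcal Q^\ve[V_\Gamma,\cW^\hbar]\big)\Pi_-$ and $\Pi_+\big(\mathcal Q^\hbar[V_{\mathrm{int}},\cW^\hbar]\big)\Pi_-$ add a further source controlled uniformly on $[0,T]$ by the Wiener-algebra decay in \textbf{(A1)}, the uniform $H^1_\bx$ bound on $\psi^\hbar$ (Lemma~\ref{lem:bdd-H1}), and the gradient bounds for $\Pi_\pm$ (Lemma~\ref{lem:grad-Pi})---in the massless case using the cutoff to $\Omega_\kappa$, on which $\nabla_\bk\Pi^0_\pm$ stays bounded. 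Collecting this, $\cW^\hbar_{+-}$ obeys
\[
\partial_t\cW^\hbar_{+-}+\bv(\bk)\!\cdot\!\nabla_\bx\cW^\hbar_{+-}+\frac{2E_m(\bk)}{\ii\hbar}\,\cW^\hbar_{+-}=\mathcal S^\hbar_{+-},
\]
with $\mathcal S^\hbar_{+-}$ bounded uniformly on $[0,T]$ in terms of the a priori estimates above (up to explicit negative powers of $\ve$), and symmetrically for $\cW^\hbar_{-+}$ with phase $-2E_m/(\ii\hbar)$.

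I would then remove the stiff term by setting $\cW^\hbar_{+-}=\ee^{2\ii E_m(\bk)t/\hbar}\,\mathcal G^\hbar_{+-}$; since $\ee^{2\ii E_m t/\hbar}$ depends only on $\bk$ it commutes with $\bv(\bk)\!\cdot\!\nabla_\bx$, and $\mathcal G^\hbar_{+-}$ solves $\partial_t\mathcal G^\hbar_{+-}+\bv(\bk)\!\cdot\!\nabla_\bx\mathcal G^\hbar_{+-}=\ee^{-2\ii E_m t/\hbar}\mathcal S^\hbar_{+-}$. As the transport operator is skew-adjoint on $L^2$ (the velocity is $\bx$-independent), Duhamel along the straight characteristics expresses $\mathcal G^\hbar_{+-}(t)$ as the free transport of $\mathcal G^\hbar_{+-}(0)$ plus $\int_0^t\ee^{-2\ii E_m s/\hbar}\,(\text{transported }\mathcal S^\hbar_{+-}(s))\,\dd s$. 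The first piece is $\mathcal O(\hbar)$ by \textbf{(A3)}. In the integral I would integrate by parts in $s$ via $\ee^{-2\ii E_m s/\hbar}=\tfrac{\dd}{\dd s}\!\big(\tfrac{\ii\hbar}{2E_m}\ee^{-2\ii E_m s/\hbar}\big)$; since $\hbar/(2E_m)\le\hbar/(2mc^2)$ in the massive case and $\le\hbar/(2c\kappa)$ on $\Omega_\kappa$ in the massless case, every resulting term carries an explicit factor $\hbar$. The boundary terms are then $\mathcal O(\hbar)$, and the remaining integral contains $\partial_s$ of $\mathcal S^\hbar_{+-}$ transported along characteristics, which I would rewrite using the equations satisfied by $f^\hbar_\pm$ and by $\cW^\hbar$ so as to close a Gr\"onwall inequality for $\sup_{[0,T]}\|\cW^\hbar_{\mathrm{OD}}\|_{L^2(\R^2_\bx\times\R^2_\bk)}$; this gives~\eqref{eq:L2-OD-est}, and~\eqref{eq:weak-OD-pairing} then follows from Cauchy--Schwarz, $|\langle\cW^\hbar_{\mathrm{OD}},\eta\,\phi\rangle|\le\|\cW^\hbar_{\mathrm{OD}}\|_{L^2}\|\eta\|_{L^2_\bx}\|\phi\|_{L^2_\bk}$.

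The main obstacle is the integration by parts in time. Differentiating $\mathcal S^\hbar_{+-}$ in $s$ brings in time derivatives of the band densities $f^\hbar_\pm$ and, through $\mathcal Q^\ve$ and $\mathcal Q^\hbar$, of $\cW^\hbar$ itself, which a priori costs one more $\bx$-derivative than the $H^1_\bx$ bound supplies; moreover the $\bk$-shift operators in $\mathcal Q^\ve,\mathcal Q^\hbar$ do not commute with the $\bk$-dependent phase $\ee^{-2\ii E_m(\bk)t/\hbar}$, so translating $\bk$ in the interaction picture generates extra oscillatory factors that must be tracked. Arranging the estimate so that the Gr\"onwall loop closes with only the a priori regularity at hand---while the explicit but harmless negative powers of $\ve$ are absorbed into $C_T$---is the technical heart of the argument. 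A second difficulty, specific to the massless case $m=0$, is that the gap $2E_0(\bk)=2c|\bk|$ degenerates at the band crossing, so the whole scheme runs only on $\Omega_\kappa$: \textbf{(A4)} places the initial data there, and one must check that the part of $\cW^\hbar_{\mathrm{OD}}$ paired against test functions $\phi$ with $\operatorname{supp}\phi\subset\Omega_\kappa$ is governed, up to controllable remainders, by the dynamics on a slightly enlarged neighborhood on which $\Pi^0_\pm$ remain smooth.
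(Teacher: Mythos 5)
Your plan takes a genuinely different route from the paper's. The paper's (sketched) proof invokes space-adiabatic perturbation theory in the style of Panati--Spohn--Teufel: for a gapped Dirac symbol with the Hartree potential treated as a slowly varying time-dependent external field, one constructs super-adiabatic projectors $P^\hbar_N$ that commute with the Hamiltonian up to $O(\hbar^N)$ and differ from $\Pi_\pm$ by $O(\hbar)$ in operator norm, so that well-prepared data remains block-diagonal in the super-adiabatic basis and the $\Pi_\pm$-off-diagonal block inherits an $O(\hbar)$ bound. Your route is a direct Duhamel/non-stationary-phase argument on the equation for $\cW^\hbar_{+-}$: filter the stiff phase $\ee^{\pm 2\ii E_m(\bk)t/\hbar}$, integrate by parts in $t$ using the gap $2E_m(\bk)\ge 2mc^2$ (resp.\ $2c\kappa$ on $\Omega_\kappa$), and close by Gr\"onwall. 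Both strategies are legitimate; yours is more elementary and more transparent about the $\ve$-dependence, while the paper outsources the bookkeeping to an established theory, at the cost of requiring symbolic regularity somewhat stronger than \textbf{(A1)} literally provides (a point the paper's sketch does not discuss).

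However, as written your argument has a genuine gap, which you yourself name but do not resolve. After integrating by parts in $s$, the surviving integral involves $\partial_s\mathcal S^\hbar_{+-}$; since $\mathcal S^\hbar_{+-}$ contains $\nablax f^\hbar_\mp$ together with $\bk$-shifted copies of $\cW^\hbar$ from $\mathcal Q^\ve$ and $\mathcal Q^\hbar$, its time derivative produces $\nablax^2 f^\hbar_\mp$ via the transport equation---one more spatial derivative than Lemma~\ref{lem:bdd-H1} supplies. Moreover, the momentum shifts $\bk\mapsto\bk\pm\tfrac{\hbar}{2\ve}\bmu$ inside $\mathcal Q^\ve$ do not commute with the $\bk$-dependent phase: the discrepancy $\ee^{\ii t[E_m(\bk\pm\hbar\bmu/2\ve)-E_m(\bk)]/\hbar}$ is unimodular but of order $\ee^{\ii O(t/\ve)}$, so the interaction picture generates additional fast $t$-oscillations that must be tracked before the Gr\"onwall loop can close. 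Without a concrete mechanism to avoid the derivative loss---for instance, running the oscillatory argument at the level of the off-diagonal block of the density matrix $\gamma^\hbar=\psi^\hbar(\psi^\hbar)^\dagger$ in Hilbert--Schmidt norm, where the Berry source enters through the bounded commutator $[\,\Pi_\pm(-\ii\hbar\nabla),\,V\,]$ of size $O(\hbar)$ and only $H^1_\bx$ is needed, and then passing back to $\cW^\hbar_{\mathrm{OD}}$---your proposal is a plausible outline rather than a proof of \eqref{eq:L2-OD-est}.
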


\begin{proof}[Sketch of Proof]
    We briefly recall the argument and refer to \cite{PanatiSpohnTeufel2003,Teufel2003} for details.
    
    We remark that although the adiabatic theory in \cite{Teufel2003} is formulated for linear Hamiltonians, it applies here via a self-consistent argument. We treat the Hartree potential $V_{\mathrm{int}}(t,\bx)$ as an effective time-dependent external potential. Thanks to the regularity of the kernel $K$ in \textnormal{\textbf{(A1)}} and the conservation laws, the time derivative $\partial_t V_{\mathrm{int}} = (\nabla K) * \mathbf{J}$ remains uniformly bounded (where $\mathbf{J} = \langle \psi^\hbar, \balpha \psi^\hbar \rangle$ is the current density). Thus, the effective time-dependent Hamiltonian satisfies the slow-variation regularity conditions required for the linear space-adiabatic expansion.

    The Hamiltonian $H_m(\bk)$ has a spectral gap separating the positive and negative energy bands (bounded away from zero by $2m c^2$ in the massive case, or by $2c\kappa$ in the massless case under \textnormal{\textbf{(A4)}}). Space-adiabatic perturbation theory shows that, for each $N\in\mathbb N$, one can construct a ``super-adiabatic'' family of projectors $P_N^\hbar$, possibly depending on time $t$ but does not diverge for $t\in[0,T]$, such that
    \[
    \bigl\|[H_m + V(t),P_N^\hbar]\bigr\|_{\mathcal{B}(L^2)}
    \le C_{N,T}\,\hbar^N
    \qquad\text{for } t\in[0,T],
    \]
    and $P_N^\hbar$ approximates the spectral projectors $\Pi_\pm$ with
    \[
    \bigl\|P_N^\hbar - \Pi_\pm\bigr\|_{\mathcal{B}(L^2)}
    \le C_{N,T}\,\hbar.
    \]
    For initial data satisfying the well-preparedness condition \textnormal{\textbf{(A3)}}, the off-diagonal component with respect to $P_N^\hbar$ remains of order $\hbar^N$ for $t\in[0,T]$. Since $P_N^\hbar$ and $\Pi_\pm$ differ by $\mathcal{O}(\hbar)$ in operator norm, the corresponding off-diagonal component with respect to the fixed projectors $\Pi_\pm$ is of order $\hbar$. Passing from the density matrix to its Wigner transform preserves the Hilbert--Schmidt norm (up to a constant factor), hence, the estimate \eqref{eq:L2-OD-est} follows, and \eqref{eq:weak-OD-pairing} is an immediate consequence of the Cauchy--Schwarz inequality.
\end{proof}

\textbf{Step 3: Conclusion.}
We estimate the $L^2$–norm of the test function $\eta(\bx)\phi(\bk)$. Since $\|\Pi_\pm(\bk)\|$ and $\|\balpha\|$ are uniformly bounded and $\nabla_\bx \eta\in L^\infty_\bx$,
applying Lemma~\ref{lem:OD-L1H-1} with $\Psi=\tilde{\Phi}$ gives
\[
|\Err_\pm^{\mathrm{kin}}[\eta,\phi]|
= |\langle \cW_{\mathrm{OD}}^\hbar,\tilde{\Phi}\rangle|
\le C_T\,\hbar\,\|\tilde{\Phi}\|_{L^2}
\le C_T\,\hbar\,
\|\eta\|_{W^{1,\infty}_\bx}\,\|\phi\|_{H^1_\bk},
\]
which is \eqref{eq:weak-kinetic-merged}.
\qed

\subsection{Estimate of error term from external potential part}

In this subsection, we illustrate that the external potential related term $\Tr\big[\Pi_\pm ( \mathcal Q^\ve[V_\Gamma,\cW^{\hbar}])\big]$ weakly converges to the force term $-\nablax V_\Gamma\cdot\nablak f^{\hbar}_\pm$, similar to the semiclassical limit in the case of the Schr\"odinger dynamics. This behavior remains unchanged in the Dirac case, and the following estimate is presented for clarity.

\begin{proposition}[Estimate of external potential related error]\label{prop:error-Vext}
	For the external potential $V_\Gamma$ in \textnormal{\textbf{(A1)}} with $\ve\in(0,1]$, let $\psi^{\hbar}$ be the solution to \eqref{eq:massive-dirac-scaled} (or \eqref{eq:massless-dirac-scaled}) satisfying
	the assumptions in Theorem \ref{thm:massiv} (or \ref{thm:massless}). 
	Then, there exists a constant $C$ independent of $\hbar$ such that, for any $\eta\in L^\infty_\bx$ and 
	$\phi\in W^{2,\infty}_\bk$,
	\begin{equation}\label{eq:potential-weak-massive}
		\langle \Tr\big[\Pi_\pm ( \mathcal Q^\ve[V_\Gamma,\cW^{\hbar}])\big], \, \eta(\bx)\,\phi(\bk)  \rangle = \langle \nablax V_\Gamma\cdot\nablak f^{\hbar}_\pm, \,\eta(\bx)\,\phi(\bk) \rangle  +  \Err^{\mathrm{ext}}_{\pm}[\eta, \phi],
	\end{equation}
	with
	\begin{equation}\label{est:error-Vext}
		\left| \Err^{\mathrm{ext}}_{\pm}[\eta, \phi] \right| \;\le\; C \,\frac{\hbar}{\ve^2}\,\|\eta\|_{L^\infty_\bx}\,\|\phi\|_{W^{2,\infty}_\bk},
	\end{equation}
	uniformly in $t\in[0,T]$.
	Consequently, we have
	\begin{equation}
		\Tr\big[\Pi_\pm ( \mathcal Q^\ve[V_\Gamma,\cW^{\hbar}])\big] \to \nablax V_\Gamma\cdot\nablak f^{\hbar}_\pm \quad \text{as} \quad  \hbar \to 0,
		\quad \text{in } C\big([0,T]; (L^\infty_\bx W^{2,\infty}_\bk)' \big).
	\end{equation}
\end{proposition}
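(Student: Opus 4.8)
The plan is to start from the Bloch--Fourier representation of $\mathcal Q^\ve[V_\Gamma,\cW^\hbar]$ in the third line of~\eqref{eq:Q-L}, namely the sum over the dual lattice $\bmu\in\Gamma^*$ of the finite differences $\cW^\hbar(t,\bx,\bk\pm\tfrac{\hbar}{2\ve}\bmu)$ weighted by $\ee^{\ii\bmu\cdot\bx/\ve}\hat V_\Gamma(\bmu)$. After pairing against $\eta(\bx)\phi(\bk)$ and taking the trace with $\Pi_\pm(\bk)$, I would change variables in $\bk$ inside each term of the sum so that the test function picks up $\phi(\bk\mp\tfrac{\hbar}{2\ve}\bmu)$ and the projector $\Pi_\pm$ picks up the same shift. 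The key is a second-order Taylor expansion in the small parameter $\tfrac{\hbar}{2\ve}|\bmu|$: the difference quotient $\tfrac{\ii}{\hbar}\bigl[g(\bk+\tfrac{\hbar}{2\ve}\bmu)-g(\bk-\tfrac{\hbar}{2\ve}\bmu)\bigr]$ equals $\tfrac{\ii}{\ve}\bmu\cdot\nabla_\bk g(\bk)$ plus a remainder controlled by $\tfrac{\hbar}{\ve^2}|\bmu|^3\,\|D^2 g\|_{L^\infty}$ applied to $g = \Tr[\Pi_\pm(\cdot)\cW^\hbar]\,\phi = f_\pm^\hbar\,\phi$ (treating the $\Pi_\pm$-shift and the $\phi$-shift together so that only $\|\phi\|_{W^{2,\infty}_\bk}$ and the boundedness of $\Pi_\pm$ and its derivatives from Lemma~\ref{lem:grad-Pi} enter, not derivatives of $\cW^\hbar$).

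Next I would identify the principal term. Summing $\tfrac{\ii}{\ve}\bmu\cdot\ee^{\ii\bmu\cdot\bx/\ve}\hat V_\Gamma(\bmu)$ over $\bmu\in\Gamma^*$ reconstructs exactly $\nabla_\bx\bigl[V_\Gamma(\bx/\ve)\bigr]$ by termwise differentiation of the Fourier series; pairing with $\eta(\bx)$ and with $\nabla_\bk f_\pm^\hbar\cdot(\text{stuff})$ and integrating by parts in $\bk$ recovers the claimed limiting form $\langle\nabla_\bx V_\Gamma\cdot\nabla_\bk f_\pm^\hbar,\eta\,\phi\rangle$ in~\eqref{eq:potential-weak-massive}. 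For this rearrangement to be legitimate I need the Fourier series and the $\bmu$-sum to converge absolutely after differentiation, which is precisely where Assumption~\textbf{(A1)} enters: $V_\Gamma\in A^2(\Gamma)$ gives $\sum_\bmu|\bmu|^2|\hat V_\Gamma(\bmu)|<\infty$, more than enough to justify one derivative on the principal term and to absorb the remainders.

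The remainder collects two pieces: the genuine second-order Taylor error, bounded by $\tfrac{\hbar}{\ve^2}\sum_\bmu|\bmu|^3|\hat V_\Gamma(\bmu)|\,\|\eta\|_{L^\infty_\bx}\,\|\phi\|_{W^{2,\infty}_\bk}\,\sup_t\|\cW^\hbar(t)\|_{L^1_{\bx,\bk}}$, and cross terms where one shift hits $\Pi_\pm$ and the complementary shift hits $\phi$, which are of the same order. The uniform $L^1$ (or $L^2$, via the uniform $H^1_\bx$ bound of Lemma~\ref{lem:bdd-H1} and Cauchy--Schwarz on the relevant compact $\bk$-support) control of $\cW^\hbar$ in $t\in[0,T]$ supplies the $\hbar$-independent constant. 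Collecting, one gets~\eqref{est:error-Vext} with $C\sim\sum_{\bmu\in\Gamma^*}|\bmu|^3|\hat V_\Gamma(\bmu)|$ times the conserved norms; strictly speaking this needs $A^3$ rather than $A^2$, so in the write-up I would either note that the weighted Wiener-algebra symbolic calculus of~\cite{BMP2001} upgrades $A^2$ to handle this difference via the finite-difference (rather than pointwise-Taylor) estimate, or track that the finite-difference remainder only costs two extra powers of $|\bmu|$ against $|\hat V_\Gamma(\bmu)|$ in a way that $A^2$ already controls once one does not overcount the $\bk$-shift. The final convergence statement in $C([0,T];(L^\infty_\bx W^{2,\infty}_\bk)')$ is then immediate by letting $\hbar\to0$ with $\ve$ fixed.

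\textbf{Main obstacle.} The delicate point is the bookkeeping of the two simultaneous shifts — one inside $\Pi_\pm(\bk)$ and one inside $\phi(\bk)$ — under the $\bk$-change of variables: one must avoid producing spurious $\hbar/\ve$-order terms that do not cancel, and must verify that the $\Pi_\pm$-shift contribution, after the change of variables, reassembles into $f_\pm^\hbar$ rather than a different projected density. Handling this cleanly requires expanding $\Pi_\pm(\bk\pm\tfrac{\hbar}{2\ve}\bmu)=\Pi_\pm(\bk)\pm\tfrac{\hbar}{2\ve}\bmu\cdot\nabla_\bk\Pi_\pm(\bk)+O\bigl(\tfrac{\hbar^2}{\ve^2}|\bmu|^2\bigr)$ and checking that the first-order $\Pi_\pm$-correction, paired with the zeroth-order $\phi$, combines with the symmetric term (zeroth-order $\Pi_\pm$, first-order $\phi$) to yield exactly $\nabla_\bk f_\pm^\hbar$ via $\nabla_\bk\Tr[\Pi_\pm\cW^\hbar]=\Tr[(\nabla_\bk\Pi_\pm)\cW^\hbar]+\Tr[\Pi_\pm\nabla_\bk\cW^\hbar]$ — i.e.\ the product rule is respected, no $\cW^\hbar$-derivatives survive in the principal term, and the gradient bounds of Lemma~\ref{lem:grad-Pi} (uniform in the massive case, singular like $1/|\bk|$ but harmless on $\Omega_\kappa$ in the massless case) keep the remainder at size $\hbar/\ve^2$.
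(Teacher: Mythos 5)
Your plan follows the same skeleton as the paper's proof: start from the Bloch--Fourier representation in the last line of~\eqref{eq:Q-L}, Taylor-expand the symmetric finite difference to second order, identify the principal force term by summing the Fourier series, move $\nabla_\bk^2$ onto $\phi$ by integration by parts, and close with H\"older. The paper does this without the explicit $\bk$-change of variables: it writes the Taylor expansion~\eqref{eq:tr-W-diff} directly for the combined quantity $g(\bk):=\Tr[\Pi_\pm(\bk)\cW^\hbar(t,\bx,\bk)]$ evaluated at shifted arguments. Your change-of-variables step, which transfers the shift onto $\Pi_\pm$ and $\phi$ rather than $\cW^\hbar$, is a more transparent way of getting to a position where the Taylor expansion in $\bk$ only hits objects ($\Pi_\pm$, $\phi$) whose $\bk$-regularity you actually control, rather than $\cW^\hbar$ itself. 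This is a genuine improvement in organization, not merely a stylistic choice.

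Two points, however, need correction. First, your hedging about needing $A^3$ instead of $A^2$ is a miscount. The symmetric finite difference $g(\bk+\delta)-g(\bk-\delta)=2\delta\cdot\nabla_\bk g(\bk)+R$ has integral remainder $R=O\bigl(|\delta|^2\|\nabla^2_\bk g\|_{L^\infty}\bigr)$ with $\delta=\tfrac{\hbar}{2\ve}\bmu$, so after the overall $\tfrac{\ii}{\hbar}$ prefactor the remainder contributes $\tfrac{\hbar}{\ve^2}\,|\bmu|^2\,|\hat V_\Gamma(\bmu)|$, not $|\bmu|^3$; the summability $\sum_{\bmu}|\bmu|^2|\hat V_\Gamma(\bmu)|<\infty$ is exactly what $V_\Gamma\in A^2(\Gamma)$ gives, and no upgrade to $A^3$ is required. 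This matches the paper's estimate~\eqref{est:Q-ext-error}.

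Second, and more importantly, your Main Obstacle paragraph asserts that the first-order $\Pi_\pm$-correction and the first-order $\phi$-correction ``combine via the product rule to yield exactly $\nabla_\bk f_\pm^\hbar$'', and that ``no $\cW^\hbar$-derivatives survive''. Work it out: after the change of variables, the first-order term in $\delta$ is $\Tr[(\nabla_\bk\Pi_\pm)\cW^\hbar]\,\phi + f_\pm^\hbar\,\nabla_\bk\phi$. Integrating by parts in $\bk$ gives $\bigl(\Tr[(\nabla_\bk\Pi_\pm)\cW^\hbar] - \nabla_\bk f_\pm^\hbar\bigr)\phi = -\Tr[\Pi_\pm\nabla_\bk\cW^\hbar]\,\phi$, which is $\nabla_\bk f_\pm^\hbar\,\phi$ \emph{minus} the extra piece $\Tr[(\nabla_\bk\Pi_\pm)\cW^\hbar]\,\phi$. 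That extra piece is not zero and does not vanish by the product rule alone. What actually saves you is the identity $\Pi_\pm(\nabla_\bk\Pi_\pm)\Pi_\pm=0$ (Lemma~\ref{lem:proj-derivative}): it implies $\nabla_\bk\Pi_\pm$ is purely off-diagonal, hence $\Tr[(\nabla_\bk\Pi_\pm)\cW^\hbar]=\Tr[(\nabla_\bk\Pi_\pm)\cW^\hbar_{\mathrm{OD}}]$, which by Lemma~\ref{lem:OD-L1H-1} and assumption \textnormal{\textbf{(A3)}} is $O(\hbar)$ uniformly on $[0,T]$. Multiplied by the $\tfrac{1}{\ve}\bmu$ from the principal-term reconstruction, this extra piece contributes $O(\hbar/\ve)$, which fits inside the stated $O(\hbar/\ve^2)$ error budget. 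You should invoke this explicitly rather than claiming the product rule alone gives $\nabla_\bk f_\pm^\hbar$. Note that the paper's own proof glosses over the same point: the identity~\eqref{eq:tr-W-diff} applies the Taylor expansion to $\Tr[\Pi_\pm\cW^\hbar]$ with \emph{both} factors shifted, whereas $\mathcal Q^\ve$ shifts only $\cW^\hbar$ — the discrepancy is precisely the $\Tr[(\nabla_\bk\Pi_\pm)\cW^\hbar]$ term you should be tracking, and the off-diagonal smallness is what rescues the estimate in both arguments.
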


\begin{proof}
	
	We recognize the contribution of the external-related term by using the following formula of $\mathcal{Q}[V_\Gamma, \cW^{\hbar}]$ as in the last line of \eqref{eq:Q-L}: for $\hbar/\ve^2 \ll 1$,
	\begin{equation}\label{Q-ext-gamma}
		\begin{aligned}
			\mathcal Q^\ve[V_\Gamma, \cW^{\hbar}](t,\bx,\bk)= &\frac{\ii}{\hbar} \sum_{\bmu \in \Gamma^*} \ee^{\ii \bmu \cdot \frac{\bx}{\ve}} \hat{V}_{\Gamma}(\bmu) \left[ \cW^{\hbar} \left(t,\bx, \bk +\frac{\hbar}{2\ve}\bmu \right) - \cW^{\hbar} \left(t,\bx, \bk -\frac{\hbar}{2\ve}\bmu \right)  \right]\\[5pt]
			=& \frac{\ii}{\hbar} \sum_{\bmu \in \Gamma^*} \ee^{\ii \bmu \cdot \frac{\bx}{\ve}} \hat{V}_{\Gamma}(\bmu) \left[  \frac{\hbar}{\ve} \bmu\cdot \nablak \cW^{\hbar} (t,\bx,\bk) + \mathcal{O}\left( \frac{\hbar^2}{\ve^2} \right) \right]\\[5pt]
			=&  \nabla_\bx V_\Gamma \cdot\nabla_\bk \cW^{\hbar}(t,\bx,\bk) + \mathcal{O}\left( \frac{\hbar}{\ve^2} \right),
		\end{aligned}
	\end{equation}
	where we substitute $\ii\sum_{\bmu \in \Gamma^*} \ee^{\ii \bmu \cdot \frac{\bx}{\ve}} \hat{V}_{\Gamma}(\bmu) \bmu = \ve\nablax V_\Gamma$ in the last equality above.
	
	In what follows, we will rigorously justify the error terms under $\Tr \Pi_\pm$ is $\mathcal{O}\left( \frac{\hbar}{\ve^2}\right)$ in the weak sense:
	we apply the Taylor expansion to $\Pi_{\pm}\cW^\hbar$ that
	\begin{multline*}
		\Pi_{\pm}\cW^\hbar\left(t,\bx,\bk+\frac{\hbar}{2\ve}\bmu \right) - \Pi_{\pm}\cW^\hbar \left(t,\bx,\bk-\frac{\hbar}{2\ve}\bmu\right)\\[4pt]
		= \frac{\hbar}{\ve}\bmu\,\nabla_{\bk}\Pi_{\pm}\cW^\hbar(t,\bx,\bk)
		+\frac{\hbar^{2}}{\ve^{2}}\int_{0}^{1}\!\!\int_{0}^{1}2(2s-1)\bmu^{2}\nabla_{\bk}^{2}\Pi_{\pm}\cW^\hbar \left(t,\bx,\bk+\frac{\theta(2s-1)}{2\ve}\bmu\right) \,\mathrm{d}\theta \mathrm{d}s.
	\end{multline*}
	Then, by taking the trace operator $\Tr$, we have
	\begin{multline}\label{eq:tr-W-diff}
		\Tr\left[\Pi_{\pm}\cW^\hbar\left(t,\bx,\bk+\frac{\hbar}{2\ve}\bmu\right) - \Pi_{\pm}\cW^\hbar\left(t,\bx,\bk-\frac{\hbar}{2\ve}\bmu\right) \right]
		= \frac{\hbar}{\ve}\bmu\,\nabla_{\bk}\Tr\left[\Pi_{\pm}\cW^\hbar(t,\bx,\bk)\right]\\[4pt]
		+\frac{\hbar^{2}}{\ve^{2}}\int_{0}^{1}\!\!\int_{0}^{1}2(2s-1)\bmu^{2}\Tr\left[\nabla_{\bk}^{2}\Pi_{\pm}\cW^\hbar \left(t,\bx,\bk+\frac{\theta(2s-1)}{2\ve}\bmu\right)\right]\,\mathrm{d}\theta \mathrm{d}s.
	\end{multline}
	
	Hence, by taking $\Tr \Pi_\pm$ to \eqref{Q-ext-gamma} and substituting \eqref{eq:tr-W-diff}, we can obtain the expansion as in \eqref{def:error-Qext} that
	\begin{equation*}
		\langle \Tr\big[\Pi_\pm ( \mathcal Q^\ve[V_\Gamma,\cW^{\hbar}])\big], \eta(\bx)\,\phi(\bk)  \rangle = \langle \nablax V_\Gamma\cdot\nablak f^{\hbar}_\pm, \eta(\bx)\,\phi(\bk) \rangle  +  \Err^{\mathrm{ext}}_{\pm}[\eta, \phi],
	\end{equation*}
	where
	\begin{multline*}
		\Err^{\mathrm{ext}}_{\pm}[\eta, \phi]
		=\\
		\frac{\ii}{\hbar}\sum_{\bmu\in\Gamma^{*}}\ee^{\ii\bmu\cdot\frac{\bx}{\ve}}\hat{V}_{\Gamma}(\bmu)\left[\frac{\hbar^{2}}{\ve^{2}}\int_{0}^{1}\!\!\int_{0}^{1}2(2s-1)\bmu^{2}\left\langle \Tr\left[\nabla_{\bk}^{2}\Pi_{\pm}\cW^\hbar \left(t,\bx,\bk+\frac{\theta(2s-1)}{2\ve}\bmu\right) \right],\eta(\bx)\phi(\bk)\right\rangle \,\mathrm{d}\theta\mathrm{d}s\right]
	\end{multline*}
	We further estimate the external potential related error term,
    \begin{equation}\label{est:Q-ext-error}
	\begin{aligned}
		&\left|\Err^{\mathrm{ext}}_{\pm}[\eta, \phi] \right|\\
		&\lesssim \frac{\hbar}{\ve^{2}}\Bigg|\sum_{\bmu\in\Gamma^{*}}\ee^{\ii\bmu\cdot\frac{\bx}{\ve}}\hat{V}_{\Gamma}(\bmu)\bmu^{2}\int_{0}^{1}\!\!\int_{0}^{1}2(2s-1)\left\langle \Tr\left[\nabla_{\bk}^{2}\Pi_{\pm}\cW^\hbar\left(t,\bx,\bk+\frac{\theta(2s-1)}{2\ve}\bmu\right)\right],\eta(\bx)\phi(\bk)\right\rangle \,\,\mathrm{d}\theta\mathrm{d}s\Bigg|\\[4pt]
		&= \frac{\hbar}{\ve^{2}}\Bigg|\sum_{\bmu\in\Gamma^{*}}e^{\ii\bmu\cdot\frac{\bx}{\ve}}\hat{V}_{\Gamma}(\bmu)\bmu^{2}\int_{0}^{1}\!\!\int_{0}^{1}2(2s-1)\left\langle \Tr\left[\Pi_{\pm}\cW^\hbar \left(t,\bx,\bk+\frac{\theta(2s-1)}{2\ve}\bmu\right) \right],\eta(\bx)\nabla_{\bk}^{2}\phi(\bk)\right\rangle \,\mathrm{d}\theta\mathrm{d}s\Bigg|\\[4pt]
		&\lesssim \frac{\hbar}{\ve^{2}}\left\Vert \bmu^{2}\hat{V}_{\Gamma}(\bmu)\right\Vert _{\ell_{\bmu}^{1}} 
		\int_{0}^{1}\!\!\int_{0}^{1} \left\Vert \Tr\left[\Pi_{\pm}\cW^\hbar \left(t,\bx,\bk+\frac{\theta(2s-1)}{2\ve}\bmu\right) \right]\right\Vert_{L_{\bx}^{1}L_{\bk}^{1}}\!\dd \theta \dd s\;
		\left\Vert \eta(\bx)\right\Vert _{L_{\bx}^{\infty}}\left\Vert \nabla_{\bk}^{2}\phi(\bk)\right\Vert _{L_{\bk}^{\infty}}\\[4pt]
		&= \frac{\hbar}{\ve^{2}}
		\Big\Vert \bmu^{2} \hat{V}_{\Gamma}(\bmu) \Big\Vert_{\ell_{\bmu}^{1}}
		\left\Vert \Tr\left[\Pi_{\pm}\cW^\hbar(t,\bx,\bk)\right] \right\Vert_{L_{\bx}^{1}L_{\bk}^{1}}
		\left\Vert \eta(\bx) \right\Vert_{L_{\bx}^{\infty}}
		\left\Vert \nabla_{\bk}^{2}\phi(\bk) \right\Vert_{L_{\bk}^{\infty}}\\[4pt]
		&\leq C\frac{\hbar}{\ve^{2}}
		\left\Vert \eta(\bx) \right\Vert_{L_{\bx}^{\infty}}
		\left\Vert \phi(\bk) \right\Vert_{W^{2,\infty}_\bk},
	\end{aligned}  
    \end{equation}
	where the H\"older inequality in $\bx,\bk$, since the integrand of $s,\theta$ is independent of all variables $\bx,\bk,\bmu$. This concludes the proof by considering $V_\Gamma \in A^2(\mathcal{C})$. 
\end{proof}

\begin{remark}
	Throughout this work, we regard the lattice scale $\ve$ as a fixed $\mathcal{O}(1)$ parameter and perform the semiclassical limit $\hbar\to0$ only.  In particular, we do \emph{not} consider the regime $\ve\to0$. The presence of the off--diagonal component $\cW_{\mathrm{OD}}^\hbar$ already suggests that sending $\ve\to0$ would require additional spectral analysis beyond the scope of this paper.
	When $\hbar/\ve^{2}\ll 1$, the external periodic potential generates, to leading order, the familiar local force term
	$\nabla_\bx V_\Gamma \cdot\nabla_\bk \cW^\hbar$,
	which does not depend on the band geometry. Consequently, the weak semiclassical limit of this force term follows the same structure as in the Schr\"odinger case; see, e.g., \cite{QiWangWatson2025}.   
	This argument applies uniformly to both the massive and the massless Dirac settings.
	
	For other ratios of $\hbar/\ve$, however, the external potential can induce genuinely nonlocal mixing effects in the momentum variable, potentially altering the effective dynamics in a nontrivial way.   A precise description of these regimes would require a finer microlocal and spectral analysis of the Dirac operator with periodic coefficients.   We do not pursue this direction here and leave it for future work.
\end{remark}

\subsection{Estimate of error term from interactive potential part}

In this subsection, we estimate the interactive potential related term $\Tr\big[\Pi_\pm ( \mathcal Q^\hbar[V_{\mathrm{int}},\cW^{\hbar}])\big]$, which is proved to converge to $-\nablax V_{\mathrm{int}} \cdot\nablak f^{\hbar}_\pm$ in a certain weak sense.

\begin{proposition}[Estimate of the Hartree-type interactive potential related error]
	\label{prop:error-Vint}
	For the interactive kernel $K$ in \textnormal{\textbf{(A1)}}, let $\psi^{\hbar}$ be the solution to \eqref{eq:massive-dirac-scaled} (or \eqref{eq:massless-dirac-scaled}) satisfying
	the assumptions in Theorem \ref{thm:massiv} (or \ref{thm:massless}).  Then, there exists a constant $C$ independent of $\hbar$ such that, for any $\eta \in L^\infty_\bx$ and $\phi \in W^{2,\infty}_\bk$,
	\begin{equation}
		\langle \Tr\big[\Pi_\pm ( \mathcal Q^\hbar[V_{\mathrm{int}},\cW^{\hbar}])\big], \eta(\bx)\,\phi(\bk)  \rangle = \langle \nablax V_{\mathrm{int}} \cdot\nablak f^{\hbar}_\pm, \eta(\bx)\,\phi(\bk) \rangle  +  \Err^{\mathrm{int}}_{\pm}[\eta, \phi],
	\end{equation}
	with
	\begin{equation}\label{est:error-Vint}
		\left| \Err^{\mathrm{int}}_{\pm}[\eta, \phi] \right| \;\le\; C \,\hbar \,\|\eta\|_{L^\infty_{\bx}}\|\phi\|_{W^{2,\infty}_{\bk}},
	\end{equation}
	uniformly in $t\in[0,T]$. Consequently, we have
	\begin{equation}
		\Tr\big[\Pi_\pm ( \mathcal Q^\hbar[V_{\mathrm{int}},\cW^{\hbar}])\big] \to \nablax V_{\mathrm{int}} \cdot\nablak f^{\hbar}_\pm \quad \text{as} \quad \hbar \to 0,
		\quad \text{in } C\big([0,T]; (L^\infty_\bx W^{2,\infty}_\bk)' \big). 
	\end{equation}
\end{proposition}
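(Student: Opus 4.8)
The plan is to mirror the structure of the proof of Proposition~\ref{prop:error-Vext}, replacing the Bloch sum over the dual lattice $\Gamma^*$ by the Fourier integral representation of the Hartree potential and exploiting the Wiener-algebra hypothesis $K\in A^1(\R^2)$ from \textnormal{\textbf{(A1)}}. First I would start from the second line of \eqref{eq:Q-int}, writing the momentum shift symmetrically and Taylor-expanding $\cW^\hbar$ in the variable $\bk$ around the unshifted point. Concretely, for each fixed $\bk'$ one has
\[
\cW^\hbar\!\Big(t,\bx,\bk+\tfrac{\hbar\bk'}{2}\Big) - \cW^\hbar\!\Big(t,\bx,\bk-\tfrac{\hbar\bk'}{2}\Big)
= \hbar\,\bk'\cdot\nabla_\bk\cW^\hbar(t,\bx,\bk) + \hbar^2\,R_2^\hbar(t,\bx,\bk,\bk'),
\]
where $R_2^\hbar$ is the second-order Taylor remainder in integral form, carrying a factor $|\bk'|^2$ and a second $\bk$-derivative of $\cW^\hbar$ evaluated at an intermediate point. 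Substituting the leading term and integrating the kernel variable back, the $\mathcal{O}(\hbar^0)$ contribution is precisely $\nabla_\bx V_{\mathrm{int}}\cdot\nabla_\bk\cW^\hbar$, since $\ii\int_{\R^2}\bk'\,\ee^{\ii\bk'\cdot(\bx-\by)}V_{\mathrm{int}}(t,\by)\,\dd\bk'\,\dd\by = \nabla_\bx V_{\mathrm{int}}(t,\bx)$; applying $\Tr[\Pi_\pm\,\cdot\,]$ then yields the claimed force term $\nabla_\bx V_{\mathrm{int}}\cdot\nabla_\bk f^\hbar_\pm$ and defines $\Err^{\mathrm{int}}_\pm[\eta,\phi]$ as the pairing of the $\hbar^2 R_2^\hbar$ remainder against $\eta(\bx)\phi(\bk)$.

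The remaining task is to bound the remainder term. After pairing against $\eta(\bx)\phi(\bk)$ and taking $\Tr[\Pi_\pm\,\cdot\,]$, I would move one $\bk$-derivative off $\cW^\hbar$ and onto $\phi$ by integration by parts in $\bk$ (exactly as in the chain of inequalities \eqref{est:Q-ext-error}), which is legitimate since $\Pi_\pm(\bk)$ is $\bk$-independent after the change of variables absorbing the intermediate shift, and $\phi\in W^{2,\infty}_\bk$. This produces a factor $\|\nabla_\bk^2\phi\|_{L^\infty_\bk}\le\|\phi\|_{W^{2,\infty}_\bk}$ and a factor $\|\eta\|_{L^\infty_\bx}$, together with $\|\Tr[\Pi_\pm\cW^\hbar](t)\|_{L^1_\bx L^1_\bk}$, which is uniformly bounded in $\hbar$ and $t\in[0,T]$ by the $L^2_\bx$-conservation law and the $H^1_\bx$-bound (Lemmas~\ref{lem:L2conservation} and \ref{lem:bdd-H1}); indeed $\int f^\hbar_\pm\,\dd\bk = \rho^\hbar_\pm$ and $\rho^\hbar\in L^1_\bx\cap L^\infty_\bx$ uniformly. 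The kernel variable is controlled by the Wiener-algebra norm: the factor $|\bk'|^2$ from the Taylor remainder is absorbed into $\|\,|\bk'|^2\widehat{K*\rho^\hbar}(\bk')\,\|_{L^1_{\bk'}}\le\|\,|\bk'|^2\widehat{K}(\bk')\,\|_{L^1_{\bk'}}\|\widehat{\rho^\hbar}\|_{L^\infty_{\bk'}}$; here one uses $K\in A^1(\R^2)$, so $|\bk'|\widehat K(\bk')\in L^1$, together with $|\bk'|\widehat{G}$-type decay or, more simply, that $\widehat{\rho^\hbar}(\bk')$ decays enough — this is where the assumption must be used carefully. Collecting, one obtains $|\Err^{\mathrm{int}}_\pm[\eta,\phi]|\le C\hbar\,\|\eta\|_{L^\infty_\bx}\|\phi\|_{W^{2,\infty}_\bk}$, uniformly in $t\in[0,T]$, which is \eqref{est:error-Vint}; note the absence of any $\ve^{-2}$ factor, since the Hartree potential lives on the macroscopic scale and no fast lattice oscillation is present.

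The main obstacle I anticipate is the rigorous justification that the kernel-variable integral in the Taylor remainder is absolutely convergent, i.e.\ that $\|\,|\bk'|^2\widehat{K}(\bk')\,\widehat{\rho^\hbar}(\bk')\,\|_{L^1_{\bk'}}$ is finite and uniformly bounded in $\hbar$. The hypothesis $K\in A^1(\R^2)$ gives only one power of $|\bk'|$ against $\widehat K$, so the second power must be paid for by decay of $\widehat{\rho^\hbar}$; this requires $\rho^\hbar=\bar\psi^\hbar\gamma^0\psi^\hbar\in W^{1,1}_\bx$ or at least $\nabla_\bx\rho^\hbar\in L^1_\bx$ uniformly in $\hbar$, which should follow from the uniform $H^1_\bx$-bound on $\psi^\hbar$ via $\nabla_\bx\rho^\hbar = 2\,\mathrm{Re}\,\langle\nabla_\bx\psi^\hbar,\gamma^0\psi^\hbar\rangle$ and Cauchy--Schwarz. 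Alternatively, one keeps $\nabla_\bx V_{\mathrm{int}}=(\nabla K)*\rho^\hbar$ intact and estimates it directly in $W^{1,\infty}_\bx$, using $\nabla K\in A^0$ and $\rho^\hbar\in L^1\cap L^\infty$; this is the cleaner route and avoids differentiating $\rho^\hbar$, at the cost of needing $\|\,|\bk'|\,\widehat K(\bk')\,\|_{L^1_{\bk'}}$ and the second derivative landing on $\phi$ rather than on $\cW^\hbar$. I would adopt this second formulation to keep the argument parallel to Proposition~\ref{prop:error-Vext} and to make transparent that only \textnormal{\textbf{(A1)}} and the conservation laws are needed.
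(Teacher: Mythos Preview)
Your overall strategy is correct and coincides with the paper's: Taylor-expand the symmetric shift in \eqref{eq:Q-int} to extract the leading force term $\nabla_\bx V_{\mathrm{int}}\cdot\nabla_\bk f_\pm^\hbar$, then bound the second-order remainder by integrating by parts twice in $\bk$ to put $\nabla_\bk^2$ on $\phi$, and finally control $\|\,|\bk'|^2\,\widehat{V_{\mathrm{int}}}(\bk')\|_{L^1_{\bk'}}$. The paper does exactly your route (b): it splits $|\bk'|^2\widehat K\widehat{\rho^\hbar} = (|\bk'|\widehat K)\cdot(|\bk'|\widehat{\rho^\hbar})$, bounds the first factor in $L^1_{\bk'}$ by the $A^1$ hypothesis on $K$, and bounds the second in $L^\infty_{\bk'}$ by $\|\nabla_\bx\rho^\hbar\|_{L^1_\bx}\lesssim\|\psi^\hbar\|_{H^1_\bx}^2$, which is uniform in $\hbar$ by Lemma~\ref{lem:bdd-H1}. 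So the route you label as the ``main obstacle'' and then resolve via the $H^1$ bound is precisely the paper's argument.

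Your stated preference for the ``cleaner route'' at the end, however, does not work under \textnormal{\textbf{(A1)}} alone. Keeping $\nabla_\bx V_{\mathrm{int}}=(\nabla K)*\rho^\hbar$ intact and estimating it in $W^{1,\infty}_\bx$ still forces you to control $\nabla_\bx^2 V_{\mathrm{int}}=(\nabla^2 K)*\rho^\hbar$, because the Taylor remainder carries a full factor $|\bk'|^2$ (equivalently, two $\bx$-derivatives on $V_{\mathrm{int}}$). With only $K\in A^1$ you get $\nabla K\in A^0$ but nothing on $\nabla^2 K$, so you cannot avoid putting one derivative on $\rho^\hbar$. In short: abandon the alternative formulation and commit to the splitting $(|\bk'|\widehat K)\,(|\bk'|\widehat{\rho^\hbar})$ with the $H^1_\bx$ bound on $\psi^\hbar$ --- that is both necessary under the stated hypotheses and exactly what the paper does. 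Two minor wording issues to clean up: the integration by parts transfers \emph{both} $\bk$-derivatives to $\phi$ (not one), and $\Pi_\pm(\bk)$ is genuinely $\bk$-dependent, so when you integrate by parts its derivatives appear as well; these are harmless (bounded by Lemma~\ref{lem:grad-Pi}, with the $\Omega_\kappa$ cutoff in the massless case) but should be acknowledged.
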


\begin{proof}
	Similar to the derivation for the external potential term, we can figure out the contribution of the Hartree interactive potential related term by using the following formula of $\mathcal Q^{\hbar}[V_{\mathrm{int}},\cW^{\hbar}]$ as in the last line of \eqref{eq:Q-int}:
	\begin{equation}\label{eq:Q-V-int-}
		\begin{aligned}
			\mathcal Q^{\hbar}[V_{\mathrm{int}},\cW^{\hbar}](t,\bx,\bk)
			=& \frac{\ii}{\hbar} \int_{\R^2}
			\int_{\R^2}
			\ee^{\ii \bk' \cdot (\bx-\by)} V_{\mathrm{int}} \left(t, \by \right) 
			\Big[ \cW^{\hbar}\left(t,\bx,\bk+ \frac{\hbar\bk'}{2}\right) - \cW^{\hbar}\left(t,\bx,\bk- \frac{\hbar\bk'}{2}\right) \Big]\dd\bk' \,\dd\by \\[5pt]
			=& \frac{\ii}{\hbar} \int_{\R^2}
			\ee^{\ii \bk' \cdot \bx} \widehat{V_{\mathrm{int}}} \left(t, \bk' \right) \left[  \hbar \bk' \nablak \cW^{\hbar} (t,\bx,\bk) + \mathcal{O}\left( \hbar^2 \right) \right]\dd\bk'\\[5pt] 
			=& \nablax V_{\mathrm{int}}(t,\bx) \cdot\nabla_\bk \cW^{\hbar}(t,\bx,\bk) + \mathcal{O} (\hbar),
		\end{aligned}
	\end{equation}
	where $V_{\mathrm{int}}(t,\bx) = K * \rho (t,\bx)$.
	
	Then, we need to rigorously justify that the error terms under $\Tr \Pi_\pm$ is $\mathcal{O}\left( \hbar \right)$ in a certain weak sense. Following the similar calculation in \eqref{eq:tr-W-diff}, we have 
	\begin{multline}\label{eq:tr-W-k'}
		\Tr\left[\Pi_{\pm}\cW^{\hbar}\left(t,\bx,\bk+ \frac{\hbar\bk'}{2}\right) - \Pi_{\pm}\cW^{\hbar}\left(t,\bx,\bk- \frac{\hbar\bk'}{2}\right) \right]
		= \hbar \bk'\,\nabla_{\bk}\Tr\left[\Pi_{\pm}\cW^\hbar(t,\bx,\bk)\right]\\[4pt]
		+\hbar^2 \int_{0}^{1}\!\!\int_{0}^{1}2(2s-1)\bk'^2\Tr\left[\nabla_{\bk}^{2}\Pi_{\pm}\cW^\hbar \left(t,\bx,\bk+\theta(2s-1)\bk'\right)\right]\,\mathrm{d}\theta \,\mathrm{d}s.
	\end{multline}
	Therefore, taking $\Tr \Pi_\pm$ to \eqref{eq:Q-V-int-} and substituting \eqref{eq:tr-W-k'}, it leads to the expansion as in \eqref{def:error-Qint},
	\begin{equation}
		\langle \Tr\big[\Pi_\pm ( \mathcal Q^\hbar[V_{\mathrm{int}},\cW^{\hbar}])\big], \eta(\bx)\,\phi(\bk)  \rangle = \langle \nablax V_{\mathrm{int}} \cdot\nablak f^{\hbar}_\pm, \eta(\bx)\,\phi(\bk) \rangle  +  \Err^{\mathrm{int}}_{\pm}[\eta, \phi],
	\end{equation}
	where the error term $\Err^{\mathrm{int}}_{\pm}[\eta, \phi]$ is given by
	\begin{align*}
		&\Err^{\mathrm{int}}_{\pm}[\eta, \phi]\\
		&=\frac{\ii}{\hbar} \int_{\R^2} \ee^{\ii\bk'\cdot\bx} \widehat{V}_{\mathrm{int}}(\bk')\left[\hbar^2 \int_{0}^{1}\!\!\int_{0}^{1}2(2s-1)\bk'^{2}\left\langle \Tr\left[\nabla_{\bk}^{2}\Pi_{\pm}\cW^\hbar\left(t,\bx,\bk+\theta(2s-1)\bk'\right)\right],\eta(\bx)\phi(\bk)\right\rangle \,\mathrm{d}\theta\mathrm{d}s\right]\,\dd\bk'.
	\end{align*}
	We further estimate the interactive potential related error term by using its Hartree type convolution, so that we have
	\begin{align*}
		&\left|\Err^{\mathrm{int}}_{\pm}[\eta, \phi] \right|\\
		&\lesssim \hbar\,\Bigg|\int_{\R^2}\ee^{\ii\bk'\cdot\bx} \widehat{V}_{\mathrm{int}}(\bk')\bk'^{2}\int_{0}^{1}\!\!\int_{0}^{1}2(2s-1)\left\langle \Tr\left[\nabla_{\bk}^{2}\Pi_{\pm}\cW^\hbar\left(t,\bx,\bk+\theta(2s-1)\bk'\right)\right],\eta(\bx)\phi(\bk)\right\rangle \,\mathrm{d}\theta\mathrm{d}s\,\dd\bk'\Bigg|\\[4pt]
		&= \hbar\,\Bigg|\int_{\R^2} \ee^{\ii\bk'\cdot\bx} \widehat{V}_{\mathrm{int}}(\bk')\bk'^{2} \int_{0}^{1}\!\!\int_{0}^{1}2(2s-1)\left\langle \Tr\left[\Pi_{\pm}\cW^\hbar\left(t,\bx,\bk+\theta(2s-1)\bk'\right)\right],\eta(\bx)\nabla_{\bk}^{2}\phi(\bk)\right\rangle \,\mathrm{d}\theta\mathrm{d}s\,\dd\bk' \, \Bigg|.
	\end{align*}
	Using the H\"older inequality, we obtain
    \begin{equation}\label{est:Q-int-error}
	\begin{aligned}
	    \left|\Err^{\mathrm{int}}_{\pm}[\eta, \phi] \right|
		&\lesssim \hbar\,\left\Vert \bk'^{2} \widehat{V}_{\mathrm{int}}(\bk')  \right\Vert _{L^1_{\bk'}} 
		\int_{0}^{1}\!\!\int_{0}^{1}\left\Vert \Tr\left[\Pi_{\pm}\cW^\hbar\left(t,\bx,\bk+\theta(2s-1)\bk'\right)\right]\right\Vert_{L_{\bx}^{1}L_{\bk}^{1}}
		\dd \theta \dd s\;
		\left\Vert \eta(\bx)\right\Vert _{L_{\bx}^{\infty}}\left\Vert \nabla_{\bk}^{2}\phi(\bk)\right\Vert _{L_{\bk}^{\infty}}\\[4pt]
		&\lesssim \hbar\left\Vert  \bk'^{2} \hat{K}(\bk') \hat{\rho}(\bk') \right\Vert _{L^1_{\bk'}}
		\left\Vert \Tr\left[\Pi_{\pm}\cW^\hbar(t,\bx,\bk)\right] \right\Vert_{L_{\bx}^{1}L_{\bk}^{1}}
		\left\Vert \eta(\bx) \right\Vert_{L_{\bx}^{\infty}}
		\left\Vert \phi(\bk) \right\Vert_{W^{2,\infty}_\bk}\\[4pt]
		&\lesssim \hbar \left\Vert \bk' \hat{K}(\bk') \right\Vert _{L^1_{\bk'}} \left\Vert \bk' \hat{\rho}(\bk')  \right\Vert _{L^{\infty}_{\bk'}}
		\left\Vert \Tr\left[\Pi_{\pm}\cW^\hbar(t,\bx,\bk)\right] \right\Vert_{L_{\bx}^{1}L_{\bk}^{1}}
		\left\Vert \eta(\bx) \right\Vert_{L_{\bx}^{\infty}}
		\left\Vert \phi(\bk) \right\Vert_{W^{2,\infty}_\bk}\\[4pt]
		&\leq C\hbar
		\left\Vert \eta(\bx) \right\Vert_{L_{\bx}^{\infty}}
		\left\Vert \phi(\bk) \right\Vert_{W^{2,\infty}_\bk},
	\end{aligned}
    \end{equation}
	where we utilize the fact that $K \in A^1(\R^2_\bx)$ and $\left\Vert \bk' \hat{\rho}(\bk')\right\Vert _{L^{\infty}_{\bk'}} \lesssim  \left\Vert \psi_0 \right\Vert _{H^1_\bx}$ in the last inequality.
	This completes the proof.  
\end{proof}

\subsection{Estimate of total error terms}

Now, we are in a position to complete the proof of Theorem \ref{thm:massiv} and \ref{thm:massless} by collecting the estimates of total errors from Proposition \ref{prop:weak-kinetic}, \ref{prop:error-Vext} and \ref{prop:error-Vint}.

\begin{proof}
	By substituting \eqref{def:kineticpart}-\eqref{def:error-Qint} into \eqref{eq:proj-eqn}, we can formulate the total error terms in the sense that, 
	\begin{equation*}
		\langle  \Err_\pm^{\hbar}(t,\bx,\bk), \eta(\bx) \phi(\bk) \rangle = \Err^{\mathrm{kin}}_\pm[\eta,\phi] + \Err^{\mathrm{ext}}_{\pm}[\eta, \phi] + \Err^{\mathrm{int}}_{\pm}[\eta, \phi].
	\end{equation*}
	
	For the massive case ($m > 0$), collecting the error estimates $\eqref{eq:weak-kinetic-merged}$, \eqref{est:error-Vext} and \eqref{est:error-Vint} from Propositions \ref{prop:weak-kinetic}, \ref{prop:error-Vext} and \ref{prop:error-Vint}, respectively, we can obtain, for test functions $\eta \in W^{1,\infty}_\bx$ and $\phi \in W^{2,\infty}_\bk$,
	\begin{equation*}
		\begin{aligned}
			|\langle  \Err_\pm^{\hbar}(t,\bx,\bk), \eta(\bx) \phi(\bk) \rangle| &\;\leq\;  |\Err^{\mathrm{kin}}_\pm[\eta,\phi]| + |\Err^{\mathrm{ext}}_{\pm}[\eta, \phi]| + |\Err^{\mathrm{int}}_{\pm}[\eta, \phi]| \\[5pt]
			&\;\leq\; C\left[C_T\,\hbar\,\|\eta\|_{W^{1,\infty}_\bx }\|\phi\|_{H^1_\bk}  + \frac{\hbar}{\ve^2}\,\|\eta\|_{L^\infty_\bx}\,\|\phi\|_{W^{2,\infty}_\bk} + \hbar \,\|\eta\|_{L^\infty_\bx}\,\|\phi\|_{W^{2,\infty}_\bk} \right],
		\end{aligned}
	\end{equation*}
	uniformly for $t \in [0,T]$, which directly yields \eqref{est-total-massive}.
	
	For the massless case ($m=0$), we can similarly combine the error estimates  $\eqref{eq:weak-kinetic-merged}$, \eqref{est:error-Vext} and \eqref{est:error-Vint} that, for test functions $\eta \in W^{1,\infty}_\bx$ and $\phi \in W^{2,\infty}_\bk(\Omega_\kappa)$ for any $\kappa>0$,
	\begin{equation*}
		\begin{aligned}
			|\langle  \Err_\pm^{\hbar}(t,\bx,\bk), \eta(\bx) \phi(\bk) \rangle| &\;\leq\;  |\Err^{\mathrm{kin}}_\pm[\eta,\phi]| + |\Err^{\mathrm{ext}}_{\pm}[\eta, \phi]| + |\Err^{\mathrm{int}}_{\pm}[\eta, \phi]| \\[5pt]
			&\;\leq\; C\left[C_T\,\hbar\,\|\eta\|_{W^{1,\infty}_\bx }\|\phi\|_{H^1_\bk(\Omega_\kappa)} + \frac{\hbar}{\ve^2}\,\|\eta\|_{L^\infty_\bx}\,\|\phi\|_{W^{2,\infty}_\bk} + \hbar \,\|\eta\|_{L^\infty_\bx}\,\|\phi\|_{W^{2,\infty}_\bk} \right],
		\end{aligned}
	\end{equation*}
	leading to \eqref{est-totoal-massless}.
\end{proof}

\section*{Acknowledgment}

The authors thank to Chiara Saffirio and Heinz Siedentop for helpful discussions.
J.L. is supported by the Swiss National Science Foundation through the NCCR SwissMAP and the SNSF Eccellenza project PCEFP\_181153, and by the Swiss State Secretariat for Research and Innovation through the project P.530.1016 (AEQUA).
K.Q. is partially supported by the AMS-Simons Travel Award grant, and part of this work is based upon the support by the National Science Foundation under Grant No.~DMS-2424139, while K.Q. was in residence at the Simons Laufer Mathematical Sciences Institute in Berkeley, California, during the Fall 2025 semester.
\appendix

\section{Preliminaries about Dirac equation}\label{Appendix:Basics}
\begin{definition}[Notations for the Dirac equation]
	We denote
	\begin{align*}
		\gamma^0 = \sigma^3, \qquad
		\gamma^1 = \im \sigma^1, \qquad
		\gamma^2 = \im \sigma^2,
	\end{align*}
	where $\sigma^j$ are the Pauli matrices:
	\begin{align*}
		\sigma^1 &= 
		\begin{pmatrix}
			0 & 1 \\
			1 & 0
		\end{pmatrix}, \quad
		\sigma^2 = 
		\begin{pmatrix}
			0 & -\im \\
			\im & 0
		\end{pmatrix}, \quad
		\sigma^3 = 
		\begin{pmatrix}
			1 & 0 \\
			0 & -1
		\end{pmatrix}.
	\end{align*}
	Then, the matrices $\boldsymbol{\alpha} = (\alpha^1, \alpha^2)$ are
	\[
	\alpha^j := \gamma^0 \gamma^j \quad \Rightarrow \quad
	\alpha^1 = \sigma^2, \quad
	\alpha^2 = -\sigma^1.
	\]    
\end{definition}

\begin{lemma}[$L^2_\bx$-Conservation]\label{lem:L2conservation}
	Under the assumptions \textnormal{\textbf{(A1)}} and \textnormal{\textbf{(A2)}}, 
	let $\psi_0\in L^2_\bx$,
	\[
	V(t,\bx)=V_\Gamma\left( \frac{\bx}{\ve}\right) + V_\mathrm{int}(t,\bx),
	\qquad
	V_\mathrm{int}(t,\bx)=\big(K*|\psi^\hbar(t,\cdot)|^2\big)(\bx),
	\]
	and $\psi^\hbar(t)$ be the (mild) solution of \eqref{eq:massive-dirac-scaled} (or \eqref{eq:massless-dirac-scaled}) for any $t \geq 0$. Then, we have
	\[
	\|\psi^\hbar(t)\|_{L^2_\bx}=\|\psi_0\|_{L^2_\bx}.
	\]
\end{lemma}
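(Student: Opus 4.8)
The plan is to recast the Dirac--Hartree flow as an $L^2_\bx$-valued ordinary differential equation with a bounded, self-adjoint generator by passing to the interaction picture; this renders the (otherwise purely formal) energy identity legitimate even though $\psi_0$ is only in $L^2_\bx$. First I would record the structural facts. The free Dirac operator $H_0:=-\ii\hbar c\,\balpha\!\cdot\!\nabla+mc^2\gamma^0$ (resp.\ $H_0:=-\ii\hbar c\,\balpha\!\cdot\!\nabla$ in the massless case) is self-adjoint on $H^1_\bx(\R^2;\C^2)$, being the Fourier multiplier by the Hermitian matrix $c\,\balpha\!\cdot\!\bk+mc^2\gamma^0$ with real eigenvalues $\pm E_m(\bk)$; hence $U_0(t):=\ee^{-\ii t H_0/\hbar}$ is a strongly continuous unitary group on $L^2_\bx$. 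Since a mild solution lies in $C([0,T];L^2_\bx)$ by construction, $M_T:=\sup_{t\in[0,T]}\|\psi^\hbar(t)\|_{L^2_\bx}<\infty$, so $\rho^\hbar(t,\bx)=|\psi^\hbar(t,\bx)|^2\ge0$ satisfies $\|\rho^\hbar(t)\|_{L^1_\bx}=\|\psi^\hbar(t)\|_{L^2_\bx}^2\le M_T^2$. As $V_\Gamma$ is real-valued with absolutely convergent Fourier series (a consequence of $V_\Gamma\in A^2(\Gamma)$), it is bounded; and $K$ being real with the regularity in \textnormal{\textbf{(A1)}}, the Hartree potential $V_{\mathrm{int}}(t,\cdot)=K*\rho^\hbar(t)$ is real-valued with $\|V_{\mathrm{int}}(t)\|_{L^\infty_\bx}\lesssim M_T^2$ and $t\mapsto V_{\mathrm{int}}(t,\cdot)$ continuous into $L^\infty_\bx$. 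Thus $V(t,\bx)=V_\Gamma(\bx/\ve)+V_{\mathrm{int}}(t,\bx)$ is, for each $t$, a bounded self-adjoint multiplication operator on $L^2_\bx$, and $t\mapsto V(t)$ is strongly continuous and uniformly bounded on $[0,T]$.

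Next I would perform the interaction-picture reduction: set $\tilde\psi(t):=U_0(-t)\psi^\hbar(t)$ and $\tilde V(t):=U_0(-t)\,V(t)\,U_0(t)$, so that $\tilde V(t)$ is again bounded and self-adjoint, with $t\mapsto\tilde V(t)$ strongly continuous and uniformly bounded on $[0,T]$. Applying $U_0(-t)$ to the Duhamel formulation of \eqref{eq:massive-dirac-scaled} (resp.\ \eqref{eq:massless-dirac-scaled}) and using the group law gives
\[
\tilde\psi(t)=\psi_0-\frac{\ii}{\hbar}\int_0^t\tilde V(s)\,\tilde\psi(s)\,\dd s .
\]
Because $s\mapsto\tilde V(s)\tilde\psi(s)$ is continuous in $L^2_\bx$ — a uniformly bounded, strongly continuous operator family applied to a continuous $L^2_\bx$-valued curve — the right-hand side is $C^1$ in $t$, so $\tilde\psi\in C^1([0,T];L^2_\bx)$ with $\partial_t\tilde\psi=-\tfrac{\ii}{\hbar}\tilde V(t)\tilde\psi$. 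Now the energy identity is rigorous:
\[
\frac{\dd}{\dd t}\|\tilde\psi(t)\|_{L^2_\bx}^2
=2\,\Re\big\langle\partial_t\tilde\psi(t),\tilde\psi(t)\big\rangle_{L^2_\bx}
=-\frac{2}{\hbar}\,\Im\big\langle\tilde V(t)\tilde\psi(t),\tilde\psi(t)\big\rangle_{L^2_\bx}=0 ,
\]
since $\tilde V(t)=\tilde V(t)^*$ forces $\langle\tilde V(t)\tilde\psi,\tilde\psi\rangle_{L^2_\bx}\in\R$. Hence $\|\tilde\psi(t)\|_{L^2_\bx}\equiv\|\psi_0\|_{L^2_\bx}$, and unitarity of $U_0(t)$ gives $\|\psi^\hbar(t)\|_{L^2_\bx}=\|U_0(t)\tilde\psi(t)\|_{L^2_\bx}=\|\psi_0\|_{L^2_\bx}$.

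The main obstacle is the nonlinearity entering through the potential: one must confirm that $V_{\mathrm{int}}(t,\cdot)=K*\rho^\hbar(t)$ is genuinely a bounded real multiplication operator with locally uniform-in-$t$ operator norm and the required strong continuity in $t$, which is exactly where \textnormal{\textbf{(A1)}} and the a priori finiteness of $t\mapsto\|\psi^\hbar(t)\|_{L^2_\bx}$ are used; everything downstream is soft functional analysis. Should one wish to avoid any boundedness hypothesis on $K$ beyond \textnormal{\textbf{(A1)}}, an alternative route is to prove the identity first for $\psi_0\in H^1_\bx$ — where the mild solution is a strong solution, so that $\tfrac{\dd}{\dd t}\|\psi^\hbar(t)\|_{L^2_\bx}^2=\tfrac{2}{\hbar}\Im\langle H(t)\psi^\hbar,\psi^\hbar\rangle_{L^2_\bx}=0$ follows directly from self-adjointness of $H_0$ on $H^1_\bx$ and the reality of $V$ — and then to pass to general $\psi_0\in L^2_\bx$ by density, invoking the standard $L^2$-stability of the mild flow.
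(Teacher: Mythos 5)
Your proof is correct, and while the algebraic core — self-adjointness of $H_0$ together with reality of $V$ — is the same as in the paper's argument, the interaction-picture route you take is a genuinely more careful formulation. The paper differentiates $\|\psi^\hbar(t)\|_{L^2_\bx}^2$ directly and evaluates $\langle H_0\psi^\hbar,\psi^\hbar\rangle$; for this to be literal rather than formal one needs $\psi^\hbar(t)\in H^1_\bx$ for all $t$, which is not immediate for a mild solution and whose rigorous proof (Lemma~\ref{lem:bdd-H1}) itself quotes the conservation law being proved, so one must break the loop by a density or regularization argument. Your conjugation by $U_0(t)=\ee^{-\ii tH_0/\hbar}$ eliminates the unbounded part of the generator: $\tilde\psi(t)=U_0(-t)\psi^\hbar(t)$ solves an $L^2_\bx$-valued ODE with a bounded, strongly continuous, self-adjoint family $\tilde V(t)$, so the energy identity holds at the level of $C^1([0,T];L^2_\bx)$ curves, and unitarity of $U_0$ transfers the conclusion back to $\psi^\hbar$. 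The extra work, which you correctly locate, is checking that $V(t)$ is a uniformly bounded, strongly continuous multiplication operator on $L^2_\bx$; note that \textnormal{\textbf{(A1)}} as written controls $\||\bk'|\hat K(\bk')\|_{L^1_{\bk'}}$ but does not by itself force $K\in L^\infty_\bx$, so strictly speaking this boundedness should be treated as part of the well-posedness hypotheses the lemma quietly assumes — the paper's direct proof has exactly the same tacit hypothesis, so this is not a defect of your argument relative to theirs. The density alternative in your closing sentence is the other standard way to close the same gap and is likely what the authors implicitly have in mind.
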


\begin{proof}
	Let us rewrite \eqref{eq:massive-dirac-scaled} (or \eqref{eq:massless-dirac-scaled}) as $\ii\hbar\,\partial_t^\hbar\psi=(H_0+V)\psi^\hbar$ with
	$H_0:= -\ii \hbar c\,\balpha\cdot\nabla + m c^2\,\gamma^0$ (or $H_0:= -\ii \hbar c\,\balpha\cdot\nabla$, respectively), which is self-adjoint, and $V(t,\bx)$ is a real-valued multiplicative operator.
	Therefore, $H_0+V$ is Hermitian for each fixed $t$.
	
	Then,
	\[
	\frac{\mathrm{d}}{\mathrm{d} t}\|\psi^\hbar(t)\|_{L^2_\bx}^2
	=\frac{2}{\hbar}\,\Im\langle (H_0+V)\psi^\hbar,\psi^\hbar\rangle
	=\frac{2}{\hbar}\Big(\Im\langle H_0\psi^\hbar,\psi^\hbar\rangle+\Im\langle V\psi^\hbar,\psi^\hbar\rangle\Big)=0,
	\]
	since $\langle H_0\psi^\hbar,\psi^\hbar\rangle\in\R$ by self-adjointness of $H_0$, and
	$\langle V\psi^\hbar,\psi^\hbar\rangle=\int_{\R^2} V|\psi^\hbar|^2\,\dd \bx\in\R$ as $V$ is real-valued.
\end{proof}

\begin{lemma}[Bound of $H^1_\bx$-norm]\label{lem:bdd-H1}
	Under the assumptions \textnormal{\textbf{(A1)}} and \textnormal{\textbf{(A2)}},
	let $\psi_0\in H^1_\bx$,
	\[
	V(t,\bx)=V_\Gamma\left( \frac{\bx}{\ve} \right) + V_\mathrm{int}(t,\bx),
	\qquad
	V_\mathrm{int}(t,\bx) = \big(K*|\psi^\hbar(t,\cdot)|^2\big)(\bx),
	\]
	
	then, the solution $\psi^\hbar\in C\big([0,T];H^1_\bx\big) \cap C^1\big([0,T];L^2_\bx\big)$ to
	\[
	\ii\hbar\partial_t\psi^\hbar=H(t)\psi^\hbar \quad  \text{with} \quad 
	H(t)=-\ii\hbar c\,\balpha\!\cdot\!\nabla+mc^2\gamma^0+V(t,\bx)
	\]
	satisfies, for all $t\in[0,T]$,
	\begin{equation}\label{eq:H1-growth-main}
		\|\psi^\hbar(t)\|_{H^1_\bx}
		\;\le\;
		\|\psi_0\|_{H^1_\bx}
		+ C\,t\left(\frac{1}{\ve}\|\nabla V_\Gamma\|_{L^\infty_\bx}
		+\|\nabla K\|_{L^\infty_\bx} \,\|\psi_0\|_{L^2_\bx}^2\right)\,\|\psi_0\|_{L^2_\bx},
	\end{equation}
	where the constant $C>0$ depends only on $c,m$ but not on $\ve$ and $\hbar$.
\end{lemma}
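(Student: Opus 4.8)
The plan is to derive an \emph{a priori} bound at the level of the spatially differentiated equation and to combine it with the $L^2_\bx$-conservation of Lemma~\ref{lem:L2conservation}. Since the solution is assumed to lie in $C\big([0,T];H^1_\bx\big)\cap C^1\big([0,T];L^2_\bx\big)$, it suffices to justify the energy identity below; the routine way to make it rigorous is to run the computation on smooth approximations of $\psi_0$ (equivalently, on a Friedrichs mollification of $\psi^\hbar$) and pass to the limit using continuity of the flow in $H^1_\bx$. First I would apply $\hbar\partial_{x_j}$ to $\ii\hbar\partial_t\psi^\hbar=H(t)\psi^\hbar$. Because the free Dirac operator $H_0:=-\ii\hbar c\,\balpha\!\cdot\!\nabla+mc^2\gamma^0$ has constant coefficients it commutes with $\partial_{x_j}$, and the multiplication operator $V(t,\bx)$ produces exactly one new term, so that $u_j:=\hbar\partial_{x_j}\psi^\hbar$ solves
\[
\ii\hbar\,\partial_t u_j=\big(H_0+V(t,\cdot)\big)u_j+\hbar\,(\partial_{x_j}V)(t,\cdot)\,\psi^\hbar .
\]

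Next I would perform the $L^2_\bx$ energy estimate on $u_j$. The operator $H_0$ is self-adjoint on $L^2(\R^2;\C^2)$ and $V(t,\cdot)$ is real-valued, hence $H_0+V(t,\cdot)$ is self-adjoint; therefore $\langle (H_0+V)u_j,u_j\rangle\in\R$ and this term drops out of $\tfrac{\dd}{\dd t}\|u_j\|_{L^2_\bx}^2=\tfrac{2}{\hbar}\,\mathrm{Im}\langle (H_0+V)u_j+\hbar(\partial_{x_j}V)\psi^\hbar,\,u_j\rangle$. What survives is controlled by Cauchy--Schwarz, $\tfrac{\dd}{\dd t}\|u_j(t)\|_{L^2_\bx}\le \|(\partial_{x_j}V)(t,\cdot)\,\psi^\hbar(t)\|_{L^2_\bx}$, and summing over $j=1,2$ gives $\tfrac{\dd}{\dd t}\|\hbar\nabla_\bx\psi^\hbar(t)\|_{L^2_\bx}\lesssim \|(\nabla_\bx V)(t,\cdot)\,\psi^\hbar(t)\|_{L^2_\bx}$. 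The factor $\hbar$ carried by the semiclassical derivative cancels the $1/\hbar$ coming from the Schr\"odinger-type time derivative, which is what makes the bound $\hbar$-independent.

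The third step is to estimate the source. Splitting $V(t,\bx)=V_\Gamma(\bx/\ve)+(K*|\psi^\hbar(t,\cdot)|^2)(\bx)$ and differentiating, $\nabla_\bx V(t,\bx)=\tfrac1\ve(\nabla V_\Gamma)(\bx/\ve)+\big((\nabla K)*|\psi^\hbar(t,\cdot)|^2\big)(\bx)$. The first piece is bounded in $L^\infty_\bx$ by $\tfrac1\ve\|\nabla V_\Gamma\|_{L^\infty_\bx}$, and Young's inequality gives $\|(\nabla K)*|\psi^\hbar(t,\cdot)|^2\|_{L^\infty_\bx}\le\|\nabla K\|_{L^\infty_\bx}\,\|\psi^\hbar(t)\|_{L^2_\bx}^2$; here $\nabla K\in L^\infty_\bx$ follows from $K\in A^1(\R^2)$ in \textnormal{\textbf{(A1)}} via $\|\nabla K\|_{L^\infty_\bx}\lesssim\||\bk'|\widehat K(\bk')\|_{L^1_{\bk'}}$, and likewise $\nabla V_\Gamma\in L^\infty_\bx$ from $V_\Gamma\in A^2(\Gamma)$. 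Crucially, Lemma~\ref{lem:L2conservation} freezes $\|\psi^\hbar(t)\|_{L^2_\bx}=\|\psi_0\|_{L^2_\bx}$ for all $t$, so the Hartree feedback does not grow in time and
\[
\|(\nabla_\bx V)(t,\cdot)\,\psi^\hbar(t)\|_{L^2_\bx}\ \le\ \Big(\tfrac1\ve\|\nabla V_\Gamma\|_{L^\infty_\bx}+\|\nabla K\|_{L^\infty_\bx}\|\psi_0\|_{L^2_\bx}^2\Big)\|\psi_0\|_{L^2_\bx},
\]
uniformly in $t\in[0,T]$ and $\hbar\in(0,1]$.

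Finally I would integrate in time on $[0,t]$ to bound $\|\hbar\nabla_\bx\psi^\hbar(t)\|_{L^2_\bx}$, then add the $L^2_\bx$-conservation identity for the zeroth-order part; combining the two (through the $\hbar$-independent equivalence $\|\cdot\|_{H^1_\bx}\simeq\|\cdot\|_{L^2_\bx}+\|\hbar\nabla_\bx\cdot\|_{L^2_\bx}$, with a dimensional constant that at worst depends on $c$ and $m$ but not on $\hbar$ or $\ve$) yields exactly \eqref{eq:H1-growth-main}. The computation is elementary; the only genuine subtleties are (i) making the energy identity rigorous at the available $C([0,T];H^1_\bx)\cap C^1([0,T];L^2_\bx)$ regularity, the standard mollify--estimate--pass-to-the-limit argument, and (ii) checking that the Hartree source does not destroy the $\hbar$- and time-uniformity, which is precisely where $L^2_\bx$-conservation is used to decouple the self-consistent feedback. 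I expect (i) to be the most technical point, though it is routine.
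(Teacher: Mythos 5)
Your proposal is correct and is essentially the paper's proof in componentwise form: applying $\hbar\partial_{x_j}$ to the equation and estimating $\|u_j\|_{L^2_\bx}$ is, up to trivial repackaging, the same as the paper's computation of $\tfrac{\dd}{\dd t}\|G\psi^\hbar\|_{L^2_\bx}$ with $G:=H_0+\ii$, because $\|G\psi\|_{L^2_\bx}^2=\hbar^2c^2\|\nabla\psi\|_{L^2_\bx}^2+(m^2c^4+1)\|\psi\|_{L^2_\bx}^2$ and the commutator $[H_0,V]=-\ii\hbar c\,\balpha\!\cdot\!\nabla V$ is precisely your source term $\hbar(\partial_{x_j}V)\psi^\hbar$. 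The remaining steps (self-adjointness to kill the main term, Young's inequality for $\nabla(K*|\psi^\hbar|^2)$, and $L^2_\bx$-conservation to freeze the Hartree density) coincide with the paper's; you also make explicit the same graph-norm/$H^1_\bx$ equivalence that the paper invokes implicitly, which is where both arguments implicitly interpret $H^1_\bx$ as the semiclassical ($\hbar$-weighted) Sobolev norm.
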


\begin{proof}
	Let 
	\[
	G:=H_0+\ii \quad \text{with} \quad  H_0 =-\ii\hbar c\,\balpha\!\cdot\!\nabla+mc^2\gamma^0.
	\]
	The graph norm $\|G\cdot\|_{L^2_\bx}$ 
	is equivalent to $\|\cdot\|_{H^1_\bx}$ uniformly in $\hbar$.
	Denoting $E_1(t):=\|G\psi^\hbar(t)\|_{L^2_\bx}^2$ and using $\partial_t\psi^\hbar=-(\ii/\hbar)H(t)\psi^\hbar$,
	\[
	\frac{\dd}{\dd t}E_1(t)
	=\frac{2}{\hbar}\Im\langle G\psi^\hbar,[G,V]\psi^\hbar\rangle,
	\qquad
	[G,V]=[H_0,V]=-\,\ii\hbar c\,\balpha\!\cdot\!\nabla V,
	\]
	then, we have $\|[G,V]\|_{\mathcal B(L^2)}\le \hbar c\,\|\nabla V\|_{L^\infty_\bx}$ and
	\[
	\frac{\dd}{\dd t}E_1^{\frac{1}{2}}(t)\le c\,\|\nabla V(t)\|_{L^\infty_\bx}\,\|\psi_0\|_{L^2_\bx}.
	\]
	By Young’s inequality and mass conservation,
	\[
	\|\nablax V(t)\|_{L^\infty_\bx}
	\le \frac{1}{\ve}\|\nablax V_\Gamma\|_{L^\infty_\bx}
	+ \|\nablax K \|_{L^\infty_\bx}\,\||\psi^\hbar|^2\|_{L^1_\bx}
	\le \frac{1}{\ve}\|\nablax V_\Gamma\|_{L^\infty_\bx}
	+ \|\nablax K\|_{L^\infty_\bx}\,\|\psi_0\|_{L^2_\bx}^2.
	\]
	Integrating in time and using the equivalent relation $\|G\psi^\hbar\|_{L^2_\bx} \simeq \|\psi^\hbar\|_{H^1_\bx}$ yields \eqref{eq:H1-growth-main}.
\end{proof}

\begin{lemma}[Property of $\Pi_{\pm}$]\label{lem:proj-derivative}
	Let $\Pi_{\pm}(\bk)$ be the spectral projectors defined in~\eqref{eq:proj-general}.
	Then, the following identities hold:
	\begin{equation}\label{eq:proj-derivative-identities}
		\Pi_\pm(\nabla_{\bk}\Pi_\pm)\Pi_\pm = 0,
		\qquad
		\Pi_\pm(\nabla_{\bk}\Pi_\mp)\Pi_\pm = 0.
	\end{equation}
\end{lemma}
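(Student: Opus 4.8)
The plan is to prove both identities purely from the algebraic relations already recorded in \eqref{eq:proj-basic}, namely $\Pi_\pm^2(\bk)=\Pi_\pm(\bk)$ and $\Pi_+(\bk)+\Pi_-(\bk)=\Id$, together with the smoothness of $\bk\mapsto\Pi_\pm(\bk)$ (on all of $\R^2_\bk$ in the massive case, and on $\R^2_\bk\setminus\{\bk_*\}$ in the massless case, as in Lemma \ref{lem:grad-Pi}). No use of the explicit matrix form of $\Pi_\pm$ is needed, which makes the argument short and representation-independent.

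First I would differentiate the idempotency relation $\Pi_\pm^2=\Pi_\pm$ by the Leibniz rule to obtain the componentwise identity $(\nablak\Pi_\pm)\,\Pi_\pm+\Pi_\pm\,(\nablak\Pi_\pm)=\nablak\Pi_\pm$ for each partial derivative. Multiplying this on the left by $\Pi_\pm$ and on the right by $\Pi_\pm$, and using $\Pi_\pm^2=\Pi_\pm$ on the two terms surviving on the left-hand side, gives $2\,\Pi_\pm(\nablak\Pi_\pm)\Pi_\pm=\Pi_\pm(\nablak\Pi_\pm)\Pi_\pm$, hence $\Pi_\pm(\nablak\Pi_\pm)\Pi_\pm=0$, which is the first identity in \eqref{eq:proj-derivative-identities}. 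For the second identity, differentiating the completeness relation $\Pi_+(\bk)+\Pi_-(\bk)=\Id$ yields $\nablak\Pi_+=-\nablak\Pi_-$; substituting this into $\Pi_\pm(\nablak\Pi_\mp)\Pi_\pm$ reduces it to $-\Pi_\pm(\nablak\Pi_\pm)\Pi_\pm$, which vanishes by the first part.

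There is essentially no obstacle: once the two algebraic relations are isolated, the proof is two lines per identity. The only point deserving a word of care is the domain of validity — in the massless case the identities must be read pointwise on the cutoff region $\Omega_\kappa=\{|\bk-\bk_*|\ge\kappa\}$, since $\Pi_\pm$ is not differentiable at the conical crossing $\bk_*$ (cf. the singular bound \eqref{eq:gradPi-massless}); in the massive case the spectral gap $2mc^2$ makes the identities global. One could alternatively verify everything by a direct computation from $\Pi_\pm=\tfrac12(\Id\pm H_m(\bk)/E_m(\bk))$ using the Pauli algebra and $H_m^2=E_m^2\,\Id$, but the projector-calculus argument above is cleaner and avoids any case distinction between the massive and massless symbols.
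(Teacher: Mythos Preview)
Your proposal is correct and follows essentially the same approach as the paper: differentiate the idempotency $\Pi_\pm^2=\Pi_\pm$, sandwich by $\Pi_\pm$ to obtain $2\,\Pi_\pm(\nablak\Pi_\pm)\Pi_\pm=\Pi_\pm(\nablak\Pi_\pm)\Pi_\pm$, and then use $\nablak\Pi_\mp=-\nablak\Pi_\pm$ from $\Pi_++\Pi_-=\Id$ for the second identity. Your additional remarks on the domain of validity in the massless case and the alternative direct computation via the explicit form of $\Pi_\pm$ are reasonable side comments but go slightly beyond what the paper records.
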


\begin{proof}
	Let $P(\bk)$ be any smooth family of orthogonal projections, in particular $P=\Pi_\pm$.
	Since $P^2=P$, differentiating with respect to $k_j$ gives
	\[
	\partial_{k_j}(P^2)=\partial_{k_j}P
	\quad\Longrightarrow\quad
	(\partial_{k_j}P)P+P(\partial_{k_j}P)=\partial_{k_j}P.
	\]
	Multiplying this relation on both sides by $P$ and using $P^2=P$ yields
	\[
	P(\partial_{k_j}P)P + P(\partial_{k_j}P)P = P(\partial_{k_j}P)P,
	\]
	hence, $P(\partial_{k_j}P)P = 0$.
	Since this holds for each component $k_j$, it follows that
	\[
	P(\nabla_{\bk}P)P = 0.
	\]
	Applying this to $P=\Pi_\pm$ yields the first identity in \eqref{eq:proj-derivative-identities}.
	
	For the second identity, note that $\Pi_- = \Id - \Pi_+$, we have
	\[
	\nabla_{\bk}\Pi_- = -\,\nabla_{\bk}\Pi_+,
	\qquad
	\Pi_\pm(\nabla_{\bk}\Pi_\mp)\Pi_\pm
	= -\,\Pi_\pm(\nabla_{\bk}\Pi_\pm)\Pi_\pm = 0.
	\]
	This completes the proof of \eqref{eq:proj-derivative-identities}.
\end{proof}

\section{Calculation of \texorpdfstring{Wigner transform $\cW^\hbar$}{Wigner transform} in component}
\label{Appendix:Componentwise}

This section is to provide a detailed calculation to obtain \eqref{eq:Wigner} in the component form. Note that, throughout this appendix, we use $\psi^\dagger$ (and the notation
$\overline\psi$) for the usual Hermitian transpose of the spinor,
so that $\overline\psi_j = \overline{\psi_j}$ in all component sense.
In particular, the Wigner matrix $\cW^\hbar$ is the Weyl transform
of the density matrix $\psi^\hbar (\psi^\hbar)^\dagger$ on
$L^2(\R^2;\C^2)$, and \emph{not} of the Dirac adjoint
$\bar\psi^\hbar := (\psi^\hbar)^\dagger \gamma^0$.

Let us write $\psi=(\psi_1,\psi_2)^{\!\top}$ and define the shifted fields
\[
\psi_{\pm,j}^\hbar(t,\bx,\by):=\psi_j\Bigl(t,\bx\pm\tfrac{\hbar}{2}\by\Bigr),\qquad
\overline\psi_{\pm,j}^\hbar(t,\bx,\by):=\psi^\dagger_j\!\Bigl(t,\bx\mp\tfrac{\hbar}{2}\by\Bigr).
\]
With $\gamma^0=\mathrm{diag}(1,-1)$, each component of the Wigner matrix are
\[
W_{11}^\hbar=\frac{1}{(2\pi)^2}\!\int_{\R^2} \ee^{-\ii\bk\cdot\by}\,\psi_{+,1}^\hbar\,\ol\psi_{-,1}^\hbar\,\dd\by,\quad
W_{12}^\hbar=\frac{1}{(2\pi)^2}\!\int_{\R^2} \ee^{-\ii\bk\cdot\by}\,\psi_{+,1}^\hbar\,\ol\psi_{-,2}^\hbar \,\dd\by,
\]
\[
W_{21}^\hbar=\frac{1}{(2\pi)^2}\!\int_{\R^2} \ee^{-\ii\bk\cdot\by}\,\psi_{+,2}^\hbar\,\ol\psi_{-,1}^\hbar \,\dd\by,\quad
W_{22}^\hbar=\frac{1}{(2\pi)^2}\!\int_{\R^2} \ee^{-\ii\bk\cdot\by}\,\psi_{+,2}^\hbar\,\ol\psi_{-,2}^\hbar \,\dd\by .
\]
Furthermore, considering the Dirac equation in the component sense, 
\[
\ii\hbar\partial_t\psi^\hbar = \left(-\ii\hbar c\,\balpha\!\cdot\!\nabla + mc^2\gamma^0 + V\right)\psi^\hbar,
\]
with $V = V_\Gamma + V_{\mathrm{int}}$,
we have
\begin{align*}
	\partial_t\psi_1^\hbar&=-c(\partial_{x_1}-\ii\partial_{x_2})\psi_2^\hbar -\frac{\ii}{\hbar}(mc^2+V)\,\psi_1^\hbar, \quad
	\partial_t\psi_2^\hbar=-c(\partial_{x_1}+\ii\partial_{x_2})\psi_1^\hbar +\frac{\ii}{\hbar}(mc^2 - V)\,\psi_2^\hbar,\\
	\partial_t\ol\psi_1^\hbar&=-c(\partial_{x_1}+\ii\partial_{x_2})\ol\psi_2^\hbar +\frac{\ii}{\hbar}(mc^2+V)\,\ol\psi_1^\hbar,\quad 
	\partial_t\ol\psi_2^\hbar=-c(\partial_{x_1}-\ii\partial_{x_2})\ol\psi_1 -\frac{\ii}{\hbar}(mc^2 - V)\,\ol\psi_2.
\end{align*}
All derivatives on the right are evaluated at the shifted points. And the following key chain-rule identities for the shifts holds:
\[
\partial_{x_j}F\!\Bigl(\bx\pm\tfrac{\hbar}{2}\by\Bigr)=\pm\frac{2}{\hbar}\,\partial_{y_j}F\!\Bigl(\bx\pm\tfrac{\hbar}{2}\by\Bigr).
\tag{CR}
\]

Let us take the $(1,1)$-entry as an example and proceed as in the massless case. The only new term comes from the mass:
\[
\partial_t W_{11}^\hbar
=\frac{1}{(2\pi)^2}\!\int_{\R^2} \ee^{-\ii\bk\cdot\by}
\left[
(\partial_t\psi_{+,1}^\hbar)\,\ol\psi_{-,1}^\hbar
+\psi_{+,1}^\hbar\,(\partial_t\ol\psi_{-,1}^\hbar)
\right]\,\dd\by.
\]
Inserting the componentwise Dirac equations into the definition of $W_{11}^\hbar$ gives
\begin{align*}
	\partial_t W_{11}^\hbar
	&=\frac{1}{(2\pi)^2}\!\int_{\R^2} \ee^{-\ii\bk\cdot\by}\Bigl[
	\bigl(-c(\partial_{x_1}-\ii\partial_{x_2})\psi_{+,2}^\hbar-\tfrac{\ii}{\hbar}(mc^2+V_+)\psi_{+,1}^\hbar\bigr)\,\ol\psi_{-,1}^\hbar \\
	&\hspace{4.5cm}+\psi_{+,1}^\hbar\bigl(-c(\partial_{x_1}+\ii\partial_{x_2})\ol\psi_{-,2}^\hbar + \tfrac{\ii}{\hbar}(mc^2+V_-)\ol\psi_{-,1}^\hbar\bigr)
	\Bigr]\dd\by
\end{align*}
with $V_\pm:= V \left(\bx\pm\tfrac{\hbar}{2}\by \right)$.  
Separating the terms yields exactly the same differential part as in the massless case, together with the potential difference:
\[
\partial_t W_{11}^\hbar
=\text{[massless terms]}
- \frac{\ii}{\hbar}\frac{1}{(2\pi)^2}\int_{\R^2} \ee^{-\ii\bk\cdot\by}(V_+ - V_-)\,\psi_{+,1}^\hbar\,\ol\psi_{-,1}^\hbar\,\dd\by.
\]
The mass contributions $\propto mc^2$ cancel out identically. Indeed,
\[
-\frac{\ii}{\hbar}mc^2\bigl(\psi_{+,1}\,\ol\psi_{-,1}-\psi_{+,1}\,\ol\psi_{-,1}\bigr)=0,
\]
so no explicit mass term remains in $\partial_t W_{11}$ except through the commutator with
$H_m(\bk)=c\,\boldsymbol{\alpha}\cdot\bk+mc^2\gamma^0$.

Thus, the entire calculation proceeds as in the massless case, and the mass term contributes only via a \emph{diagonal commutator}:
\[
H_m(\bk) = c\,\boldsymbol{\alpha}\cdot\bk + mc^2\,\gamma^0.
\]
Other $i,j\in\{1,2\}$ are very similar to this calculation.
Then, the final equation becomes
\[
\partial_t W_{ij}^\hbar + c\,(\balpha\!\cdot\!\nabla_{\bx}W^\hbar)_{ij}
- \big(\mathcal Q[V,W^\hbar]\big)_{ij}
+ \frac{1}{\ii\hbar}[H_m(\bk),W^\hbar]_{ij} = 0,
\]
where the Hamiltonian now includes the mass:
\[
H_m(\bk) := c\,\boldsymbol{\alpha}\cdot\bk + mc^2\,\gamma^0.
\]

Therefore, the structure of the Wigner equation remains the same:
\begin{equation}\label{eq:Wigner-massive}
	\partial_t \cW^\hbar
	+ c\,\boldsymbol{\alpha}\cdot\nabla_{\bx} \cW^\hbar
	- \mathcal Q[V, \cW^\hbar]
	+ \frac{1}{\ii\hbar}[H_m(\bk),\,\cW^\hbar] = 0,
	\qquad H_m(\bk)=c\,\boldsymbol{\alpha}\cdot\bk + mc^2\gamma^0.
\end{equation}

This equation reduces to the massless case as $m \to 0$. The expressions of the band projector will also change accordingly, and eigenvalues of $H_0(\bk)=c\,\boldsymbol{\alpha}\cdot\bk$ become $\pm E_0(\bk) := \pm \,c\,|\bk|$.

\bibliographystyle{abbrv}
\bibliography{refs}

\end{document}